\documentclass[
  aps,
  prl,
  reprint,
  superscriptaddress,
  amsfonts,amssymb,amsmath
]{revtex4-2}
\usepackage{newtxtext}
\usepackage{bm}
\usepackage{latexsym}
\usepackage{amsthm}
\usepackage{mathtools}
\usepackage{xcolor}
\usepackage{physics}
\usepackage{graphicx}
\usepackage{hyperref}
\hypersetup{
  setpagesize  = false,
  colorlinks   = true,
  urlcolor     = blue,
  linkcolor    = blue,
  citecolor    = blue
}

\theoremstyle{plain}
\newtheorem{theorem}{Theorem}[section]
\newtheorem{lemma}[theorem]{Lemma}
\newtheorem{corollary}[theorem]{Corollary}
\newtheorem{proposition}[theorem]{Proposition}
\renewcommand{\thetheorem}{\arabic{theorem}}

\newcommand{\beginendmatt}{
  \twocolumngrid
  \setcounter{equation}{0}
  \renewcommand{\theequation}{A\arabic{equation}}%
  \renewcommand{\theHequation}{A.\arabic{equation}}%
  \setcounter{figure}{0}
  \renewcommand{\thefigure}{A\arabic{figure}}%
  \renewcommand{\theHfigure}{A.\arabic{figure}}%
  \setcounter{table}{0}
  \renewcommand{\thetable}{A\arabic{table}}%
  \renewcommand{\theHtable}{A.\arabic{table}}%
  \setcounter{section}{0}
  \renewcommand{\thesection}{A\Roman{section}}%
  \renewcommand{\theHsection}{A.\Roman{section}}%
}

\newcommand{\beginsupplement}{
  \setcounter{secnumdepth}{2}
  \setcounter{equation}{0}
  \renewcommand{\theequation}{S\arabic{equation}}%
  \renewcommand{\theHequation}{S.\arabic{equation}}%
  \setcounter{figure}{0}
  \renewcommand{\thefigure}{S\arabic{figure}}%
  \renewcommand{\theHfigure}{S.\arabic{figure}}%
  \setcounter{table}{0}
  \renewcommand{\thetable}{S\arabic{table}}%
  \renewcommand{\theHtable}{S.\arabic{table}}%
  \setcounter{section}{0}
  \renewcommand{\thesection}{S\Roman{section}}%
  \renewcommand{\theHsection}{S.\Roman{section}}%
  \renewcommand{\thetheorem}{\thesection.\arabic{theorem}}%
}

\begin{document}

%%% title %%%
\title{Universal Bound and Phase Transition in Many-Body Fermionic Non-Gaussianity}

%%% authors %%%
\author{Masahiro Hoshino}
% \email{hoshino-masahiro921@g.ecc.u-tokyo.ac.jp}
\affiliation{Department of Physics, The University of Tokyo, 7-3-1 Hongo, Bunkyo-ku, Tokyo 113-0033, Japan}

\author{Ryota Matsuda}
% \email{r-matsuda918@g.ecc.u-tokyo.ac.jp}
\affiliation{Department of Physics, The University of Tokyo, 7-3-1 Hongo, Bunkyo-ku, Tokyo 113-0033, Japan}

\author{Yuto Ashida}
% \email{ashida@phys.s.u-tokyo.ac.jp}
\affiliation{Department of Physics, The University of Tokyo, 7-3-1 Hongo, Bunkyo-ku, Tokyo 113-0033, Japan}
\affiliation{Institute for Physics of Intelligence, The University of Tokyo, 7-3-1 Hongo, Bunkyo-ku, Tokyo 113-0033, Japan}

%%% date %%% 
% \date{\today}

%%% abstract %%%
\begin{abstract}
  Fermionic non-Gaussianity is a resource for universal quantum computation that can be generated by interactions in quantum many-body systems.
  Using the magic R\'enyi entropy (MRE) as a measure of non-Gaussianity, we derive its universal upper bound and Haar mean, proving that typical Haar-random states attain the maximal MRE density of $\ln(4/3)$ per Majorana in the large-system limit.
  We also show that the MRE can differ extensively between states with identical covariance matrices.
  To investigate how non-Gaussianity grows toward this maximal density, we evolve a Gaussian state in imaginary and real time under the Sachdev-Ye-Kitaev Hamiltonian.
  By varying the imaginary- and real-time durations, we identify a first-order transition marked by a kink in the MRE density and spontaneous breaking of the permutation symmetry among the four copies used to evaluate the MRE.
  This transition represents a qualitative change in the non-Gaussianity of the state that is not reflected in the thermal free energy.
\end{abstract}

\maketitle

%%%========== main text ==========%%%

%%%========== Introduction ==========%%%
Fermionic non-Gaussianity, also known as fermionic magic, is a resource for quantum computation that can be generated by interactions in many-body systems.
Fermionic Gaussian states and Gaussian operations can be simulated efficiently on a classical computer by tracking the covariance matrix~\cite{terhal2002classical,bravyi2005lagrangian,dias2024classical,reardonsmith2024improved}.
Universal quantum computation can be achieved by supplementing Gaussian operations with non-Gaussian input states, for which any pure non-Gaussian state with definite fermion parity is sufficient~\cite{hebenstreit2019all}.
This parallels the role of nonstabilizerness, or magic, in qubit systems, where suitable magic states supplement classically simulable Clifford operations to enable universal quantum computation~\cite{aaronson2004improved,bravyi2005universal}.
Because non-Gaussianity cannot be generated by quadratic Hamiltonians, its growth in many-body dynamics directly reflects the effects of interactions~\cite{lumia2024measurement,paviglianiti2026emergence,falcao2026fermionic,aditya2026growth}.
Non-Gaussianity is also tied to electronic correlations because higher-order correlation functions in non-Gaussian states cannot all be reduced to two-point functions by Wick's theorem~\cite{gottlieb2007properties,pachos2022quantifying}.
Strongly correlated electron systems are therefore a natural setting in which to study how much non-Gaussianity can be generated by interactions.

Advances in measures of quantum computational resources and in methods for evaluating them have enabled quantitative studies of these resources in many-body systems~\cite{haug2023quantifying,lami2023nonstabilizerness,tarabunga2024nonstabilizerness,haug2024efficient,liu2025nonequilibrium,ding2025evaluating,paviglianiti2025estimating}.
For nonstabilizerness, studies of the stabilizer R\'enyi entropy (SRE)~\cite{leone2022stabilizer} have revealed its typical values and concentration in Haar-random states~\cite{liu2022many,turkeshi2025pauli,szombathy2025asymptotically}, universal behavior in quantum critical systems~\cite{hoshino2026stabilizer,hoshino2026stabilizera}, transitions in gauge theories~\cite{tarabunga2023many}, and its growth and saturation under time evolution~\cite{haug2025probing,turkeshi2025magic,tirrito2025anticoncentration}.
For fermionic non-Gaussianity, measures based on the covariance matrix have been used to study state conversion under Gaussian operations, classical simulation, and the dynamics of non-Gaussianity in many-body systems~\cite{gigena2015entanglement,gigena2020one,sierant2026fermionic,tarabunga2026computable,leone2026fermionic,turkeshi2026williamson,garciagarcia2026tuning}.
Another approach quantifies non-Gaussianity through higher-order correlations captured by a convolution operation, which mixes two copies of a state using a Gaussian unitary and traces out one of them~\cite{lyu2024fermionic,coffman2025measuring}.
Within this framework, the magic R\'enyi entropy (MRE) has been proposed and shown to exhibit universal behavior at quantum criticality described by conformal field theory~\cite{matsuda2026quantum}.

Despite these advances, it remains unclear how large the MRE of typical states is and whether its growth under interacting dynamics exhibits collective phenomena such as phase transitions.
Haar-random states serve as a standard reference for the typical properties of quantum states, as in studies of the mean and concentration of entanglement~\cite{page1993average,hayden2006aspects}.
Typical non-Gaussianity in subsystems of Haar-random states has also been characterized using measures based on relative entropy~\cite{ares2026non}.
For the MRE, however, the relation between its typical value and a universal upper bound has not been established.
In particular, it is not known whether typical states attain the maximal MRE per Majorana in the large-system limit.
Establishing this relation would provide a reference against which the non-Gaussianity generated by interactions can be compared.
The Sachdev-Ye-Kitaev (SYK) model provides a useful setting for studying how interactions generate non-Gaussianity, as it is strongly interacting and exhibits quantum chaos while remaining tractable in the limit of a large number of fermions~\cite{sachdev1993gapless,kitaev2015simple,maldacena2016remarks}.
Previous studies have investigated the SRE in this model~\cite{bera2025non,jasser2025stabilizer,russomanno2025nonstabilizerness,bettaque2026magic}, revealing transitions under time evolution~\cite{sun2026connecting} and, in coupled SYK models, transitions that are not detected by thermodynamic quantities~\cite{zhang2026stabilizer}.
These studies raise the questions of how the non-Gaussianity generated in such dynamics compares with its typical value and whether its growth involves a phase transition.

In this Letter, we prove that the MRE of Haar-random states with fixed parity concentrates around its mean and derive a universal upper bound, showing that typical states attain the maximal MRE density $\ln(4/3)$ per Majorana in the large-system limit.
We also construct a family of states whose MRE exceeds the Haar mean and show that the MRE can differ extensively between states with identical covariance matrices.
In the SYK model, our large-$N$ saddle-point analysis predicts a first-order transition in the MRE density as the durations of imaginary- and real-time evolution from a Gaussian state are varied.
This transition corresponds to spontaneous breaking of the permutation symmetry among the four copies used to evaluate the MRE, from $S_4$ to a cyclic subgroup $C_3$, and cannot be captured by the thermal free energy.

%%%========== MRE of Haar-random states ==========%%%
\textit{Non-Gaussianity of Haar-random states.}---We first introduce a convolution-based non-Gaussianity measure, which we call the MRE, and define the setting used throughout this Letter.
We consider $N=2n\geq2$ Majorana operators satisfying $\gamma_i^\dag=\gamma_i$ and $\{\gamma_i,\gamma_j\}=2\delta_{ij}$ for $i,j\in[N]=\{1,2,\ldots,N\}$.
The associated complex fermions are $f_m=(\gamma_{2m-1}-i\gamma_{2m})/2$, with $m\in[n]$, and the Fock space has dimension $D=2^n$.
We restrict attention to normalized pure states $\rho=\ketbra{\psi}{\psi}$ in the even-parity sector $\mathcal H_+$ of dimension $d=D/2$, i.e., the $+1$ eigenspace of the parity operator $\Pi=i^n\gamma_1\gamma_2\cdots\gamma_N$.
When several copies are involved, we label them by superscripts and take Majorana operators on different copies to anticommute, $\{\gamma_i^{(a)},\gamma_j^{(b)}\}=2\delta_{ij}\delta_{ab}$.
Following Refs.~\cite{lyu2024fermionic,coffman2025measuring}, we define the MRE through the purity of a convolution of two identical copies of the state.
To form the convolution, we mix the copies with the balanced beam splitter $U=\exp[(\pi/8)\sum_i\gamma_i^{(1)}\gamma_i^{(2)}]$ and trace out the second copy, obtaining $\mathcal C(\rho)=\Tr_2[U\rho^{\otimes2}U^\dag]$.
The two output copies remain unentangled, i.e., $\mathcal C(\rho)$ is pure, if and only if $\ket\psi$ is Gaussian.
The MRE is then given by
\begin{equation}\label{eq:convolution-purity}
  M_2(\psi) =-\ln\nu_2(\psi),\quad \nu_2(\psi)=\Tr[\mathcal C(\rho)^2],
\end{equation}
which is nonnegative and invariant under any Gaussian unitary.
It is also additive, $M_2(\psi\otimes\phi)=M_2(\psi)+M_2(\phi)$, and monotone, i.e., it does not increase under deterministic Gaussian protocols that map pure states to pure states~\cite{matsuda2026quantum}.

Let $U_{ab}$ denote the balanced beam splitter on copies $a,b$.
Applying $U_{12}U_{34}$ to four copies and using the swap trick on the retained copies 1 and 3 gives
\begin{equation}
  \nu_2(\psi)=\Tr(\rho^{\otimes4}\mathcal R)
  =\bra\psi^{\otimes4}\mathcal R\ket\psi^{\otimes4}.
  \label{eq:purity-representations}
\end{equation}
Here, $\mathcal R=\Pi_3\Pi_4(U_{12}U_{34})^\dag S_{13}(U_{12}U_{34})\Pi_3\Pi_4$, where $\Pi_a=i^n\gamma_1^{(a)}\gamma_2^{(a)}\cdots\gamma_N^{(a)}$ is the parity of copy $a$ and the fermionic SWAP $S_{ab}$ exchanges the Majorana operators of copies $a$ and $b$ while leaving the Fock vacuum invariant.
Because both retained copies have even parity for identical inputs in $\mathcal H_+$, the ordinary SWAP can be replaced by $S_{13}$, which makes $\mathcal R$ Gaussian.
The input parities also allow conjugation by $\Pi_3\Pi_4$ without changing the expectation value; this makes the action of $\mathcal R$ permutation symmetric.
Equation~\eqref{eq:purity-representations} implies that the first and second moments of the purity are determined by the fourth and eighth Haar moments of $\rho$, respectively.

We compute the mean purity using the moment formula for Haar-random states and determinant formulas for traces of Gaussian unitaries.
Writing $\mathbb E_{\mathrm H}$ for the mean over the complex Haar measure on normalized states in $\mathcal H_+$ and $\bar\nu=\mathbb E_{\mathrm H}\nu_2$, we use $\mathbb E_{\mathrm H}\rho^{\otimes4}=(d)_4^{-1}\sum_{\pi\in S_4}\mathsf P_\pi$, where $(d)_4=d(d+1)(d+2)(d+3)$, $S_4$ is the permutation group of four objects, and $\mathsf P_\pi$ permutes the factors of $\mathcal H_+^{\otimes4}$.
Expanding the even-parity projectors reduces $\operatorname{Tr}(\mathsf P_\pi\mathcal R)$ to determinant sums [see Supplemental Material (SM)], yielding
\begin{equation}
  \bar\nu=\frac{8\cdot3^N+12D^3+16D^2+48D+56}
  {D(D+2)(D+4)(D+6)}.
  \label{eq:haar-mean-purity}
\end{equation}
For large $N$, the numerator is dominated by $8\cdot3^N$, so that $\bar\nu=8(3/4)^N[1+o(1)]$.
Because the logarithm does not commute with the average, obtaining the mean MRE $\overline{M_2}=\mathbb E_{\mathrm H}[-\ln\nu_2]$ also requires bounding the Jensen gap $\overline{M_2}+\ln\bar\nu\geq0$.

Using the eighth Haar moment, we show that this gap vanishes as the system size increases and that the MRE concentrates around its mean.
Specifically, the relative variance is $V_N=\operatorname{Var}_{\mathrm H}(\nu_2)/\bar\nu^2=144(5/9)^N[1+o(1)]$ (see SM for the derivation).
Chebyshev's inequality then gives $\Pr(|\nu_2/\bar\nu-1|>\epsilon)\leq V_N/\epsilon^2$ for any fixed $\epsilon>0$.
Hence $\nu_2/\bar\nu\to1$ in probability, and by continuity of the logarithm, $M_2+\ln\bar\nu\to0$ in probability.
Together with the bound $\nu_2\geq1/d$, the relative variance also bounds the Jensen gap as $0\leq\overline{M_2}+\ln\bar\nu\leq[1+4\ln(d\bar\nu)]V_N=\mathcal{O}(NV_N)$.
Since $NV_N\to0$, the mean MRE has the asymptotic form
\begin{equation}
  \overline{M_2}=N\ln\frac43-\ln8+o(1).
  \label{eq:haar-mean-mre}
\end{equation}
Thus, the MRE density per Majorana approaches $\ln(4/3)$.

%%%========== Upper bound on the MRE ==========%%%
\textit{Universal upper bound on the MRE.}---To assess how close the MRE of Haar-random states is to the largest possible value, we prove the following universal bound.
\begin{theorem}
  For any even $N\geq2$, every normalized pure state $\ket\psi\in\mathcal H_+$ satisfies
  \begin{equation}
    M_2(\psi)\leq N\ln\frac43-\ln\frac{16}{9}.
    \label{eq:mre-upper-bound}
  \end{equation}
\end{theorem}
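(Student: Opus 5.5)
The plan is to prove the equivalent statement $\nu_2(\psi)\geq\frac{16}{9}\left(\frac34\right)^N$ by writing the purity in terms of the Majorana-monomial (Fourier) coefficients of $\rho$ and then minimizing subject to the constraints that purity and parity impose on those coefficients.

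\textbf{Fourier expansion.} Expand
\begin{equation*}
  \rho=D^{-1}\sum_{S\subseteq[N]}c_S\gamma_S ,
\end{equation*}
with real $c_S$ for a Hermitian ordering of $\gamma_S$. Purity of $\rho$ gives $\sum_S c_S^2=D$. Even parity gives $c_S=0$ for odd $|S|$, $c_\emptyset=1$, and $c_{[N]}=\pm1$, the latter because $\Pi\rho=\rho$. The beam splitter $U=\exp[(\pi/8)\sum_i\gamma_i^{(1)}\gamma_i^{(2)}]$ acts on each mode pair as a rotation by $\pi/4$. It therefore maps $\gamma_S^{(1)}\gamma_T^{(2)}$ to a combination of monomials with coefficients $\pm2^{-(|S|+|T|)/2}$.

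\textbf{Purity of the convolution.} Tracing out copy 2 keeps only the terms with no copy-2 operators. I expect the coefficient of $\gamma_S$ in $\mathcal C(\rho)$ to be a sum over decompositions of $S$ of products $c_Ac_B$ with weights $2^{-|S|/2}$ and signs. Computing $\Tr[\mathcal C(\rho)^2]$, or equivalently $\bra\psi^{\otimes4}\mathcal R\ket\psi^{\otimes4}$ via Eq.~\eqref{eq:purity-representations} with $\mathcal R$ expanded in monomials, should give a quartic form. The hope is that it can be written as
\begin{equation*}
  \nu_2=\frac{1}{D^{2}}\sum_{S}2^{-|S|}\Bigl(\sum_{\ldots}c\,c\Bigr)^2 ,
\end{equation*}
a sum of squares with nonnegative weights depending only on $|S|$. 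If this exact identity cannot be obtained directly, I will instead bound $\nu_2$ from below by the contribution of a diagonal subset of terms. For example, I will keep only the squared terms $c_S^4$ together with the products involving the fixed coefficients $c_\emptyset$ and $c_{[N]}$, and check that the discarded terms form a nonnegative square.

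\textbf{Minimization.} Define $p_k=\sum_{|S|=k}c_S^2$ for even $k$. Cauchy--Schwarz within each weight class gives $\sum_{|S|=k}c_S^4\geq p_k^2/\binom Nk$. The lower bound then becomes a convex quadratic in $(p_k)$ with diagonal weights $a_k\propto 2^{-k}/\binom Nk$. It is to be minimized under the linear constraint $\sum_k p_k=D$, with $p_0=p_N=1$ fixed. By Cauchy--Schwarz again, the minimum of $\sum_k a_kp_k^2$ under $\sum_k p_k=P$ is $P^2/\sum_k a_k^{-1}$. This brings in
\begin{equation*}
  \sum_{k\ \mathrm{even}}\binom Nk2^k=\frac{3^N+1}{2}.
\end{equation*}
This sum produces the $(3/4)^N=2^N/(3^N\cdot\ldots)$ scaling once combined with $D^2=2^N$. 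I expect the fixed endpoint terms $k=0,N$ and the parity-halving of the sum to produce exactly the constant $16/9$ after simplification. The consistency check is that equality must be achievable only approximately: the Haar value $8(3/4)^N$ lies above the bound.

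\textbf{Main obstacle.} The hard step is the second one: deriving the exact monomial formula for $\nu_2$ with correct signs, and showing that it is bounded below by a positive-weight sum of squares. Fermionic sign factors in $U\gamma_S^{(1)}\gamma_T^{(2)}U^\dagger$ may produce cross terms of indefinite sign. I would first try to check the formula against the Haar mean in Eq.~\eqref{eq:haar-mean-purity} and against Gaussian states, which must give $\nu_2=1$, to fix the weights before attempting the minimization.
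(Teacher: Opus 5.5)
Your rewriting of the theorem as $\nu_2\ge\tfrac{16}{9}(3/4)^N$ is fine, but the step you flag as the main obstacle is routine, and the minimization step, which you treat as routine, is where the argument fails. Since $U^\dagger\gamma_i^{(1)}U=(\gamma_i^{(1)}+\gamma_i^{(2)})/\sqrt2$, the output coefficients are $\chi_S=\Tr[\mathcal C(\rho)\gamma_S]=2^{-|S|/2}\sum_{A\subseteq S}\pm c_Ac_{S\setminus A}$, and $\nu_2=D^{-1}\sum_S\chi_S^2$. This is automatically a weighted sum of squares.

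\textbf{Why the minimization fails.} The only constraints you impose are $\sum_Sc_S^2=D$, $c_S=0$ for odd $|S|$, $c_\varnothing=1$ and $c_{[N]}=\pm1$. No lower bound on this quartic form that uses only these constraints can give the theorem. To see this, take $4\mid N$. Let $c$ equal $1$ on $\varnothing$, on $[N]$, and on a complement-closed family of $D-2$ subsets of size $N/2$, and $0$ elsewhere. This vector even respects $|c_S|\le1$ and $c_{[N]\setminus S}=\pm c_S$.
\begin{itemize}
\item The only disjoint pairs in the support are those containing $\varnothing$, plus complementary pairs.
\item Hence $\chi_\varnothing=1$, $\chi_S^2\le 4/D$ for each of the $D-2$ sets of size $N/2$, $|\chi_{[N]}|\le 2^{-N/2}D=1$, and every other $\chi_S$ vanishes.
\item So the form is below $6/D$, which is smaller than $\tfrac{16}{9}(3/4)^N$ once $N\ge24$.
\end{itemize}

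Your own arithmetic shows the same problem. The Cauchy--Schwarz value $D^2/\sum_k\binom Nk2^k\approx 2(2/3)^N$ has $3^N$ in the denominator. With the $D^{-1}$ normalization it decays like $(\sqrt2/3)^N$ up to constants, which is below even the trivial bound $\nu_2\ge1/d$. The rate $(3/4)^N$ needs $3^N$ in the numerator, as in $\sum_k\binom Nk2^{-k}=(3/2)^N$ when the weight $p_k$ is spread binomially. There is also no natural $\sum_S w_{|S|}c_S^4$ term for your per-class Cauchy--Schwarz step to act on. The diagonal part of $\chi_S^2$ consists of quadratic terms such as $4c_S^2$ and products $c_A^2c_B^2$ over disjoint $A,B$.

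\textbf{What is missing.} You need the full rank-one condition $\rho^2=\rho$; the relation $\sum_Sc_S^2=D$ is only its trace. One way it enters is through MacWilliams-type equalities between the purities of complementary Majorana subsystems. These forbid concentrating the weight near $|S|=N/2$. Controlling the quartic cross terms under those constraints would be the real work.

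\textbf{How the paper proceeds.} The paper avoids coefficient optimization altogether.
\begin{itemize}
\item It splits $U$ into $N$ commuting rotations $e^{(\pi/8)\gamma_j^{(1)}\gamma_j^{(2)}}$ and writes the two-copy state in parity blocks $E,O$.
\item It proves the exact identity $\nu_{\Lambda\cup\{j\}}-\tfrac34\nu_\Lambda=\tfrac52\|O^\dagger E\|_{\mathrm F}^2+\tfrac14\|EO^\dagger+OE^\dagger\|_{\mathrm F}^2$.
\item The nontrivial input there is $\Tr[(Y-X)Z]=4\Tr(XY)$, obtained from signed copy-permutation symmetries of $\ket\psi^{\otimes4}$. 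This is exactly where purity and identical inputs are used.
\item The exact terminal relation $\nu_{[N]\setminus\{j\}}=\tfrac34\nu_{[N]}$, forced by the definite parity of the full convolution output, supplies the extra factor $\tfrac43$.
\end{itemize}
Together these give $\nu_2\ge\tfrac43(\tfrac34)^{N-1}=\tfrac{16}{9}(\tfrac34)^N$.
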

This bound exceeds the Haar mean in Eq.~\eqref{eq:haar-mean-mre} only by $\ln(9/2)+o(1)$.
Hence the maximal MRE density is $\ln(4/3)$ in the large-$N$ limit, and typical Haar-random states attain it.
Whether the bound is tight at finite $N$ remains open.

We prove the bound by decomposing the beam splitter $U$ into $N$ commuting rotations, one for each Majorana index, and combining a one-step inequality for the purity with an exact identity for the last rotation.
For $\Lambda\subseteq[N]$, we define the partial convolution $U_\Lambda=\prod_{i\in\Lambda}\exp[(\pi/8)\gamma_i^{(1)}\gamma_i^{(2)}]$ and denote by $\nu_\Lambda$ the purity after applying $U_\Lambda$ to $\ket\psi^{\otimes2}$ and tracing out the second copy, so that $\nu_\varnothing=1$ and $\nu_{[N]}=\nu_2(\psi)$.
To analyze the next rotation, labeled by $j\notin\Lambda$, we choose an orthonormal basis $\{\ket{e_\alpha}\}_{\alpha=1}^d$ of $\mathcal H_+$ and set $\ket{o_\alpha}=\gamma_j\ket{e_\alpha}$, which gives an orthonormal basis of the odd-parity sector.
Since the total parity remains even, the two copies always have the same parity, and the partially convolved state can be written as $U_\Lambda\ket\psi^{\otimes2}=\sum_{\alpha,\beta=1}^d E_{\alpha\beta}\ket{e_\alpha}_1\ket{e_\beta}_2+\sum_{\alpha,\beta=1}^d O_{\alpha\beta}\ket{o_\alpha}_1\ket{o_\beta}_2$, where $E$ and $O$ are the corresponding coefficient matrices.
The reduced state is block diagonal with blocks $EE^\dag$ and $OO^\dag$, so $\nu_\Lambda=\Tr[(EE^\dag)^2]+\Tr[(OO^\dag)^2]$.
The rotation labeled by $j$ then acts as $E'=cE-sO$ and $O'=sE+cO$, where $c=\cos(\pi/8)$ and $s=\sin(\pi/8)$.
Using the fact that the two inputs are identical pure states, we show in the SM that
\begin{align}
  \nu_{\Lambda\cup\{j\}}-\frac34\nu_\Lambda
  =\frac52\|O^\dag E\|_{\mathrm F}^2            +\frac14\|EO^\dag+OE^\dag\|_{\mathrm F}^2\geq0,
  \label{eq:recursive-purity-bound}
\end{align}
where $\|A\|_{\mathrm F}^2=\Tr(A^\dag A)$ is the squared Frobenius norm.
A single rotation therefore cannot reduce the purity below $3/4$ of its previous value.
For the last label $j$, both copies have even parity after the final rotation, as noted above, so $O'=0$ and hence $O=-(s/c)E$, which gives $\nu_{[N]\setminus\{j\}}=(c^4+s^4)\nu_{[N]}=(3/4)\nu_{[N]}$.
Starting from $\nu_\varnothing=1$ and applying Eq.~\eqref{eq:recursive-purity-bound} to the first $N-1$ rotations, we obtain
\begin{equation}
  \nu_2(\psi)=\frac43\nu_{[N]\setminus\{j\}}
  \geq\frac43\left(\frac34\right)^{N-1}
  =\left(\frac34\right)^{N-2}.
  \label{eq:terminal-purity-bound}
\end{equation}
Equation~\eqref{eq:mre-upper-bound} follows by taking the negative logarithm.
Analogous Haar means and upper bounds for higher-order MREs, defined through convolutions of three or more copies, are given in the SM.

%%%========== States with large MRE and comparison with measures based on covariance matrices ==========%%%
\textit{Families of non-Gaussian states with identical covariance matrices.}---Typical Haar-random states attain the maximal MRE density in the large-$N$ limit, but they do not maximize the MRE itself: we can deterministically construct a family of states whose MRE exceeds the Haar mean for infinitely many system sizes.
For $N\in8\mathbb N$, let $K$ denote complex conjugation in the occupation basis of the fermions $f_m$.
The antiunitary operator $T=\gamma_1\gamma_3\cdots\gamma_{2n-1}K$ satisfies $T^2=1$ and preserves $\mathcal H_+$.
We call a state a real spinor if $T\ket\psi=\ket\psi$.
\begin{theorem}
  \label{thm:real-spinors-above-haar}
  For infinitely many $N\in8\mathbb N$, there exists a normalized real spinor $\ket{\psi_N}\in\mathcal H_+$ such that $M_2(\psi_N)>\overline{M_2}$.
  Such states can be constructed by a finite deterministic procedure.
\end{theorem}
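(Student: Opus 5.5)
The plan is to replace the complex Haar average with an average over \emph{real} spinors, show that this average has a smaller leading constant, and then extract a single deterministic real spinor that does at least as well as the average. Additivity alone does not suffice. Define the deficit $\delta(\psi)=N\ln\frac43-M_2(\psi)$. Deficits add under tensor products, and Eq.~\eqref{eq:mre-upper-bound} forces $\delta\geq\ln\frac{16}{9}$ for every factor. By Eq.~\eqref{eq:haar-mean-mre}, beating the Haar mean needs total deficit below $\ln8+o(1)$, so a $k$-fold tensor power of a fixed small real spinor fails for large $k$. I therefore need a family whose deficit is bounded directly.

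\textbf{Step 1: real mean purity.} For $N\in8\mathbb N$ we have $T^2=1$ and $T$ preserves $\mathcal H_+$. Hence the fixed points of $T$ form a real form $\mathcal H_+^{\mathbb R}$ of real dimension $d$, with a real orthonormal basis. For $\ket\psi$ uniform on the unit sphere of $\mathcal H_+^{\mathbb R}$, the orthogonal-group moment formula gives
\begin{equation*}
\mathbb E_{\mathrm O}\rho^{\otimes4}=\frac{1}{d(d+2)(d+4)(d+6)}\sum_{\sigma}\mathsf B_\sigma,
\end{equation*}
where the sum runs over the $105$ Brauer diagrams on $4+4$ points. The operators $\mathsf B_\sigma$ combine the $24$ permutations with partial transposes, which here are defined through $T$.

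I will evaluate $\Tr(\mathsf B_\sigma\mathcal R)$ with Eq.~\eqref{eq:purity-representations}. The $24$ permutation terms are the same as in the complex computation that produced Eq.~\eqref{eq:haar-mean-purity}. For the remaining diagrams, $T$ is a Majorana monomial times $K$, and $\mathcal R$ is Gaussian with real structure in the Majorana basis. So each such trace again reduces to a sum of determinants of signed orthogonal $N\times N$ blocks, with $T$ contributing sign flips on the odd Majoranas. The goal of this step is the asymptotic
\begin{equation*}
\bar\nu_{\mathbb R}=c_{\mathbb R}\,(3/4)^N[1+o(1)].
\end{equation*}
The denominator now behaves like $d^4=D^4/16$ instead of $D^4$. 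What I need is that the total coefficient of $3^N$ from all $105$ diagrams, divided by $16$, satisfies $c_{\mathbb R}<8$.

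\textbf{Step 2: from the average to one state.} There is a real spinor with $\nu_2\le\bar\nu_{\mathbb R}$. For such a state,
\begin{equation*}
M_2\ge N\ln\frac43-\ln c_{\mathbb R}+o(1),
\end{equation*}
and this exceeds Eq.~\eqref{eq:haar-mean-mre} for all large $N\in8\mathbb N$ whenever $c_{\mathbb R}<8$. No concentration or Jensen-gap estimate is needed, because only existence below the mean is used.

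To make the construction deterministic, I use that for real $\psi$ the purity $\nu_2(\psi)=\bra\psi^{\otimes4}\mathcal R\ket\psi^{\otimes4}$ is a homogeneous quartic form on $\mathbb R^d$. Its sphere average is therefore reproduced exactly by any finite spherical $4$-design in $\mathbb R^d$. Explicit designs of this kind exist, for example orbits of suitable finite groups or shells of lattices, and are computable for each $d$. At least one design point has $\nu_2\le\bar\nu_{\mathbb R}$, and it can be found by evaluating $\nu_2$ at finitely many points with the determinant formulas. Alternatively, I can start from the real basis and derandomize by conditional expectations over random signs and rotations.

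\textbf{Main obstacle.} The hard part is Step 1: organizing the $81$ non-permutation Brauer diagrams and extracting exactly which of them contribute at order $3^N$, with correct signs from $T$, so as to certify $c_{\mathbb R}<8$. I would first check the count numerically at $N=8$ and $N=16$. I would then handle diagrams in classes under the $S_4$ symmetry of $\mathcal R$, expecting the transpose-type pairings to contribute only $2^N$ or $D^2$ terms. If that expectation holds, the $3^N$ coefficient is the permutation coefficient reweighted by $1/16$, giving $c_{\mathbb R}<8$. Deciding whether it holds for each class of diagrams is exactly where the determinant bookkeeping is needed.
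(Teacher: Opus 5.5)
Your proposal has a genuine gap. The central claim of Step 1, $c_{\mathbb R}<8$, is false: the leading constant for real spinors is exactly $c_{\mathbb R}=8$, the same as for complex Haar states.

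You make a normalization slip. The orthogonal normalization $d(d+2)(d+4)(d+6)$ and the unitary one $(d)_4$ both behave like $d^4$. The factor $2^{-4}$ from expanding the parity projectors is what turns $(d)_4$ into $D(D+2)(D+4)(D+6)/16$ in Eq.~\eqref{eq:haar-mean-purity}, and it enters your real computation in the same way. So the $24$ permutation diagrams alone contribute $F_1/[D(D+4)(D+8)(D+12)]=8(3/4)^N[1+o(1)]$, with the same numerator $F_1=8\cdot3^N+\dots$ as in the complex case. There is no reweighting by $1/16$.

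Had that reweighting been real, you would get $c_{\mathbb R}=1/2$. This contradicts Eq.~\eqref{eq:mre-upper-bound}, which forces $\nu_2\ge(16/9)(3/4)^N$ for every state and hence for every average.

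Your expectation about the other $81$ diagrams also fails. They do not give only $D^2$-type terms; they produce a term $96\Re[(2+i)^N]$ of size $5^{N/2}$. The exact mean is
\begin{equation*}
  \bar\nu_{\mathbb R}=\frac{96\Re[(2+i)^N]+12D^3+76D^2+192D+8\cdot3^N+344}{D(D+4)(D+8)(D+12)},
\end{equation*}
so $(\bar\nu-\bar\nu_{\mathbb R})/\bar\nu=-12(\sqrt5/3)^N\cos(N\alpha)+\mathcal O(2^{-N/2})$ with $\alpha=\arctan(1/2)$. The advantage of real spinors over $-\ln\bar\nu$ is therefore exponentially small, and its sign oscillates with $N$. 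Your conclusion ``for all large $N\in8\mathbb N$'' is false.

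Consequently, Step 2 collapses. The $o(1)$ in Eq.~\eqref{eq:haar-mean-mre} cannot be compared with a gain of order $(\sqrt5/3)^N$, so you cannot bypass the Jensen gap. Three ingredients are missing:
\begin{itemize}
  \item[(i)] The subleading $(2+i)^N$ term itself, computed exactly or with controlled error.
  \item[(ii)] An explicit bound on $\overline{M_2}+\ln\bar\nu$ that decays faster than $(\sqrt5/3)^N$. The paper combines the exact eighth Haar moment, $V_N=144(5/9)^N[1+o(1)]$, with the floor $\nu_2\ge1/d$ to get $\overline{M_2}+\ln\bar\nu\le(4d\bar\nu-3)V_N=2304(25/72)^{N/2}[1+o(1)]$.
  \item[(iii)] A proof that $\cos(N\alpha)\le-1/2$ for infinitely many $N\in8\mathbb N$. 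This follows from irrationality of $\alpha/\pi$, since $2\cos2\alpha=6/5$ is not an algebraic integer.
\end{itemize}
With these, one fixes a target $b_N=(\bar\nu+\bar\nu_{\mathbb R})/2$ and certifies $\nu_2<b_N$.

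Your derandomization also has the wrong degree. In real coordinates $u$, $\bra\psi^{\otimes4}\mathcal R\ket\psi^{\otimes4}$ is homogeneous of degree eight in $u$ (it is quartic in $uu^{\top}$). A spherical $4$-design therefore does not reproduce its sphere average; you need exactness through degree eight. The paper achieves this with a symmetric distribution on $\{-4,\dots,4\}$ whose moments match the Gaussian ones up to order eight. It then applies the method of conditional expectations to $\|x\|^8[\nu_2-b_N]$.
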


Let $\bar\nu_{\mathbb R}=\mathbb E_{\mathbb R}\nu_2$ denote the mean purity over uniformly distributed normalized real spinors.
With $\alpha=\arctan(1/2)$, we find
\begin{equation}
  \frac{\bar\nu-\bar\nu_{\mathbb R}}{\bar\nu}
  =-12\left(\frac{\sqrt5}{3}\right)^N\cos(N\alpha)
  +\mathcal{O}\!\left(2^{-N/2}\right).
  \label{eq:real-haar-difference}
\end{equation}
Since $\alpha/\pi$ is irrational, infinitely many $N\in8\mathbb N$ satisfy $\cos(N\alpha)\leq-1/2$.
Along this sequence, the positive leading term dominates both the remainder and the Haar Jensen gap $\overline{M_2}+\ln\bar\nu=\mathcal{O}(N(5/9)^N)$.
Hence $\mathbb E_{\mathbb R}M_2\ge-\ln\bar\nu_{\mathbb R}>\overline{M_2}$ for sufficiently large $N$, so some real spinor has MRE above the Haar mean.
This averaging argument can be turned into a deterministic construction of such states (see SM).

The states constructed above also allow us to compare the MRE with measures of fermionic non-Gaussianity based on the covariance matrix.
We consider the fermionic antiflatness (FAF)~\cite{sierant2026fermionic}, defined by $\mathcal F_k(\psi)=n-\tfrac12\Tr[(\Gamma^{\top}\Gamma)^k]$ for integers $k\geq1$.
Here, $\Gamma_{ij}=(i/2)\bra\psi[\gamma_i,\gamma_j]\ket\psi$ is the covariance matrix of $\ket\psi$.
For $k=1$, the FAF is proportional to the occupation-number entropy of Tsallis order two introduced in Ref.~\cite{tarabunga2026computable} and is a strong monotone under Gaussian protocols.
The FAF attains its maximal value $\mathcal F_k=n$ if and only if $\Gamma=0$.
For example, the cat state $\ket{\mathrm{cat}_n}=(\ket{0^n}+\ket{1^n})/\sqrt2$, a superposition of the empty and fully occupied Fock states, has $\Gamma=0$ for even $n\geq4$.
In contrast, the MRE of these cat states approaches the constant $\ln8$ as $n\to\infty$, since $\nu_2(\mathrm{cat}_n)=1/8+7/2^{n+1}$.
The real spinors $\ket{\psi_N}$ in Theorem~\ref{thm:real-spinors-above-haar} also have $\Gamma=0$, since $T$ invariance and $T(i\gamma_i\gamma_j)T^{-1}=-i\gamma_i\gamma_j$ for $i\ne j$ force the real entries $\Gamma_{ij}$ to vanish.
Since their MRE exceeds the Haar mean and obeys the universal upper bound, we have $M_2(\psi_N)=N\ln(4/3)+\mathcal{O}(1)$.
For this sequence of system sizes, we therefore obtain
\begin{align}
   & \mathcal F_k(\psi_N)=\mathcal F_k(\mathrm{cat}_{N/2})=\frac N2,
  \label{eq:equal-faf}                                                 \\
   & M_2(\psi_N)-M_2(\mathrm{cat}_{N/2})  =N\ln\frac43+\mathcal{O}(1).
  \label{eq:equal-faf-mre-gap}
\end{align}
Thus, the MRE detects extensive differences in non-Gaussianity between states that are indistinguishable by covariance-based measures.

Conversely, we note that states with the same MRE can have different values of the FAF.
For example, on six complex fermion modes, consider $\ket\chi=\ket{\mathrm{cat}_4}\otimes\ket{00}$ and $\ket\eta=\sqrt{(1+z)/2}\ket{0^6}+\sqrt{(1-z)/2}\ket{1^6}$.
One can tune $z=\sqrt{2\sqrt{65}/5-3}$ to satisfy $M_2(\chi)=M_2(\eta)=\ln(32/11)$ while $\mathcal F_k(\chi)=4<6(1-z^{2k})=\mathcal F_k(\eta)$ for every integer $k\geq1$.
Together with the previous example, this shows that neither the MRE nor the FAF determines the other.
Because both the MRE and the FAF at $k=1$ are monotones under Gaussian protocols, they impose complementary constraints on deterministic conversion between pure states (see Ref.~\cite{tarabunga2026bell} for related discussions).

%%%========== MRE dynamics and transition in the SYK model ==========%%%
\textit{Dynamics and transition of non-Gaussianity in the SYK model.}---Using the SYK model as a concrete example, we now demonstrate how chaotic dynamics can generate almost maximal non-Gaussianity.
The Hamiltonian of the SYK model is
\begin{equation}\label{eq:syk-hamiltonian}
  H = -\sum_{1\leq i<j<k<l\leq N}J_{ijkl}\gamma_i\gamma_j\gamma_k\gamma_l,
\end{equation}
where the couplings $J_{ijkl}$ are independent Gaussian random variables with zero mean and variance $6J^2/N^3$, and $J>0$ sets the energy scale.
We denote the average over the couplings by $\mathbb E_J$.
We evaluate the MRE of the Kourkoulou-Maldacena (KM) state~\cite{kourkoulou2017pure,zhang2020entanglement}, defined by
\begin{equation}
  \ket*{\psi(\beta,t)}=\frac{e^{-wH}\ket B}{\sqrt{Z_B(\beta)}}.
\end{equation}
Here, $\ket B$ is the Fock vacuum, a Gaussian state, and $w=\beta/2+it$ describes imaginary-time evolution by $\beta/2$ followed by real-time evolution by $t$.
The normalization factor $Z_B(\beta)=\bra B e^{-\beta H}\ket B$ is the partition function on an imaginary-time interval of length $\beta$, with boundary state $\ket B$ at both ends.
We study the MRE density $M_2/N$ of the KM state and its possible nonanalyticities in the large-$N$ limit.

For the KM state, the four-copy representation in Eq.~\eqref{eq:purity-representations}, which we used above for the Haar mean, gives the purity as $\nu_2(\beta,t)=Z_R/Z_B(\beta)^4$ with
\begin{equation}\label{eq:syk-amplitude}
  Z_R            =\bra B^{\otimes4}(e^{-\bar wH})^{\otimes4}\mathcal R\,(e^{-wH})^{\otimes4}\ket B^{\otimes4}.
\end{equation}
Here, $\bar w$ is the complex conjugate of $w$, and all four copies share the same couplings $J_{ijkl}$.
The unitary $\mathcal R$ acts on the Majorana operators as $\mathcal R\gamma_i^{(a)}\mathcal R^\dagger=\sum_{b=1}^4R_{ab}\gamma_i^{(b)}$, where the reflection $R=I_4-\bm1\bm1^{\top}/2$ with $\bm1=(1,1,1,1)^{\top}$ commutes with permutations of the copies.
To write $Z_R$ as a path integral that incorporates the reflection, we introduce a complex-time contour $z(u)$, with $u\in[0,2]$ and $z(0)=0$.
We choose $\dot z=\mathrm dz/\mathrm du=w$ for $0<u<1$ and $\dot z=\bar w$ for $1<u<2$, corresponding to evolution before and after the insertion of $\mathcal R$ at $u=1$.
We use the same symbol $\gamma_i^{(a)}(u)$ for the Grassmann field on the contour.
To derive the saddle-point equations, we introduce the bilocal field
\begin{equation}
  \widehat G_{ab}(u,v)=\frac1N\sum_{i=1}^N\gamma_i^{(a)}(u)\gamma_i^{(b)}(v).
  \label{eq:syk-collective-field}
\end{equation}
At a saddle, $\widehat G$ represents the contour-ordered two-point function of the Majorana fields.
The effect of the reflection can be absorbed into the interaction term by introducing the rotated field $G(u,v)=O(u)^{\top}\widehat G(u,v)O(v)$, where
\begin{equation}
  O(u)=\begin{cases}
    I_4, & 0\leq u<1, \\
    R,   & 1<u\leq2.
  \end{cases}
  \label{eq:syk-copy-rotation}
\end{equation}
We introduce a Lagrange multiplier $\Sigma_{ab}(u,v)=-\Sigma_{ba}(v,u)$, average over disorder, and integrate out the fermions to obtain $\mathbb E_JZ_R=\int \mathcal DG\,\mathcal D\Sigma\,e^{-NI[G,\Sigma]}$, with the effective action per Majorana
\begin{align}
  I[G,\Sigma]={} & -\frac12\ln\det(1-2F_0\Sigma)
  \notag                                                                                               \\
                 & +\frac12\int\dd u\,\dd v\sum_{a,b}\Sigma_{ab}(u,v)G_{ab}(u,v)
  \notag                                                                                               \\
                 & -\frac{J^2}{8}\int\dd u\,\dd v\,\dot z(u)\dot z(v)\sum_{a,b}\widehat G_{ab}(u,v)^4.
  \label{eq:syk-action}
\end{align}
Here, $F_0(u,v)=\Theta(u-v)I_4$ is the free-particle propagator fixed by the vacuum boundary conditions at both contour endpoints, where $\Theta$ is the step function with $\Theta(0)=1/2$.
Products such as $F_0\Sigma$ denote composition of kernels over copy indices and contour coordinates, and $1$ is the identity kernel $\delta_{ab}\delta(u-v)$.
Varying $I$ with respect to $G$ and $\Sigma$ gives the Schwinger-Dyson (SD) equations (see SM for the derivation):
\begin{align}
  F           & =(1-2F_0\Sigma)^{-1}F_0,
  \label{eq:syk-dyson}                                                                     \\
  G           & =F-F^{\top},
  \label{eq:syk-majorana-propagator}                                                       \\
  \Sigma(u,v) & =J^2\dot z(u)\dot z(v)O(u)^{\top}\bigl[\widehat G(u,v)^{\circ3}\bigr]O(v).
  \label{eq:syk-self-energy}
\end{align}
Here, $F$ is the dressed particle propagator, and the Majorana propagator $G$ combines the particle and hole contributions $F$ and $-F^{\top}$.
The transpose exchanges copy indices and contour coordinates without complex conjugation, $F^{\top}_{ab}(u,v)=F_{ba}(v,u)$, and $\circ3$ denotes the elementwise cube.

\begin{figure*}[t]
  \centering
  \includegraphics[width=0.92\textwidth]{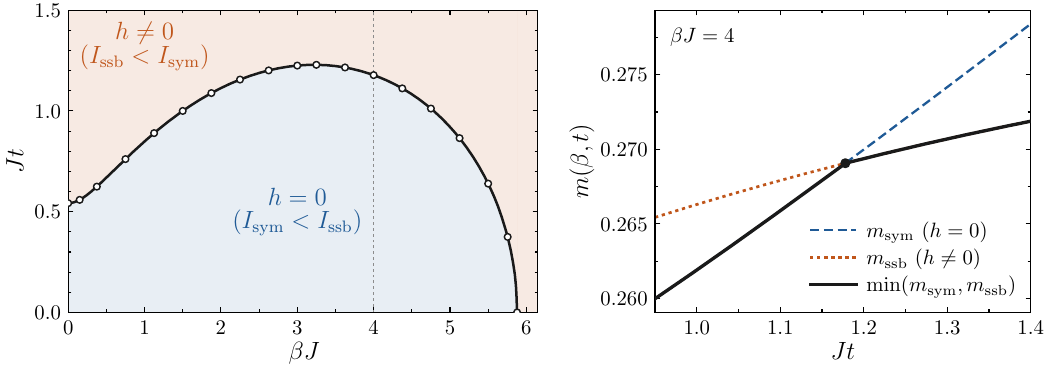}
  \caption{Phase diagram (left) and MRE density (right) of the KM state in the large-$N$ SYK model, obtained by numerically solving the SD equations with $L=512$ integration points, without extrapolation to the continuum limit.
    Left: regions where the symmetric saddle ($h=0$, blue shading) or the saddle with broken symmetry ($h\neq0$, orange shading) has the smaller action.
    Open circles mark crossing points of the two actions, and the solid curve gives the phase boundary.
    The vertical line with short dashes marks the cut at $\beta J=4$ shown in the right panel.
    Right: $m_{\mathrm{sym}}$ (blue dashed) and $m_{\mathrm{ssb}}$ (orange dotted), together with their minimum $m(\beta,t)$ (solid black), as functions of $Jt$ at $\beta J=4$, with $m(\beta,t)$ defined in Eq.~\eqref{eq:syk-minimum-density}.
    The black circle marks the transition between the two saddles, where $m(\beta,t)$ has a kink.}
  \label{fig:syk-transition}
\end{figure*}

To solve the SD equations, we use the ansatz
\begin{equation}\label{eq:syk-saddle-ansatz}
  G = gI_4 + hA, \quad A=\mqty(         0  & 1  & 1  & 1  \\
  -1 & 0  & 1  & -1 \\
  -1 & -1 & 0  & 1  \\
  -1 & 1  & -1 & 0).
\end{equation}
This ansatz is preserved by both the self-energy equation [Eq.~\eqref{eq:syk-self-energy}] and Dyson's equation [Eq.~\eqref{eq:syk-dyson}], reducing the problem to two scalar functions with $g(v,u)=-g(u,v)$ and $h(v,u)=h(u,v)$ (see End Matter).
The saddle with $h=0$ respects the permutation symmetry $S_4$, whereas a saddle with $h\neq0$ preserves only the subgroup $C_3$ generated by cyclic permutations of copies 2, 3, and 4.
Thus, $h$ is an order parameter for the breaking of the copy-permutation symmetry, and the eight saddles with $h\neq0$ are related to one another by permutations of the copies.

The competition between these two saddles suggests a phase transition in the MRE density of the KM state in the large-$N$ limit.
Let $I_{\mathrm{sym}}$ and $I_{\mathrm{ssb}}$ denote the real parts of the actions at the saddles with $h=0$ and $h\neq0$, respectively.
We assume that these saddles determine the leading exponential behavior of $\mathbb E_JZ_R$ in the large-$N$ limit (see SM for numerical checks).
The normalization obeys $\mathbb E_JZ_B=2^{-N/2}\mathbb E_J\operatorname{Tr}e^{-\beta H}=e^{-NI_1+o(N)}$~\cite{kourkoulou2017pure}, where $I_1$ is determined by the thermal free energy.
The resulting annealed MRE density is
\begin{align}
  m(\beta,t) & \equiv-\lim_{N\to\infty}\frac1N\ln\frac{\mathbb E_JZ_R}{(\mathbb E_JZ_B)^4}
  \notag                                                                                   \\
             & =\min\{m_{\mathrm{sym}}(\beta,t),m_{\mathrm{ssb}}(\beta,t)\},
  \label{eq:syk-minimum-density}
\end{align}
where $m_{\mathrm{sym}}=I_{\mathrm{sym}}-4I_1$ and $m_{\mathrm{ssb}}=I_{\mathrm{ssb}}-4I_1$.
This density equals both the disorder-averaged value and the MRE density of a typical realization if $\ln\mathbb E_JZ_R-\mathbb E_J\ln Z_R=o(N)$; the corresponding condition for $Z_B$ is proved in the SM.
Numerically solving the SD equations gives the phase diagram in Fig.~\ref{fig:syk-transition}.

The phase boundary marks a first-order transition, at which the order parameter $h$ jumps and the MRE density $m$ has a kink because the two actions cross with different slopes.
Both adjacent phases have extensive non-Gaussianity.
Here, the upper bound~\eqref{eq:mre-upper-bound} on the MRE imposes a nontrivial constraint on the phase diagram, indicating that the symmetric saddle cannot remain dominant throughout.
For example, at $t=0$, following the symmetric saddle as $\beta$ increases eventually gives a density exceeding the maximal value $\ln(4/3)$ imposed by the universal upper bound in Eq.~\eqref{eq:mre-upper-bound}, which also applies to $m(\beta,t)$ since $\mathbb E_JZ_B^4\geq(\mathbb E_JZ_B)^4$.
We find that the saddle with broken symmetry acquires the smaller action before the density of the symmetric saddle reaches this value, giving an MRE density consistent with the upper bound.
Since $I_1$ is analytic in $\beta$ and independent of $t$, the kink originates from $Z_R$ alone.
The MRE thus detects a transition in the KM state that is not reflected in the thermal free energy of the SYK model.

Interestingly, real- and imaginary-time evolutions modify the state in fundamentally different ways.
At fixed $\beta$, real-time evolution changes only the relative phases in the energy basis, so the energy distribution of each realization remains unchanged.
The transition at $\beta J=4$ in Fig.~\ref{fig:syk-transition} therefore arises entirely from these relative phases.
Increasing $\beta$ instead suppresses high-energy components and monotonically lowers the mean energy.
The MRE, however, depends on how $\mathcal R$ connects the energy eigenstates of the four copies, so lowering the energy need not move the state monotonically toward either phase.
Indeed, at fixed $Jt=1$, the KM state shows reentrant behavior as $\beta$ increases.
It passes from the phase with broken symmetry to the symmetric phase and back, although its mean energy decreases monotonically.

%%%========== Conclusion ==========%%%
\textit{Conclusion.}---We established a universal upper bound on the MRE and showed that typical Haar-random states attain the maximal MRE density $\ln(4/3)$ in the thermodynamic limit.
We also constructed a family of states whose MRE exceeds the Haar mean and showed that the MRE can differ extensively between states with identical covariance matrices, demonstrating that it probes non-Gaussianity beyond two-point correlations.
In the SYK model, the upper bound led to a nontrivial constraint on the non-Gaussianity phase diagram, and we numerically identified a first-order transition in the MRE density of the KM state associated with spontaneous breaking of the permutation symmetry among the four copies.
This transition, which is not reflected in the thermal free energy, shows that the growth of non-Gaussianity in chaotic dynamics can exhibit collective phenomena that are not readily apparent in conventional physical observables.

Several questions remain open.
An important question is whether analogous transitions occur in systems with spatially local interactions and how the growth of the MRE depends on transport and scrambling.
The connection between the SYK model and gravity~\cite{maldacena2016remarks,kourkoulou2017pure} also raises the possibility of a geometric interpretation of the MRE and its transition.
Finally, determining the typical MRE at fixed particle number would clarify how particle-number conservation constrains non-Gaussianity and provide a benchmark for its dynamics in such systems.

%%%=========== main text end ===========%%%

%%% acknowledgments %%%
\begin{acknowledgments}
  \textit{Acknowledgments.}---M.H. and R.M. were supported by FoPM, WINGS Program, the University of Tokyo.
  The work of M.H. was also supported by JSPS KAKENHI Grant No.~26KJ0910.
  Y.A. acknowledges support from JST FOREST Program (Grant No. JPMJFR222U), JST CREST (Grant No. JPMJCR23I2), and JST Moonshot R\&D (Grant No. JPMJMS256J).

  The authors used ChatGPT (OpenAI; models 5.6sol and 6astra) to assist with computing and checking the mean and variance under the Haar measure, developing the proof of the universal upper bound on the MRE, and numerically solving the Schwinger-Dyson equations for the SYK model.
  The authors directed these tasks and reviewed the resulting derivations and numerical results, and they take full responsibility for the content of this work.
\end{acknowledgments}

%%% bibliography %%%
\bibliography{biblio}

%%%=========== End Matter ===========%%%
% \clearpage
\onecolumngrid
\begin{center}
  \textbf{\Large End Matter}
\end{center}
\beginendmatt

\textit{Reduction to two scalar functions.}---The ansatz $G=gI_4+hA$ reduces the SD equations to equations for two scalar functions because both the self-energy equation and Dyson's equation preserve its dependence on the copy indices.
We first substitute this ansatz into Eq.~\eqref{eq:syk-self-energy} and evaluate the self-energy separately when $u$ and $v$ lie on the same contour segment and when they lie on different segments (the segments $0\leq u<1$ and $1<u\leq2$ describe the evolution before and after the insertion of $\mathcal R$).
When $u$ and $v$ lie on the same segment, the correlator $\widehat G$ entering the interaction is either $gI_4+hA$ or $gI_4+hRAR$.
Both $A$ and $RAR$ have vanishing diagonal entries and off-diagonal entries $\pm1$, so cubing each entry and rotating back produces only terms proportional to $I_4$ and $A$.
For $u<1<v$, the same operation is $[(gI_4+hA)R]^{\circ3}R$, whereas for $v<1<u$ it is $R[R(gI_4+hA)]^{\circ3}$.
These expressions mix $g$ and $h$ but still reduce to linear combinations of $I_4$ and $A$.
The self-energy therefore has the form $\Sigma=\sigma_I I_4+\sigma_A A$, with $\sigma_I=J^2\dot z(u)\dot z(v)g^3$ and $\sigma_A=J^2\dot z(u)\dot z(v)h^3$ on the same segment and
\begin{align}
  \sigma_I & =\frac{J^2\dot z(u)\dot z(v)}4(g^3+9gh^2-6\epsilon h^3),\notag \\
  \sigma_A & =\frac{J^2\dot z(u)\dot z(v)}4(3g^2h-6\epsilon gh^2+7h^3)
  \label{eq:em-reduced-self-energy}
\end{align}
for different segments, where $\epsilon=1$ for $u<1<v$ and $\epsilon=-1$ for $v<1<u$.
All products in these expressions for the self-energy are pointwise.

We next show that Dyson's equation maps this self-energy to a particle propagator with the same dependence on the copy indices.
The matrix $A$ obeys $A^2=-3I_4$ and has eigenvalues $\pm i\sqrt3$, so $\Sigma$ acts as $\sigma_I\pm i\sqrt3\sigma_A$ in the corresponding eigenspaces.
Since the free propagator $F_0=\Theta I_4$ does not mix these eigenspaces, Dyson's equation can be solved independently in each one to obtain the functions $f_\pm(u,v)$:
\begin{equation}
  f_\pm=[1-2\Theta(\sigma_I\pm i\sqrt3\sigma_A)]^{-1}\Theta.
  \label{eq:em-reduced-dyson}
\end{equation}
Products in this equation denote composition of kernels, with integration over intermediate contour coordinates.
The propagator $F=f_I I_4+f_A A$ reproduces the two solutions when $f_I\pm i\sqrt3\,f_A=f_\pm$ in the corresponding eigenspaces of $A$.
The sum and difference of these equations determine its coefficients:
\begin{equation}
  f_I=\frac{f_++f_-}{2},\qquad f_A=\frac{f_+-f_-}{2i\sqrt3}.
  \label{eq:em-particle-components}
\end{equation}
The relation $G=F-F^{\top}$ then recovers $G=gI_4+hA$, with $g=f_I-f_I^{\top}$ and $h=f_A+f_A^{\top}$ because $A^{\top}=-A$.
For the functions $f_I$ and $f_A$, the transpose exchanges $u$ and $v$ without complex conjugation.
Both the self-energy equation and Dyson's equation therefore preserve the form $G=gI_4+hA$, allowing us to solve for the two functions $g$ and $h$.

\textit{Numerical solution.}---We solve the SD equations for the two scalar functions $g$ and $h$ by replacing the contour integrals with finite sums and iterating between the self-energy equation and Dyson's equation.
We use the midpoint rule, dividing $[0,2]$ into $L$ equal subintervals with $L$ even and using their midpoints $u_k=(2k-1)/L$, $k=1,2,\ldots,L$, as integration points of weight $2/L$.
We represent each scalar function of $u$ and $v$ by an $L\times L$ matrix of its values at $(u_k,u_l)$, so Eq.~\eqref{eq:em-reduced-dyson} becomes
\begin{equation}
  \left[I_L-\frac{8}{L^2}\Theta(\sigma_I\pm i\sqrt3\sigma_A)\right]f_\pm=\Theta,
  \label{eq:em-discrete-dyson}
\end{equation}
where $\Theta_{kl}=\Theta(u_k-u_l)$ and $\Theta_{kk}=1/2$.
The factor $8/L^2=2(2/L)^2$ accounts for the two contour integrations in $2\Theta\Sigma F$.

\begin{figure}[tb]
  \centering
  \includegraphics[width=\columnwidth]{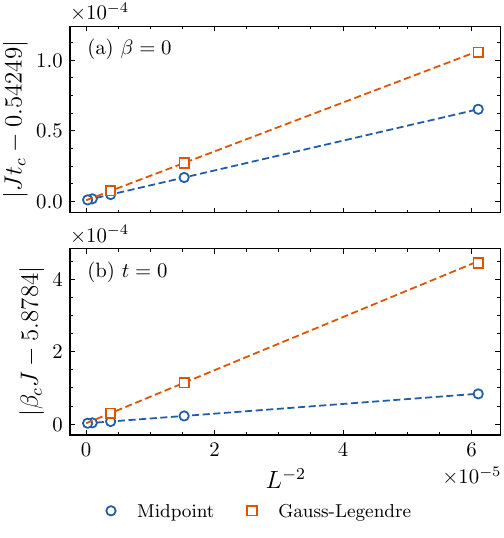}
  \caption{Convergence of the location of the phase boundary with the number $L$ of integration points on the contour.
    (a) Critical time $t_c$ at $\beta=0$ and (b) critical inverse temperature $\beta_c$ at $t=0$, shown as $|Jt_c-0.54249|$ and $|\beta_cJ-5.8784|$ versus $L^{-2}$, respectively, where $0.54249$ and $5.8784$ are the extrapolated continuum values.
    Circles indicate the midpoint rule with $L=128,256,512,1024,2048$, and squares indicate Gauss-Legendre quadrature with $L=128,256,512$.
    Dashed lines show linear extrapolations in $L^{-2}$ to the continuum limit $L^{-2}\to0$ from the two largest values of $L$ for each method.}
  \label{fig:em-convergence}
\end{figure}

We obtain a self-consistent solution by iterating between the self-energy equation and Dyson's equation.
From trial values of $f_I$ and $f_A$, we construct $g$ and $h$, calculate $\sigma_I$ and $\sigma_A$, and solve Eq.~\eqref{eq:em-discrete-dyson} for $f_\pm$, from which Eq.~\eqref{eq:em-particle-components} gives new values of $f_I$ and $f_A$.
We define the residual as the difference between the input values of $f_I$ and $f_A$ and those returned by one iteration.
We stabilize the iteration by averaging the input and returned values and accelerate convergence with Anderson mixing~\cite{anderson1965iterative,walker2011anderson}, a method that combines previous updates to reduce the estimated residual.
For every solution used to evaluate an action, we require the largest absolute entry of the residual, evaluated before mixing, to be below $10^{-11}$ for both $f_I$ and $f_A$.

We obtain the two competing saddles by first finding a solution of each type and then following both as $\beta$ and $t$ vary.
For the symmetric saddle, we start the iteration from the free propagator $f_I=\Theta$, $f_A=0$, which keeps $h=0$ throughout the calculation.
To find an initial solution with $h\neq0$, we temporarily modify the equations by replacing $F_0$ with $[\tfrac12\operatorname{sgn}(u-v)+\lambda/2]I_4$, where $\lambda=1$ recovers the original KM boundary condition.
We begin with $\lambda=2$ at $(\beta J,Jt)=(8,0)$ on a coarse grid and solve the SD equations for the full $4\times4$ matrix $G$ without imposing the ansatz, starting from the symmetric solution perturbed by weak correlations between pairs of copies.
Once a solution with $h\neq0$ is found, we gradually decrease $\lambda$ to $1$, still without imposing the ansatz, using each converged solution as the initial condition for the next value of $\lambda$.
At $\lambda=1$, we obtain a solution with $h\neq0$ that satisfies the original KM boundary condition, and we then increase $L$ and solve the reduced equations for $g$ and $h$ before evaluating its action.
We then vary $\beta$ and $t$ in small steps and use each converged solution to initialize the next calculation, following the $h=0$ and $h\neq0$ saddles separately through the crossing of their actions.

\textit{Determining the phase boundary.}---We evaluate and compare the actions of the two saddles to obtain the MRE density and locate the phase boundary.
At a saddle, we substitute the expression for $\Sigma$ from Eq.~\eqref{eq:syk-self-energy} into Eq.~\eqref{eq:syk-action} to combine its second and third terms.
On the discretized contour, the resulting action is
\begin{align}
  I={} & -\sum_\pm\ln\det\left[I_L-\frac{8}{L^2}\Theta(\sigma_I\pm i\sqrt3\sigma_A)\right]\notag \\
       & +\frac{3J^2}{2L^2}\sum_{k,l}\dot z(u_k)\dot z(u_l)\sum_{a,b}\widehat G_{ab}(u_k,u_l)^4.
  \label{eq:em-stationary-action}
\end{align}
Each eigenvalue of $A$ has multiplicity two, which cancels the factor $1/2$ in the logarithmic term of Eq.~\eqref{eq:syk-action}.
The sum over copies is $\sum_{a,b}\widehat{G}_{ab}^4= 4g^4+12h^4$ when both arguments lie on the same contour segment and
\begin{equation}
  \sum_{a,b}\widehat G_{ab}^4=g^4+18g^2h^2-24\epsilon gh^3+21h^4
  \label{eq:em-quartic}
\end{equation}
when they lie on different segments, with $\epsilon$ as in Eq.~\eqref{eq:em-reduced-self-energy}.
The real parts of $I$ evaluated at the saddles with $h=0$ and $h\neq0$ give $I_{\mathrm{sym}}$ and $I_{\mathrm{ssb}}$, respectively.
We also evaluate $I_1$ for a single copy without the insertion of $\mathcal R$, on the same contour and grid.
The particle propagator $f_1$, the Majorana propagator $g_1$, and the self-energy $\sigma_1$ satisfy
\begin{align}
  f_1           & =\left[I_L-\frac{8}{L^2}\Theta\sigma_1\right]^{-1}\Theta,
  \qquad g_1=f_1-f_1^{\top},\notag                                          \\
  \sigma_1(u,v) & =J^2\dot z(u)\dot z(v)g_1(u,v)^3.
  \label{eq:em-normalization}
\end{align}
We obtain $I_1$ by evaluating the action for a single copy at this solution and taking its real part.
These three actions give $m(\beta,t)=\min\{I_{\mathrm{sym}},I_{\mathrm{ssb}}\}-4I_1$.

We locate the phase boundary by solving $I_{\mathrm{ssb}}-I_{\mathrm{sym}}=0$ with Newton's method in $\beta$ or $t$.
At a saddle, $\delta I/\delta G=\delta I/\delta\Sigma=0$, so we calculate the derivative of the action with respect to $\beta$ or $t$ by keeping $G$ and $\Sigma$ fixed and differentiating only the factors $\dot z(u)\dot z(v)$.
We check these derivatives against finite differences of the actions, solving the SD equations separately at each nearby value of $\beta$ or $t$.
We obtain the phase diagram in Fig.~\ref{fig:syk-transition} by this procedure with $L=512$ integration points on the contour.

\textit{Convergence of the location of the phase boundary.}---We examine whether the computed location of the phase boundary converges as we increase $L$, and whether different methods of numerical integration give the same limit.
For each value of $L$, we first solve the SD equations to the prescribed tolerance and determine the location of the phase boundary.
In addition to the midpoint rule described above, we use Gauss-Legendre quadrature to check whether the phase boundary approaches the same limit with a different choice of integration points and weights.
We apply this quadrature separately on $[0,1]$ and $[1,2]$, placing $L/2$ integration points on each side of the discontinuity at $u=1$.
The integration points are nonuniformly spaced, and their weights vary from point to point~\cite{press2007numerical}.
Figure~\ref{fig:em-convergence} shows the critical time at $\beta=0$ and the critical inverse temperature at $t=0$, computed by both methods for several values of $L$.
In each case, the two methods give different results at finite $L$ but approach the same value as $L$ increases.
The approximately linear dependence of these critical values on $L^{-2}$ allows extrapolation to the continuum limit, giving $Jt_c=0.54249$ and $\beta_cJ=5.8784$.
We also find an approximately linear dependence on $L^{-2}$ for the critical time at $\beta J=4$, where both $\beta$ and $t$ are nonzero.

%%%=========== Supplemental Material ===========%%%
\onecolumngrid
\clearpage
\begin{center}
  \textbf{\large Supplemental Material: Universal Bound and Phase Transition in Many-Body Fermionic Non-Gaussianity}
\end{center}
\beginsupplement

\section{Haar moments and concentration of the MRE}
\label{sec:sm-haar-methods}

In this section, we derive the results on Haar-random states stated in the main text.
We express the first two moments of the purity as traces of Gaussian operators, evaluate them exactly, and then control the logarithm to obtain the mean and typical MRE.
Throughout this section, $N=2n\geq2$, $D=2^n$, $d=D/2$, and $\ket\psi$ is a normalized vector drawn from the complex Haar measure on the even-parity sector $\mathcal H_+$.
We write $\rho=\ket\psi\bra\psi$, $\bar\nu=\mathbb E_{\mathrm H}\nu_2$, and $\overline{M_2}=\mathbb E_{\mathrm H}[-\ln\nu_2]$.

\subsection{Haar moments as traces of Gaussian operators}

\begin{lemma}[Determinant formula for Haar moments]
  \label{lem:sm-haar-determinant-moments}
  Let $s\geq1$ be an integer and define
  \begin{equation}
    R=I_4-2\bm v\bm v^{\top},
    \qquad
    \bm v=\frac12(1,1,1,1)^{\top},
    \qquad R_s=R^{\oplus s}.
    \label{eq:sm-replica-reflection}
  \end{equation}
  Here, $R$ is the reflection $I_4-\bm1\bm1^{\top}/2$ used in the main text, and $R_s$ is the block-diagonal matrix with one copy of $R$ for each group of four copies.
  For $\boldsymbol\epsilon\in\{0,1\}^{4s}$, let $Q_{\boldsymbol\epsilon}=\operatorname{diag}((-1)^{\epsilon_1},(-1)^{\epsilon_2},\ldots,(-1)^{\epsilon_{4s}})$ change the sign on copy $a$ when $\epsilon_a=1$.
  For $\pi\in S_{4s}$, let $(P_\pi)_{ab}=\delta_{a,\pi(b)}$ be the matrix that sends copy $b$ to copy $\pi(b)$.
  Then
  \begin{equation}
    \mathbb E_{\mathrm H}\nu_2^s
    =\frac{1}{2^{4s}(d)_{4s}}
    \sum_{\pi\in S_{4s}}\sum_{\boldsymbol\epsilon\in\{0,1\}^{4s}}
    \det\!\left(I_{4s}+Q_{\boldsymbol\epsilon}R_sP_\pi\right)^n,
    \label{eq:sm-haar-determinant-moments}
  \end{equation}
  where $(d)_q=d(d+1)\cdots(d+q-1)$.
\end{lemma}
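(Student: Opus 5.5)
We prove the determinant formula \eqref{eq:sm-haar-determinant-moments} in four steps, each matching one factor on its right-hand side.

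\textbf{Step 1: reduce to a Haar moment.} Take the $s$-th power of the four-copy representation \eqref{eq:purity-representations}:
\begin{equation}
  \nu_2^s=\Tr\!\left[\rho^{\otimes4s}\mathcal R^{\otimes s}\right],
\end{equation}
where the $s$ factors of $\mathcal R$ act on disjoint groups of four copies. Since $\mathcal R$ is a product of Majorana bilinears and parities on even numbers of copies, it is parity even, so ordering the tensor factors causes no sign ambiguities. Averaging with the $4s$-th Haar moment gives
\begin{equation}
  \mathbb E_{\mathrm H}\nu_2^s=(d)_{4s}^{-1}\sum_{\pi\in S_{4s}}\Tr\!\left(\mathsf P_\pi\mathcal R^{\otimes s}\right),
\end{equation}
with the trace taken over $\mathcal H_+^{\otimes 4s}$.

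\textbf{Step 2: extend to the full Fock space.} Write the trace on $\mathcal H_+^{\otimes4s}$ as a trace over the full $4s$-copy Fock space with the projector $\prod_a(1+\Pi_a)/2$ inserted. Expanding this projector produces the prefactor $2^{-4s}$ and a sum over $\boldsymbol\epsilon\in\{0,1\}^{4s}$ of the operators $\prod_a\Pi_a^{\epsilon_a}$. Next, replace the ordinary permutation $\mathsf P_\pi$ by the fermionic permutation $S_\pi$, which maps $\gamma_i^{(b)}\mapsto\gamma_i^{(\pi(b))}$ and fixes the vacuum. This is legitimate because the two agree on even-parity inputs, and the projector forces every copy to be even; this is the same argument the main text uses to replace SWAP by $S_{13}$.

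\textbf{Step 3: identify the orthogonal matrices.} Each operator $\prod_a\Pi_a^{\epsilon_a}\,S_\pi\,\mathcal R^{\otimes s}$ is a Gaussian unitary. It acts on Majoranas as $\gamma_i^{(b)}\mapsto\sum_c O_{cb}\gamma_i^{(c)}$ with $O=Q_{\boldsymbol\epsilon}R_sP_\pi$, or its transpose depending on convention; the choice does not affect the determinant. The parity $\Pi_a$ conjugates every Majorana of copy $a$ to minus itself, which gives $Q_{\boldsymbol\epsilon}$. The key structural fact is that this $4s\times4s$ matrix acts identically on every Majorana index $i$. The Gaussian unitary therefore factorizes over the $n$ mode pairs $(2m-1,2m)$, and its trace is a product of $n$ identical traces over $4s$ copies of a single complex mode.

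\textbf{Step 4: evaluate via a trace formula.} Apply the standard formula for Gaussian unitaries: $|\Tr\,\mathcal U_O|^2=\det(I+O)$ for $O\in SO$, with the phase fixed by lifting. For one mode pair, with $O\oplus O$ acting on $2\cdot4s$ Majoranas,
\begin{equation}
  \Tr=\pm\det(I_{4s}+O),
\end{equation}
that is, the square root of $\det(I+O\oplus O)=\det(I+O)^2$. Taking the $n$-th power yields the summand $\det(I_{4s}+Q_{\boldsymbol\epsilon}R_sP_\pi)^n$.

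\textbf{Main obstacle: the signs and phases.} Three sources of sign must be controlled:
\begin{itemize}
  \item[(i)] the sign of the square root;
  \item[(ii)] the sign convention in $\Pi_a$, which involves $i^n$, and the fact that $\det O=-1$ cases arise, since $\det R=-1$ and $\det Q_{\boldsymbol\epsilon}=\pm1$;
  \item[(iii)] the phase of $\mathcal R$ fixed by the beam-splitter definition.
\end{itemize}
To handle these, I would compute the one-mode trace directly in the Fock basis of the $4s$ complex modes $f^{(b)}=(\gamma_1^{(b)}-i\gamma_2^{(b)})/2$. Because $O$ is real and acts identically on $\gamma_1$ and $\gamma_2$, it commutes with the complex structure. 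The Gaussian unitary then acts as the second-quantized operator $\Gamma(O)$ of the real unitary $O$ on the one-particle space $\mathbb C^{4s}$, up to a phase. The trace over the fermionic Fock space of $\Gamma(O)$ is exactly $\det(I+O)$, which holds for any $O$, including $\det O=-1$, with no square-root ambiguity.

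It then remains to check that each building block is precisely $\Gamma(\cdot)$ with no extra phase:
\begin{itemize}
  \item $S_\pi$ fixes the vacuum, so it equals $\Gamma(P_\pi)$.
  \item The parity satisfies $\Pi_a=(-1)^{N_a}=\Gamma(\text{sign flip on }a)$ per mode pair.
  \item For $\mathcal R$, use the definition: the conjugation by $U_{12}U_{34}$ and by $\Pi_3\Pi_4$ are each $\Gamma$ of the corresponding real orthogonal matrices, because $\gamma^{(1)}_i\gamma^{(2)}_i$ summed over $i$ equals a one-body number-conserving term in the $f$ modes up to a constant, and any such constant phase cancels between $U$ and $U^\dagger$.
\end{itemize}
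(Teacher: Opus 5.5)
Your final argument is essentially the paper's proof: vacuum-fixing second-quantized operators $\Gamma(O)$ whose trace over the $4s$ copies of one mode is $\det(I_{4s}+O)$, the expansion of the even-parity projectors into $Q_{\boldsymbol\epsilon}$, and the replacement of $\mathsf P_\pi$ by the fermionic permutation on $\mathcal H_+^{\otimes4s}$ (so the square-root detour in your Step 4 is unnecessary). The only difference is cosmetic: the paper identifies $\mathcal R$ directly as the parity operator of the symmetric modes $\frac12\sum_a f_m^{(a)}$, which makes its matrix $R$ explicit, whereas you compose the vacuum-fixing lifts of $U_{12}U_{34}$, $\Pi_3\Pi_4$, and $S_{13}$ and rely on the stated fact that the composite matrix is $I_4-2\bm v\bm v^{\top}$; this works equally well.
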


\begin{proof}
  We first evaluate the Haar mean of $\rho^{\otimes q}$.
  Every vector $\ket\psi^{\otimes q}$ lies in the symmetric subspace of $\mathcal H_+^{\otimes q}$, so the average of $\rho^{\otimes q}$ is supported on this subspace.
  The average also commutes with $W^{\otimes q}$ for every $W\in U(d)$ because the Haar measure is invariant under $\ket\psi\mapsto W\ket\psi$.
  The action of $U(d)$ on the symmetric subspace is irreducible, so Schur's lemma implies that the average is proportional to the projector onto that subspace.
  This projector is $(q!)^{-1}\sum_{\pi\in S_q}\mathsf P_\pi$ and has trace $\binom{d+q-1}{q}=(d)_q/q!$, where $\mathsf P_\pi$ sends tensor factor $a$ to position $\pi(a)$.
  Normalizing the average to have trace one gives
  \begin{equation}
    \mathbb E_{\mathrm H}\rho^{\otimes q}
    =\frac{1}{(d)_q}\sum_{\pi\in S_q}\mathsf P_\pi.
    \label{eq:sm-haar-state-moment}
  \end{equation}

  To evaluate the resulting traces, we associate a Gaussian unitary on the full Fock space with each matrix acting on the copy indices.
  Set $q=4s$ and let $\mathcal F_q$ be the full Fock space of $q$ copies.
  For a real orthogonal $q\times q$ matrix $W$, define the unitary $\widehat W$ on $\mathcal F_q$ by
  \begin{equation}
    \widehat W f_m^{(a)\dagger}\widehat W^\dagger
    =\sum_{b=1}^q W_{ba}f_m^{(b)\dagger},
    \qquad
    \widehat W\ket0=\ket0,
    \qquad m\in[n].
  \end{equation}
  These conditions determine its action on every occupation state, since such a state is a product of creation operators acting on $\ket0$.
  The conjugation action alone determines $\widehat W$ only up to an overall phase, which would change its trace, and the condition on the vacuum fixes this phase.
  These conditions also imply $\widehat W_1\widehat W_2=\widehat{W_1W_2}$, so multiplication of these operators corresponds to multiplication of their matrices.
  The trace of $\widehat W$ follows from summing over empty and occupied modes.
  For a fixed $m$, let $\widehat W_m$ be the operator induced by $W$ on the Fock space of the $q$ copies of mode $m$.
  We diagonalize $W$, with eigenvalues $\lambda_1,\lambda_2,\ldots,\lambda_q$, and use the corresponding linear combinations of these $q$ creation operators.
  The occupation state in which the modes in a subset $A\subseteq[q]$ are occupied then has eigenvalue $\prod_{a\in A}\lambda_a$ under $\widehat W_m$, and summing over all subsets gives
  \begin{equation}
    \Tr\widehat W_m
    =\sum_{A\subseteq[q]}\prod_{a\in A}\lambda_a
    =\prod_{a=1}^q(1+\lambda_a)
    =\det(I_q+W),
    \label{eq:sm-exterior-algebra-character}
  \end{equation}
  where the empty subset contributes one.
  Since $\widehat W$ acts separately and identically on each $m\in[n]$, the full trace is $\Tr_{\mathcal F_q}\widehat W=\det(I_q+W)^n$.

  We now show that $\mathcal R$ is of this form by identifying it with the parity operator of $n$ modes that are symmetric among the four copies.
  The fermionic SWAP $S_{13}$ exchanges copies 1 and 3, so it reverses the sign of the antisymmetric combinations $(\gamma_i^{(1)}-\gamma_i^{(3)})/\sqrt2$ while leaving the symmetric combinations and copies 2 and 4 unchanged.
  The parity operator of the antisymmetric modes acts in the same way on all Majorana operators and, like $S_{13}$, leaves the product vacuum invariant.
  Since these two properties determine the operator uniquely, $S_{13}$ is the parity operator of the antisymmetric modes.
  With $V=U_{12}U_{34}\Pi_3\Pi_4$, the definition in Eq.~\eqref{eq:purity-representations} gives $\mathcal R=V^\dagger S_{13}V$.
  The beam splitters first transform each antisymmetric combination into $(\gamma_i^{(1)}+\gamma_i^{(2)}-\gamma_i^{(3)}-\gamma_i^{(4)})/2$, and conjugation by $\Pi_3\Pi_4$ then reverses the last two signs, so that
  \begin{equation}
    \widetilde\gamma_i
    :=V^\dagger
    \frac{\gamma_i^{(1)}-\gamma_i^{(3)}}{\sqrt2}
    V
    =\frac12\sum_{a=1}^4\gamma_i^{(a)}.
  \end{equation}
  The corresponding complex modes and their parity operator are
  \begin{equation}
    \widetilde f_m
    =\frac{\widetilde\gamma_{2m-1}-i\widetilde\gamma_{2m}}2
    =\frac12\sum_{a=1}^4 f_m^{(a)},
    \qquad
    \mathcal R=\prod_{m=1}^n\left(I-2\widetilde f_m^\dagger\widetilde f_m\right).
  \end{equation}
  Conjugation by $\mathcal R$ reverses the sign of the symmetric combination $\widetilde\gamma_i=\sum_a v_a\gamma_i^{(a)}$ and leaves combinations orthogonal to $\bm v$ unchanged, so $\mathcal R\gamma_i^{(a)}\mathcal R^\dagger=\sum_bR_{ab}\gamma_i^{(b)}$.
  Each $\widetilde f_m$ annihilates the product vacuum $\ket0$, so $\mathcal R\ket0=\ket0$.
  These two properties match the definition of $\widehat W$, and hence $\mathcal R^{\otimes s}=\widehat R_s$.

  The permutations and the parity projectors are also of this form.
  For $W=P_\pi$, the operator $\widehat P_\pi$ sends each creation operator $f_m^{(a)\dagger}$ to $f_m^{(\pi(a))\dagger}$.
  Its action differs from an ordinary permutation of tensor factors by signs from reordering the creation operators.
  Exchanging two adjacent copies with $k_a$ and $k_b$ particles gives the sign $(-1)^{k_ak_b}$, which is $+1$ when both copies have even parity.
  Every permutation is a product of adjacent exchanges, so $\widehat P_\pi$ agrees with $\mathsf P_\pi$ on $\mathcal H_+^{\otimes q}$.
  Let $\mathcal P_+=\prod_{a=1}^q(I+\Pi_a)/2$ project onto $\mathcal H_+^{\otimes q}$.
  Using $\nu_2^s=\Tr(\rho^{\otimes q}\mathcal R^{\otimes s})$ and Eq.~\eqref{eq:sm-haar-state-moment}, we obtain
  \begin{equation}
    \mathbb E_{\mathrm H}\nu_2^s
    =\frac{1}{(d)_q}\sum_{\pi\in S_q}
    \Tr_{\mathcal F_q}\!\left(\mathcal P_+\mathcal R^{\otimes s}\widehat P_\pi\right).
    \label{eq:sm-haar-moment-fock-trace}
  \end{equation}
  The projector expands as $\mathcal P_+=2^{-q}\sum_{\boldsymbol\epsilon\in\{0,1\}^q}\prod_a\Pi_a^{\epsilon_a}$.
  Each operator $\prod_a\Pi_a^{\epsilon_a}$ reverses the sign of the creation operators on copy $a$ when $\epsilon_a=1$ and leaves the vacuum invariant, so it equals $\widehat Q_{\boldsymbol\epsilon}$.
  The product $\widehat Q_{\boldsymbol\epsilon}\widehat R_s\widehat P_\pi$ therefore corresponds to the matrix $Q_{\boldsymbol\epsilon}R_sP_\pi$, and Eq.~\eqref{eq:sm-exterior-algebra-character} gives
  \begin{equation}
    \Tr_{\mathcal H_+^{\otimes q}}
    \!\left(\mathcal R^{\otimes s}\mathsf P_\pi\right)
    =2^{-q}\sum_{\boldsymbol\epsilon\in\{0,1\}^q}
    \det\!\left(I_q+Q_{\boldsymbol\epsilon}R_sP_\pi\right)^n.
    \label{eq:sm-replicated-spinor-character}
  \end{equation}
  Substitution of this trace into Eq.~\eqref{eq:sm-haar-moment-fock-trace} proves Eq.~\eqref{eq:sm-haar-determinant-moments}.
\end{proof}

\subsection{Exact mean purity}

With $s=1$, Lemma~\ref{lem:sm-haar-determinant-moments} gives the mean purity as a sum over determinants of $4\times4$ matrices.
These matrices do not depend on $N$, which enters only through the exponent $n=N/2$ and the normalization $16(d)_4$.
The mean purity is therefore determined by the possible values of the determinants and the number of times each value occurs.

\begin{theorem}[Mean purity of Haar-random states]
  \label{thm:sm-haar-mean}
  For every even $N\geq2$, the mean purity of a Haar-random state in $\mathcal H_+$ is
  \begin{equation}
    \bar\nu
    =\frac{8\cdot3^N+12D^3+16D^2+48D+56}{D(D+2)(D+4)(D+6)}.
    \label{eq:sm-haar-exact-mean-sm}
  \end{equation}
  Its large-$N$ expansion is
  \begin{align}
    \bar\nu
     & =8\left(\frac34\right)^N
    \left[
      1+\frac32\left(\frac{2\sqrt2}{3}\right)^N
      +\mathcal{O}\left(2^{-N/2}\right)
      \right],
    \label{eq:sm-two-copy-purity-asymptotic} \\
    -\ln\bar\nu
     & =N\ln\frac43-\ln8
    -\frac32\left(\frac{2\sqrt2}{3}\right)^N
    +\mathcal{O}\left(\left(\frac89\right)^N\right).
    \label{eq:sm-two-copy-entropy-asymptotic}
  \end{align}
\end{theorem}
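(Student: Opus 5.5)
The plan is to specialize Lemma~\ref{lem:sm-haar-determinant-moments} to $s=1$. This gives
\begin{equation*}
  \bar\nu=\frac{1}{16(d)_4}\sum_{\pi\in S_4}\sum_{\boldsymbol\epsilon\in\{0,1\}^4}\det(I_4+Q_{\boldsymbol\epsilon}RP_\pi)^n .
\end{equation*}
Since $d=D/2$, we have $16(d)_4=D(D+2)(D+4)(D+6)$, which is exactly the stated denominator. The numerator is a sum of $384$ terms of the form $\delta^n$, where each $\delta=\det(I_4+Q_{\boldsymbol\epsilon}RP_\pi)$ is an integer independent of $N$. Writing $3^N=9^n$, $D^3=8^n$, $D^2=4^n$ and $D=2^n$, the claim reduces to a counting statement. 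Among the $384$ matrices, $\delta=9,8,4,2,1$ should occur $8,12,16,48,56$ times, and the remaining $244$ should give $\delta=0$; in particular no negative values should occur. This last point matters, since otherwise the formula would depend on the parity of $n$.

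To carry out the enumeration, I write $M=Q_{\boldsymbol\epsilon}P_\pi$, a signed permutation matrix, and use $R=I_4-2\bm v\bm v^{\top}$. Then
\begin{equation*}
  I_4+MR=(I_4+M)-2M\bm v\bm v^{\top}.
\end{equation*}
By the matrix determinant lemma, in its adjugate form so that singular $I_4+M$ is also covered,
\begin{equation*}
  \delta=\det(I_4+M)-2\,\bm v^{\top}\operatorname{adj}(I_4+M)M\bm v .
\end{equation*}
The determinant $\det(I_4+M)$ factorizes over the cycles of $\pi$: a $k$-cycle whose sign product is $\sigma$ contributes $1-(-1)^k\sigma$, which is either $0$ or $2$. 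The adjugate term can be handled cycle by cycle in the same way, because $\bm v$ is uniform over the copies. I would organize the count by the cycle type of $\pi$ (identity, transposition, two transpositions, $3$-cycle, $4$-cycle), together with the number of sign flips on each cycle, using the $S_4$-conjugation invariance of $R$ to reduce to representatives. Each class has a computable multiplicity and value of $\delta$. An alternative with the same content is to note that $MR$ is orthogonal and compute $\delta=\prod_a(1+\lambda_a)$ from its eigenvalues; for instance $\delta=9$ only arises from specific configurations of eigenvalues.

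The large-$N$ expansion is then routine. Factoring out $8\cdot3^N/D^4=8(3/4)^N$, the next term is
\begin{equation*}
  \frac{12D^3}{8\cdot 3^N}=\frac32\left(\frac{2\sqrt2}{3}\right)^N .
\end{equation*}
The denominator correction $(1+2/D)^{-1}\cdots=1-12/D+\dots$ is $\mathcal O(2^{-N/2})$, and the remaining numerator terms, such as $D^2/3^N=(4/9)^n$, are smaller still. Taking $-\ln$ and expanding $\ln(1+x)=x-x^2/2+\dots$ with $x^2=\mathcal O((8/9)^N)$ yields Eq.~\eqref{eq:sm-two-copy-entropy-asymptotic}, because $(8/9)^N$ dominates $2^{-N/2}$.

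I expect the main obstacle to be the exact enumeration: verifying the multiplicities $8,12,16,48,56$ and showing that every other configuration gives $\delta=0$ rather than a negative value. As a guard against bookkeeping errors, I would check the final formula at small sizes. At $N=2$ one has $d=1$, so $\nu_2=1$. At $N=4$ the state is Gaussian (every even-parity state of two modes is Gaussian), so $\bar\nu=1$ again. Both evaluations check the counts directly.
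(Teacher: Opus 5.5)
Your proposal is correct and follows the paper's route. You specialize the determinant lemma to $s=1$ and enumerate the $384$ determinants by the cycle type of $\pi$; the paper's table gives exactly the counts $8,12,16,48,56$ and $244$ zeros you anticipate, and the asymptotics are expanded the same way. The only difference is the tool for each determinant: the paper expands the integer matrix $2(I_4+Q_{\boldsymbol\epsilon}RP_\pi)$ directly, while you use the rank-one determinant lemma together with the observation that orthogonality of $MR$ forces $\delta\geq0$, which is immaterial.
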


\begin{proof}
  For $s=1$, the sum in Eq.~\eqref{eq:sm-haar-determinant-moments} runs over $4!=24$ permutations $\pi$ and $2^4=16$ choices of $\boldsymbol\epsilon$, or $384$ pairs in total.
  Since $R=I_4-\bm1\bm1^\top/2$ and $\bm1^\top P_\pi=\bm1^\top$, each determinant can be computed from the integer matrix
  \begin{equation}
    A=2(I_4+Q_{\boldsymbol\epsilon}RP_\pi),
    \qquad
    A_{ij}=2\delta_{ij}+(-1)^{\epsilon_i}\left(2\delta_{i,\pi(j)}-1\right),
    \qquad
    \det(I_4+Q_{\boldsymbol\epsilon}RP_\pi)=\frac{\det A}{16}.
    \label{eq:sm-first-integer-determinants}
  \end{equation}
  Since $A$ has integer entries, expanding $\det A$ over the $24$ permutations of its columns gives its value exactly.

  To count how often each value occurs, we group the permutations $\pi$ by their cycle lengths.
  Relabeling the four copies leaves $R$ unchanged and permutes the $16$ sign choices, so permutations with the same cycle lengths give the same counts.
  It therefore suffices to choose one representative of each of the five types, evaluate its $16$ determinants, and multiply the counts by the number of permutations of that type.
  The following table lists, for each type, the number of pairs $(\pi,\boldsymbol\epsilon)$ giving each value of the determinant.
  \begin{equation}
    \begin{array}{c|r|rrrrrr}
      \text{Cycle lengths of }\pi & \text{Number of }\pi & \multicolumn{6}{c}{\det(I_4+Q_{\boldsymbol\epsilon}RP_\pi)}                           \\
                                  &                      & 9                                                           & 8  & 4  & 2  & 1  & 0   \\
      \hline
      (1,1,1,1)                   & 1                    & 0                                                           & 0  & 4  & 0  & 0  & 12  \\
      (2,1,1)                     & 6                    & 0                                                           & 6  & 0  & 24 & 0  & 66  \\
      (2,2)                       & 3                    & 0                                                           & 0  & 12 & 0  & 0  & 36  \\
      (3,1)                       & 8                    & 8                                                           & 0  & 0  & 0  & 56 & 64  \\
      (4)                         & 6                    & 0                                                           & 6  & 0  & 24 & 0  & 66  \\
      \hline
      \text{Total}                & 24                   & 8                                                           & 12 & 16 & 48 & 56 & 244
    \end{array}.
    \label{eq:sm-first-determinant-histogram}
  \end{equation}
  For example, $\pi=(123)$ and $\boldsymbol\epsilon=(1,1,1,0)$ give
  \begin{equation}
    A=\begin{pmatrix}
      3  & 1  & -1 & 1 \\
      -1 & 3  & 1  & 1 \\
      1  & -1 & 3  & 1 \\
      -1 & -1 & -1 & 3
    \end{pmatrix},
    \qquad
    \det A=144,
    \qquad
    \det(I_4+Q_{\boldsymbol\epsilon}RP_\pi)=9.
  \end{equation}
  Among the $16$ sign choices for this representative, only $(1,1,1,0)$ gives the value $9$.
  Since there are $\binom43\cdot2=8$ three-cycles, the value $9$ occurs eight times in total.
  In the sum in Eq.~\eqref{eq:sm-haar-determinant-moments}, each value enters as its $n$th power multiplied by its total count.
  The zero determinants do not contribute because $n\geq1$, so
  \begin{equation}
    \bar\nu
    =\frac{F_1}{16(d)_4},
    \qquad
    F_1=8\cdot9^n+12\cdot8^n+16\cdot4^n+48\cdot2^n+56.
    \label{eq:sm-first-moment-numerator}
  \end{equation}
  With $9^n=3^N$, $8^n=D^3$, $4^n=D^2$, $2^n=D$, and $16(d)_4=D(D+2)(D+4)(D+6)$, this is Eq.~\eqref{eq:sm-haar-exact-mean-sm}.

  For the large-$N$ expansion, we factor $8\cdot3^N$ out of the numerator and $D^4$ out of the denominator.
  The ratio of these factors is $8(3/4)^N$, and the term $12D^3$ in the numerator gives the relative correction $(3/2)(2\sqrt2/3)^N$.
  The other terms in the numerator give relative corrections of order $(2/3)^N$ or smaller, while the denominator divided by $D^4$ is $(1+2/D)(1+4/D)(1+6/D)=1+\mathcal{O}(2^{-N/2})$.
  These estimates prove Eq.~\eqref{eq:sm-two-copy-purity-asymptotic}.
  The expansion of $-\ln\bar\nu$ then follows from $-\ln(1+x)=-x+\mathcal{O}(x^2)$ applied to the relative correction.
  The square of the leading correction is of order $(8/9)^N$, while $2^{-N/2}$ decays faster, so the remainder is $\mathcal{O}((8/9)^N)$ as stated in Eq.~\eqref{eq:sm-two-copy-entropy-asymptotic}.
\end{proof}

\subsection{Exact second moment and concentration}

The second moment of the purity controls its fluctuations around the Haar mean.
Since the relative fluctuations decay exponentially in $N$, the MRE also concentrates, and its mean differs from $-\ln\bar\nu$ by an amount that vanishes as $N\to\infty$.

\begin{theorem}[Concentration of the purity and the MRE]
  \label{thm:sm-haar-concentration}
  For a Haar-random state in $\mathcal H_+$, the relative variance of the purity,
  \begin{equation}
    V_N=\frac{\operatorname{Var}_{\mathrm H}(\nu_2)}{\bar\nu^{\,2}}
    =\mathbb E_{\mathrm H}\!\left[\left(\frac{\nu_2}{\bar\nu}-1\right)^2\right],
    \label{eq:sm-relative-variance-definition}
  \end{equation}
  decays exponentially with $N$ as
  \begin{equation}
    V_N
    =144\left(\frac{25}{81}\right)^{N/2}
    +144\left(\frac{49}{162}\right)^{N/2}
    +\mathcal{O}\left(\left(\frac{200}{729}\right)^{N/2}\right).
    \label{eq:sm-relative-variance}
  \end{equation}
  Consequently, the MRE concentrates around $N\ln(4/3)-\ln8$.
  For every fixed $\epsilon>0$,
  \begin{equation}
    \lim_{N\to\infty}
    \Pr_{\mathrm H}\!\left(
    \left|M_2(\psi)-\left(N\ln\frac43-\ln8\right)\right|>\epsilon
    \right)=0,
    \label{eq:sm-haar-faf-mre-contrast}
  \end{equation}
  where $\Pr_{\mathrm H}$ denotes probability under the Haar measure.
  At every $N$, the Jensen gap between the mean MRE and $-\ln\bar\nu$ satisfies
  \begin{equation}
    0\leq \overline{M_2}+\ln\bar\nu
    \leq\left[1+4\ln\left(d\bar\nu\right)\right]V_N,
    \label{eq:sm-jensen-gap}
  \end{equation}
  so the mean MRE has the same asymptotic form,
  \begin{equation}
    \overline{M_2}=N\ln\frac43-\ln8+o(1)
    \qquad(N\to\infty).
    \label{eq:sm-haar-mean-and-typical-entropy}
  \end{equation}
  Equation~\eqref{eq:sm-haar-faf-mre-contrast} therefore also holds with $N\ln(4/3)-\ln8$ replaced by $\overline{M_2}$.
  In particular,
  \begin{equation}
    \lim_{N\to\infty}\frac{\overline{M_2}}{N}
    =\lim_{N\to\infty}\frac{-\ln\bar\nu}{N}
    =\ln\frac43.
    \label{eq:sm-two-copy-haar-density}
  \end{equation}
\end{theorem}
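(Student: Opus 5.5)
The proof has two parts. First I compute the second moment $\mathbb E_{\mathrm H}\nu_2^2$ exactly from Lemma~\ref{lem:sm-haar-determinant-moments} with $s=2$. Then I turn the resulting relative variance into concentration and a Jensen-gap bound.

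\emph{The second moment.} For $s=2$ the sum runs over $\pi\in S_8$ and $\boldsymbol\epsilon\in\{0,1\}^8$, with $R_2=R\oplus R$. As in Theorem~\ref{thm:sm-haar-mean}, the matrix $2(I_8+Q_{\boldsymbol\epsilon}R_2P_\pi)$ has integer entries, so each determinant is an exact rational number independent of $N$. I would reduce the enumeration using the symmetry group that preserves $R_2$: $S_4\wr S_2$, acting by conjugation on $\pi$ and permuting the sign vectors. I then list the determinant values with their multiplicities, which gives
\begin{equation*}
\mathbb E_{\mathrm H}\nu_2^2=\frac{\sum_k c_k\lambda_k^n}{2^8(d)_8}.
\end{equation*}
The dominant values should be easy to identify. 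Pairs $(\pi,\boldsymbol\epsilon)$ that factorize across the two blocks give products of first-moment values, and the top one, $81^n$ with count $64$, reproduces $\bar\nu^2$ at leading order. The variance comes from the corrections. These include the $1/(d)_8$ versus $1/(d)_4^2$ normalization mismatch, which contributes terms like $(72/D)(3/4)^{2N}$, and the genuinely block-mixing permutations, which contribute new values such as $25^n$ and $49\cdot 2^{n}$-type terms. Matching the exponents in the claimed formula, I expect the leading mixing values to be $\det=50$ or $25$, so that after dividing by $\bar\nu^2\sim 64\cdot(9/16)^N$ the result is $V_N=144(25/81)^{N/2}+144(49/162)^{N/2}+\dots$. \textbf{This enumeration is the main obstacle.} With $8!\cdot 2^8$ terms it is really a computer-assisted count. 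My first step would be to organize it by double cosets of $S_4\times S_4$ in $S_8$ and to check that every leading-order cancellation between $\mathbb E\nu_2^2$ and $\bar\nu^2$ is exact.

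\emph{Concentration.} Given $V_N\to0$, Chebyshev's inequality gives $\Pr(|\nu_2/\bar\nu-1|>\delta)\le V_N/\delta^2$. On the complementary event, $|M_2+\ln\bar\nu|\le 2\delta$ for small $\delta$. Combining this with Eq.~\eqref{eq:sm-two-copy-entropy-asymptotic}, $-\ln\bar\nu=N\ln(4/3)-\ln8+o(1)$, proves Eq.~\eqref{eq:sm-haar-faf-mre-contrast}.

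\emph{Jensen gap.} The lower bound $0\le\overline{M_2}+\ln\bar\nu$ is Jensen's inequality for the convex function $-\ln$. For the upper bound, set $x=\nu_2/\bar\nu$. The reduced state lives on the $D$-dimensional space of one copy, but parity splits it into two blocks of dimension $d$, so $\nu_2\ge1/d$ and hence $x\ge x_0=1/(d\bar\nu)$. I then use the elementary inequality $-\ln x\le(1-x)+C(1-x)^2$ for $x\ge x_0$, choosing $C=\max_{x\ge x_0}[-\ln x-(1-x)]/(1-x)^2$. This maximum is attained at $x=x_0$. Writing $L=\ln(d\bar\nu)$, which is positive since $x_0<1$, it equals $(L-1+x_0)/(1-x_0)^2$, and I would verify that this is at most $1+4L$. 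Taking expectations, $\mathbb E[1-x]=0$, so $\overline{M_2}+\ln\bar\nu\le C\,V_N$, which is Eq.~\eqref{eq:sm-jensen-gap}. Finally, $\ln(d\bar\nu)=O(N)$ and $NV_N\to0$ give Eq.~\eqref{eq:sm-haar-mean-and-typical-entropy}, and dividing by $N$ gives Eq.~\eqref{eq:sm-two-copy-haar-density}.
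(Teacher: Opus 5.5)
Your second-moment computation and your concentration argument follow the paper. The paper also evaluates Lemma~\ref{lem:sm-haar-determinant-moments} at $s=2$ by an exact computer count, using fraction-free integer elimination over all $8!\cdot2^8$ pairs with no symmetry reduction. It then applies Chebyshev's inequality to $X=\nu_2/\bar\nu$.

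Two of your predictions about what that count produces are wrong, however.
\begin{itemize}
\item The normalization mismatch $(d)_8/(d)_4^2=1+32\cdot2^{-n}+\mathcal{O}(4^{-n})$ does not survive in $V_N$. It is cancelled exactly by block-mixing pairs ($\pi\notin S_4\times S_4$) with determinant $81/2$ and multiplicity $2048=64\cdot32$. It has to cancel, since otherwise $V_N\approx-32\cdot2^{-N/2}<0$. In the same way, block-mixing terms at $36^n$ and $32^n$ cancel the mismatch corrections to the $72^n$ and $64^n$ terms of $F_1^2$.
\item The leading surviving determinants are $25$ and $49/2$, not $50$, each with multiplicity $9216=144\cdot64$. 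This gives $F_2-[(d)_8/(d)_4^2]F_1^2=9216\cdot25^n+9216\,(49/2)^n+\mathcal{O}((81/4)^n)$.
\end{itemize}

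For the Jensen gap you take a genuinely different route. The paper splits at $X=1/2$. On $X\ge1/2$ it uses $-\ln X+X-1\le(X-1)^2$. On $X<1/2$ it uses the floor $-\ln X\le\ln(d\bar\nu)$ together with $\Pr(X<1/2)\le4V_N$, which is where the factor $4$ comes from.

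You instead use a single pointwise inequality on $[x_0,\infty)$ with the optimal constant $C=\phi(x_0)$, where $\phi(x)=(x-1-\ln x)/(x-1)^2$. This works because $\phi(x)=\int_0^1 s\,\mathrm{d}s/[1+s(x-1)]$ is decreasing. The check you left open is short. With $x_0=e^{-L}$, the bound $e^{-L}\le1/(1+L)$ gives $L-1+e^{-L}\le L^2/(1+L)$ and $1-e^{-L}\ge L/(1+L)$, hence $C\le1+L$.

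Your route therefore yields $\overline{M_2}+\ln\bar\nu\le[1+\ln(d\bar\nu)]V_N$. This is sharper than Eq.~\eqref{eq:sm-jensen-gap} and avoids the second use of Chebyshev. The paper's split is cruder but needs only two elementary estimates.

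One justification needs repair. Splitting the reduced state into two $d$-dimensional parity blocks only gives $\nu_2\ge1/D$. The floor $\nu_2\ge1/d$ holds because the full convolution of two identical even inputs has entirely even output (Lemma~\ref{lem:sm-output-parity}, which follows from $US_{12}U^\dagger=\Pi_2$). So the odd block vanishes and $\mathcal C(\rho)$ has rank at most $d$.
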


\begin{proof}
  We first compute the second moment by repeating the enumeration of determinants for eight copies.
  For $s=2$, we write Eq.~\eqref{eq:sm-haar-determinant-moments} as
  \begin{equation}
    \mathbb E_{\mathrm H}\nu_2^2=\frac{F_2}{2^8(d)_8}.
    \label{eq:sm-second-moment-definition}
  \end{equation}
  For each of the $8!$ permutations $\pi$ and $2^8$ sign choices $\boldsymbol\epsilon$, we form the integer matrix
  \begin{equation}
    B_{ij}=2\left[\delta_{ij}+(-1)^{\epsilon_i}(R\oplus R)_{i,\pi(j)}\right],
    \qquad 1\leq i,j\leq8.
    \label{eq:sm-integer-determinant-enumeration}
  \end{equation}
  Since $B=2(I_8+Q_{\boldsymbol\epsilon}R_2P_\pi)$, the determinant in the Haar moment is $\det B/2^8$.
  We evaluate $\det B$ by fraction-free Gaussian elimination in exact integer arithmetic, which avoids floating-point rounding errors~\cite{bareiss1968sylvester}.
  We start with $r=1$ and, at each step $k=1,2,\ldots,7$, look for a nonzero entry in column $k$ among rows $i\geq k$.
  If there is none, the determinant is zero.
  Otherwise, we exchange rows if necessary to place this entry at $B_{kk}$, set $p=B_{kk}$, and update
  \begin{equation}
    B_{ij}\leftarrow\frac{pB_{ij}-B_{ik}B_{kj}}r
    \quad(i,j>k),
    \qquad
    B_{ik}\leftarrow0\quad(i>k),
    \qquad r\leftarrow p,
    \label{eq:sm-fraction-free-elimination}
  \end{equation}
  where $r$ on the right-hand side is the pivot of the preceding step.
  Every division is exact, and after the seventh step $B_{88}$ equals $\det B$ up to the sign of the row exchanges.
  Collecting equal values of $\det B/2^8$, each raised to the power $n$ and weighted by its number of occurrences, we obtain
  \begin{align}
    F_2={} & 64\cdot81^n+192\cdot72^n+144\cdot64^n
    +2048\left(\frac{81}{2}\right)^n
    +6400\cdot36^n+4992\cdot32^n
    \notag                                         \\
           & +9216\cdot25^n
    +9216\left(\frac{49}{2}\right)^n
    +30464\cdot18^n+55360\cdot16^n
    \notag                                         \\
           & +36864\left(\frac{25}{2}\right)^n
    +109952\cdot9^n+135488\cdot8^n
    +231424\left(\frac92\right)^n
    \notag                                         \\
           & +457984\cdot4^n+563456\cdot2^n
    +978496+834560\cdot2^{-n}.
    \label{eq:sm-exact-second-moment-numerator}
  \end{align}
  For example, $\det B=20736$ occurs 64 times and contributes $64\cdot81^n$, while $\det B=6400$ occurs 9216 times and contributes $9216\cdot25^n$.
  The zero determinant occurs $6{,}855{,}600$ times and contributes nothing because $n\geq1$.
  Together with the coefficients in Eq.~\eqref{eq:sm-exact-second-moment-numerator}, this accounts for all $8!\cdot2^8=10{,}321{,}920$ pairs $(\pi,\boldsymbol\epsilon)$.

  The relative variance is the ratio of the second moment to the squared mean, minus one.
  With $F_1$ from Eq.~\eqref{eq:sm-first-moment-numerator}, it reads
  \begin{align}
    V_N & =\frac{(d)_4^2F_2}{(d)_8F_1^2}-1,
    \notag                                               \\
    \frac{(d)_8}{(d)_4^2}
        & =\frac{(d+4)(d+5)(d+6)(d+7)}{d(d+1)(d+2)(d+3)}
    =1+32\cdot2^{-n}+\mathcal{O}(4^{-n}).
    \label{eq:sm-moment-normalization}
  \end{align}
  The numerator of $V_N$ is thus proportional to $F_2-[(d)_8/(d)_4^2]F_1^2$.
  When the normalization factor and $F_1^2$ are expanded, the terms proportional to $81^n$, $72^n$, $64^n$, $(81/2)^n$, $36^n$, and $32^n$ cancel against those in $F_2$.
  For example, $F_1^2$ contains $256\cdot36^n$ and $192\cdot72^n$, and the product of the latter with the correction $32\cdot2^{-n}$ adds $6144\cdot36^n$, so the total matches the coefficient $6400$ in $F_2$.
  The remaining terms are
  \begin{equation}
    F_2-\frac{(d)_8}{(d)_4^2}F_1^2
    =9216\cdot25^n+9216\left(\frac{49}{2}\right)^n
    +\mathcal{O}\left(\left(\frac{81}{4}\right)^n\right).
    \label{eq:sm-variance-cancellation}
  \end{equation}
  Dividing by $[(d)_8/(d)_4^2]F_1^2=64\cdot81^n[1+\mathcal{O}((8/9)^n)]$ gives the two leading terms in Eq.~\eqref{eq:sm-relative-variance}.
  The relative correction in this denominator multiplies $(25/81)^n$ and produces an error of order $(200/729)^n$, which dominates the error of order $(1/4)^n$ from the remainder in Eq.~\eqref{eq:sm-variance-cancellation}.

  We next show that the MRE of most Haar-random states is close to $-\ln\bar\nu$.
  Set $X=\nu_2/\bar\nu$, so that $\mathbb E_{\mathrm H}X=1$ and $\mathbb E_{\mathrm H}(X-1)^2=V_N$.
  Since $M_2+\ln\bar\nu=-\ln X$, the event $|M_2+\ln\bar\nu|>\epsilon$ requires $X<e^{-\epsilon}$ or $X>e^\epsilon$.
  Either condition implies $|X-1|>1-e^{-\epsilon}$, and Chebyshev's inequality gives
  \begin{equation}
    \Pr_{\mathrm H}\!\left(|M_2+\ln\bar\nu|>\epsilon\right)
    \leq\frac{V_N}{(1-e^{-\epsilon})^2}.
    \label{eq:sm-mre-deviation-bound}
  \end{equation}
  For every fixed $\epsilon>0$, the right-hand side tends to zero because $V_N\to0$.
  This bound and the expansion of $-\ln\bar\nu$ in Eq.~\eqref{eq:sm-two-copy-entropy-asymptotic} prove Eq.~\eqref{eq:sm-haar-faf-mre-contrast}.

  To bound the Jensen gap, we must also control the logarithm for rare states with small purity.
  For this purpose, we use a lower bound on the purity that holds for every state.
  The balanced beam splitter maps the antisymmetric Majorana combinations to $-\gamma_i^{(2)}$, so $US_{12}U^\dagger=\Pi_2$.
  Since $S_{12}\ket\psi^{\otimes2}=\ket\psi^{\otimes2}$ for identical even inputs, the second output has even parity, and conservation of the total parity makes the first output even as well.
  The output $\mathcal C(\rho)$ is thus supported on the $d$-dimensional space $\mathcal H_+$, and
  \begin{equation}
    \nu_2\geq d^{-1}.
    \label{eq:sm-two-copy-rank-floor}
  \end{equation}
  The lower bound on the gap, $\overline{M_2}+\ln\bar\nu\geq0$, is Jensen's inequality for the convex function $-\ln X$.
  For the upper bound, we use $\mathbb E_{\mathrm H}(X-1)=0$ to write the gap as $\mathbb E_{\mathrm H}[-\ln X+X-1]$ and split the expectation into the events $X\geq1/2$ and $X<1/2$.
  On the first event,
  \begin{equation}
    -\ln X+X-1\leq(X-1)^2,
    \label{eq:sm-log-local-bound}
  \end{equation}
  because the difference of the two sides has derivative $(2X-1)(X-1)/X$ and hence attains its minimum over $X\geq1/2$ at $X=1$, where it vanishes.
  On the second event, Eq.~\eqref{eq:sm-two-copy-rank-floor} gives $-\ln X+X-1\leq-\ln X\leq\ln(d\bar\nu)$, and Chebyshev's inequality bounds the probability of this event by $4V_N$.
  Adding the two contributions gives
  \begin{equation}
    0\leq\overline{M_2}+\ln\bar\nu
    =\mathbb E_{\mathrm H}[-\ln X+X-1]
    \leq V_N+4V_N\ln(d\bar\nu),
    \label{eq:sm-jensen-gap-proof}
  \end{equation}
  which is Eq.~\eqref{eq:sm-jensen-gap}.
  Since $d^{-1}\leq\bar\nu\leq1$, we have $0\leq\ln(d\bar\nu)\leq\ln d=\mathcal{O}(N)$.
  The gap is therefore $\mathcal{O}(NV_N)$, which tends to zero because $V_N$ decays exponentially.
  Hence $\overline{M_2}=-\ln\bar\nu+o(1)$, and Eq.~\eqref{eq:sm-two-copy-entropy-asymptotic} gives Eq.~\eqref{eq:sm-haar-mean-and-typical-entropy}.
  Division of these expansions by $N$ proves Eq.~\eqref{eq:sm-two-copy-haar-density}.
\end{proof}

\section{Proof of the upper bound on the MRE}
\label{sec:sm-upper-bound-proof}

In this section, we prove the universal upper bound on the MRE stated in the main text, allowing inputs of either parity.
We decompose the beam splitter into $N$ commuting rotations, one for each Majorana index, and apply them one at a time.
Each rotation decreases the purity by at most a factor $3/4$, while the last rotation increases it by exactly a factor $4/3$.
The main step is an identity that controls the interference between the two parity blocks during this process.
Throughout this section, $N=2n\geq2$, $d=2^{n-1}$, $\rho=\ket\psi\bra\psi$, and $\Pi\ket\psi=p\ket\psi$ with $p\in\{+1,-1\}$ and $\langle\psi|\psi\rangle=1$.
We use $[N]=\{1,2,\ldots,N\}$ and $\|A\|_{\mathrm F}^2=\Tr(A^\dagger A)$.

\begin{theorem}[Universal upper bound on the MRE]
  \label{thm:sm-mre-upper-bound}
  For every even $N\geq2$ and every normalized pure state $\ket\psi$ of fixed parity,
  \begin{equation}
    \nu_2(\psi)\geq\frac{16}{9}\left(\frac34\right)^N,
    \qquad
    M_2(\psi)\leq N\ln\frac43-\ln\frac{16}{9}.
    \label{eq:sm-universal-mre-bound}
  \end{equation}
\end{theorem}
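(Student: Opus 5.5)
We follow the strategy outlined in the main text: factor the balanced beam splitter into $N$ commuting single-index rotations $\exp[(\pi/8)\gamma_j^{(1)}\gamma_j^{(2)}]$, track the purity $\nu_\Lambda$ of the first output copy after the rotations labeled by $\Lambda\subseteq[N]$, and combine a one-step lower bound with an exact identity for the final rotation.

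\emph{Setup.} Fix a parity $p$ and an orthonormal basis $\{\ket{e_\alpha}\}_{\alpha=1}^d$ of the sector with parity $p$. For the next label $j\notin\Lambda$, set $\ket{o_\alpha}=\gamma_j\ket{e_\alpha}$, which gives an orthonormal basis of the opposite sector. Every rotation preserves the total parity $\Pi_1\Pi_2$, whose value on $\ket\psi^{\otimes2}$ is $p^2=1$. Hence the two copies always carry equal parities, and we can write
\[
U_\Lambda\ket\psi^{\otimes2}=\sum_{\alpha\beta}E_{\alpha\beta}\ket{e_\alpha}\ket{e_\beta}+O_{\alpha\beta}\ket{o_\alpha}\ket{o_\beta},
\]
with $\nu_\Lambda=\|EE^\dagger\|_{\mathrm F}^2+\|OO^\dagger\|_{\mathrm F}^2$.

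Next we check that $\gamma_j^{(1)}\gamma_j^{(2)}$ maps $\ket{e_\alpha}\ket{e_\beta}$ to $\pm\ket{o_\alpha}\ket{o_\beta}$ and vice versa, keeping track of the fermionic sign from passing $\gamma_j^{(2)}$ through copy 1. This yields the $2\times2$ rotation $E'=cE-sO$, $O'=sE+cO$, after fixing the sign convention of the $\ket{o}$ basis.

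\emph{One-step inequality.} Expand
\[
\nu'=\|E'E'^\dagger\|^2+\|O'O'^\dagger\|^2
\]
as a quartic polynomial in $c,s$ with matrix-valued coefficients built from $E$ and $O$. Evaluate it using $c^4+s^4=3/4$, $c^2s^2=1/8$, and $cs(c^2-s^2)=\sqrt2/8$ (these follow from $c^2=(2+\sqrt2)/4$).

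The terms odd in $s$ involve traces such as $\Tr(EE^\dagger EO^\dagger)$ with an odd number of $O$ factors. These must vanish or be controlled, and this is the \emph{main obstacle}. The plan is to exploit that the input is a product of identical states, $\ket\psi^{\otimes2}$, propagated by a swap-symmetric unitary. The fermionic swap $S_{12}$ commutes with each rotation $\gamma_j^{(1)}\gamma_j^{(2)}$ up to a sign and fixes $\ket\psi^{\otimes2}$ up to $\pm1$. This forces $E^\top=\pm E$ and $O^\top=\pm O$ with correlated signs, which kills the cross terms or converts them into squares.

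After this, we reorganize the remainder $\nu'-\frac34\nu_\Lambda$ into the manifestly nonnegative form
\[
\tfrac52\|O^\dagger E\|_{\mathrm F}^2+\tfrac14\|EO^\dagger+OE^\dagger\|_{\mathrm F}^2
\]
from Eq.~\eqref{eq:recursive-purity-bound}. To get there, we first try to complete squares using the symmetry relations. If the signs of the symmetry relations differ depending on $p$, we treat $p=\pm1$ separately.

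\emph{Final rotation and conclusion.} Let $j$ be the last label. After all $N$ rotations, the output is $\mathcal C(\rho)$, and the same argument as in Eq.~\eqref{eq:sm-two-copy-rank-floor} gives $US_{12}U^\dagger=\Pi_2$. The swap invariance $S_{12}\ket\psi^{\otimes2}=\ket\psi^{\otimes2}$ holds for either parity, because the fermionic swap of identical fixed-parity states yields a sign that the parity conjugation absorbs. It follows that both outputs lie in a definite sector after the final rotation.

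By relabeling which sector is called $e$, we take this to mean $O'=0$. Then $O=-(s/c)E$, so before the last rotation we have
\[
\nu_{[N]\setminus\{j\}}=(1+s^4/c^4)\|EE^\dagger\|^2,
\]
while afterwards $\nu_{[N]}=\|E'E'^\dagger\|^2=c^{-4}\|EE^\dagger\|^2$ because $E'=E/c$. Their ratio is $c^4+s^4=3/4$.

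Iterating the one-step bound $N-1$ times from $\nu_\varnothing=1$ gives
\[
\nu_2\ge\tfrac43\left(\tfrac34\right)^{N-1}=\tfrac{16}{9}\left(\tfrac34\right)^N,
\]
and taking $-\ln$ yields Eq.~\eqref{eq:sm-universal-mre-bound}.
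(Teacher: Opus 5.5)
Your decomposition, the block update $E'=cE-sO$, $O'=sE+cO$, and the terminal ratio $c^4+s^4=3/4$ all match the paper. The one-step inequality, however, which you rightly flag as the main obstacle, cannot be obtained by the mechanism you propose.

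With $X=EE^\dagger$, $Y=OO^\dagger$, $Z=EO^\dagger+OE^\dagger$, the expansion is $\nu_{\Lambda\cup\{j\}}=\frac34\nu_\Lambda+\frac12\Tr(XY)+\frac14\Tr(Z^2)+\frac12\Tr[(Y-X)Z]$. Note that $cs(c^2-s^2)=1/4$, not $\sqrt2/8$. The odd terms $\Tr[(Y-X)Z]$ do not vanish in general.

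The only information your swap argument supplies is that $U_\Lambda\ket\psi^{\otimes2}$ is invariant under $\Pi_2S_{12}$. It is this product, not $S_{12}$ itself, that commutes with the rotations and fixes $\ket\psi^{\otimes2}$ for both parities. This invariance gives exactly $E^\top=E$ and $O^\top=O$. These constraints are linear and admit, for example, $E=I_d$, $O=\lambda I_d$. For that choice, $\nu_{\Lambda\cup\{j\}}-\frac34\nu_\Lambda=d(\lambda^3+\frac32\lambda^2-\lambda)<0$ whenever $0<\lambda<1/2$. So no rearrangement or completion of squares using transpose relations can yield the bound. You need a nonlinear constraint on the pairs $(E,O)$ that are actually reachable from $\ket\psi^{\otimes2}$.

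The missing ingredient is the identity $\Tr[(Y-X)Z]=4\Tr(XY)$. It is very restrictive: for $E=I_d$, $O=\lambda I_d$ it holds only at $\lambda\in\{0,-\tan(\pi/8),\cot(\pi/8)\}$. The paper proves it by passing to four copies, in three steps.
\begin{enumerate}
\item It writes $\nu_\Lambda=\langle\mathcal B_\Lambda\rangle$ in $\ket\psi^{\otimes4}$. Inserting $(S_{13}\Pi_1)^\dagger$ is allowed because all four inputs are the same pure state, and it reduces $\mathcal B_\Lambda$ to a product $\prod_{i\in\Lambda}V_i$ of single-index factors.
\item It expresses both sides through $\kappa_{ab}=\langle\mathcal B_\Lambda\gamma_j^{(a)}\gamma_j^{(b)}\rangle$. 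The right side comes from $\kappa_{13}=-\Tr(\sigma\gamma_j\sigma\gamma_j)=-2\Tr(XY)$. The left side comes from the derivative of the purity in the rotation angle, giving $\Tr[(Y-X)Z]=\frac12(\kappa_{14}+\kappa_{23}-\kappa_{12}-\kappa_{34})$.
\item It uses the signed copy permutations $P$ and $Q$. These mix copies across the pairs $(1,2)$ and $(3,4)$ while preserving each $V_i$ and $\ket\psi^{\otimes4}$. They force the $\kappa_{ab}$ to agree up to sign, which gives $\Tr[(Y-X)Z]=-2\kappa_{13}=4\Tr(XY)$.
\end{enumerate}
These cross-pair symmetries are not visible in the single two-copy state $U_\Lambda\ket\psi^{\otimes2}$, which is why your two-copy swap symmetry cannot replace them.

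A minor point: $S_{12}\ket\psi^{\otimes2}=p\ket\psi^{\otimes2}$, not $\ket\psi^{\otimes2}$, so the outputs have parity $p$. This is harmless for your terminal step, since the ratio $3/4$ is unchanged if the roles of the two blocks are exchanged.
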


The proof uses two lemmas.
The first states that, for identical pure inputs, the complete convolution preserves the parity of each copy.
This is not true for a partial convolution, which can populate both local parity sectors.

\begin{lemma}[Parity of the convolution output]
  \label{lem:sm-output-parity}
  Let $U=\prod_{i=1}^N\exp[(\pi/8)\gamma_i^{(1)}\gamma_i^{(2)}]$ be the balanced beam splitter.
  Then
  \begin{equation}
    \Pi_1U\ket\psi^{\otimes2}
    =\Pi_2U\ket\psi^{\otimes2}
    =p\,U\ket\psi^{\otimes2}.
    \label{eq:sm-convolution-output-parity}
  \end{equation}
\end{lemma}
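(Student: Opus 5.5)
The plan is to deduce the statement from the identity $US_{12}U^\dagger=\Pi_2$ already used in the proof of Theorem~\ref{thm:sm-haar-concentration}, together with the symmetry of $\ket\psi^{\otimes2}$ under the fermionic SWAP and conservation of total parity. All three facts are statements about how Gaussian unitaries act on Majorana operators, and we only need to fix the phases.

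\emph{Step 1: the beam splitter turns the SWAP into $\Pi_2$.} The fermionic SWAP $S_{12}$ is the parity operator of the $N$ antisymmetric modes $(\gamma_i^{(1)}-\gamma_i^{(2)})/\sqrt2$. This follows by the same uniqueness argument used for $S_{13}$ in Lemma~\ref{lem:sm-haar-determinant-moments}: it flips those combinations, fixes the symmetric ones, and leaves the vacuum invariant. Each factor $\exp[(\pi/8)\gamma_i^{(1)}\gamma_i^{(2)}]$ rotates the pair $(\gamma_i^{(1)},\gamma_i^{(2)})$ by angle $\pi/4$. It therefore maps the antisymmetric combination to $\pm\gamma_i^{(2)}$ and the symmetric one to $\pm\gamma_i^{(1)}$. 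Consequently $US_{12}U^\dagger$ flips every $\gamma_i^{(2)}$ and fixes every $\gamma_i^{(1)}$, so it equals $\Pi_2$ up to a phase. To fix the phase, I would compare the two operators on the vacuum of both copies: $U$ acts on the vacuum only through pair terms $\gamma_{2m-1}^{(1)}\gamma_{2m-1}^{(2)}+\gamma_{2m}^{(1)}\gamma_{2m}^{(2)}$, which conserve the parity of copy 2. The cleanest route is to note that both $S_{12}$ and $\Pi_2$ are the parity operators of a set of $n$ complex modes, namely the antisymmetric modes and the modes of copy 2. Since $U$ maps one set of modes into the other (possibly up to daggers), the conjugated parity operator is again a parity operator, and its sign is fixed by its action on the vacuum. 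Checking this sign convention is the step I expect to be the main obstacle.

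\emph{Step 2: apply the identity to the symmetric input.} For identical inputs of parity $p$, $S_{12}\ket\psi^{\otimes2}=\ket\psi^{\otimes2}$. The reordering sign is $(-1)^{k_1k_2}$ with $k_1\equiv k_2$, which is $+1$ for even $p$. For odd $p$ the sign is $-1$ per term, and this is absorbed by the sign convention of $S_{12}$ relative to the ordinary swap, so in general $S_{12}\ket\psi^{\otimes2}=c\ket\psi^{\otimes2}$ with a computable $c=\pm1$. Then
\[
\Pi_2U\ket\psi^{\otimes2}=US_{12}\ket\psi^{\otimes2}=c\,U\ket\psi^{\otimes2}.
\]
This equation also shows $c$ is consistent with being the parity of copy 2.

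\emph{Step 3: use total parity to identify the eigenvalue.} Each rotation is a product of two Majoranas, so $U$ commutes with $\Pi_1\Pi_2$. Since $\Pi_1\Pi_2\ket\psi^{\otimes2}=p^2\ket\psi^{\otimes2}=\ket\psi^{\otimes2}$, the output has $\Pi_1\Pi_2=1$, and hence $\Pi_1$ and $\Pi_2$ take the same eigenvalue on it. It remains to show this common eigenvalue is $p$ rather than $-p$. For this I would pin down $c$ in Step 2 by a one-mode check ($N=2$, with $\ket\psi=\ket0$ or $\ket1$). The statement is covariant under the Gaussian unitaries acting identically on both copies, and both sides depend only on the sign $c$, which is a fixed function of $p$. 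So verifying the case $\ket\psi=\ket{0^n}$ (where $p=+1$) and $\ket\psi=f_1^\dagger\ket{0^n}$ (where $p=-1$) directly, using $U\ket0\ket0=\ket0\ket0$ and the explicit rotation of $f_1^{(1)\dagger}f_1^{(2)\dagger}$, determines $c=p$. This finishes the proof.
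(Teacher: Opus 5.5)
Your argument is the paper's: you use the identity $US_{12}U^\dagger=\Pi_2$, the eigenvalue of $\ket\psi^{\otimes2}$ under the fermionic SWAP, and conservation of $\Pi_1\Pi_2$. The paper avoids the sign questions you defer to example checks in two ways. It writes $S_{12}=i^n\prod_i(\gamma_i^{(1)}-\gamma_i^{(2)})/\sqrt2$ and conjugates factorwise to $i^n\prod_i(-\gamma_i^{(2)})=\Pi_2$, using that $N$ is even. It also reads off $S_{12}\ket\psi^{\otimes2}=p\ket\psi^{\otimes2}$ directly from your sign $(-1)^{k_1k_2}$, which equals $-1$, not $+1$, for odd inputs. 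Separately, the pair terms in $U$ conserve total particle number, not copy-2 parity; that is why $U$ fixes the vacuum.
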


\begin{proof}
  The fermionic SWAP $S_{12}$ exchanges the Majorana operators of the two copies and leaves the tensor product of their Fock vacua invariant.
  It multiplies the antisymmetric combinations $(\gamma_i^{(1)}-\gamma_i^{(2)})/\sqrt2$ by $-1$ and leaves the symmetric combinations unchanged, so it is the parity operator of the antisymmetric combinations,
  \begin{equation}
    S_{12}
    =i^n\prod_{i=1}^N\frac{\gamma_i^{(1)}-\gamma_i^{(2)}}{\sqrt2}.
    \label{eq:sm-swap-antisymmetric-parity}
  \end{equation}
  The product is ordered by increasing $i$, and its phase agrees with that of $S_{12}$ because both operators leave the product vacuum invariant.
  The beam splitter acts as
  \begin{equation}
    U\gamma_i^{(1)}U^\dagger
    =\frac{\gamma_i^{(1)}-\gamma_i^{(2)}}{\sqrt2},
    \qquad
    U\gamma_i^{(2)}U^\dagger
    =\frac{\gamma_i^{(1)}+\gamma_i^{(2)}}{\sqrt2}.
    \label{eq:sm-balanced-majorana-action}
  \end{equation}
  Hence $U$ maps each antisymmetric combination to $-\gamma_i^{(2)}$, and since $N$ is even, $US_{12}U^\dagger=i^n\prod_i(-\gamma_i^{(2)})=\Pi_2$.
  For two identical states of parity $p$, the fermionic SWAP gives $S_{12}\ket\psi^{\otimes2}=p\ket\psi^{\otimes2}$.
  Therefore $\Pi_2U\ket\psi^{\otimes2}=US_{12}\ket\psi^{\otimes2}=pU\ket\psi^{\otimes2}$, which means that the second output has parity $p$.
  The unitary $U$ preserves the total parity $\Pi_1\Pi_2$, whose eigenvalue on the input is $p^2=1$, so the first output also has parity $p$.
\end{proof}

The second lemma controls the purity at intermediate stages, when only some of the rotations have been applied.
For $\Lambda\subseteq[N]$, we define the partial convolution and the purity of its reduced output by
\begin{equation}
  U_\Lambda
  =\prod_{i\in\Lambda}\exp\left(\frac\pi8\gamma_i^{(1)}\gamma_i^{(2)}\right),
  \qquad
  \nu_\Lambda
  =\Tr\left[\left(\Tr_2 U_\Lambda\rho^{\otimes2}U_\Lambda^\dagger\right)^2\right].
  \label{eq:sm-partial-convolution-definition}
\end{equation}
The factors commute, $\nu_\varnothing=1$, and $\nu_{[N]}=\nu_2(\psi)$.

\begin{lemma}[Purity under partial convolution]
  \label{lem:sm-partial-convolution}
  Let $\ket\psi$ be a normalized pure state of fixed parity and $\Lambda\subseteq[N]$.
  For every $j\notin\Lambda$,
  \begin{equation}
    \nu_{\Lambda\cup\{j\}}\geq\frac34\nu_\Lambda.
    \label{eq:sm-partial-purity-inequality}
  \end{equation}
  Moreover, when all but one of the factors have been applied, the exact relation
  \begin{equation}
    \nu_{[N]\setminus\{j\}}=\frac34\nu_{[N]}
    \label{eq:sm-terminal-identity}
  \end{equation}
  holds for every $j\in[N]$.
\end{lemma}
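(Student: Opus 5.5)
The plan is to reduce both claims to algebra on two coefficient matrices, following the block decomposition sketched in the main text. Fix $j\notin\Lambda$. Choose an orthonormal basis $\{\ket{e_\alpha}\}$ of the parity-$p$ sector and set $\ket{o_\alpha}=\gamma_j\ket{e_\alpha}$. The partial convolution $U_\Lambda$ preserves total parity, and the input has total parity $p^2=+1$. Hence both copies always carry the same parity, and
\[
U_\Lambda\ket\psi^{\otimes2}=\sum_{\alpha\beta}E_{\alpha\beta}\ket{e_\alpha}_1\ket{e_\beta}_2+\sum_{\alpha\beta}O_{\alpha\beta}\ket{o_\alpha}_1\ket{o_\beta}_2 .
\]
The reduced state of copy 1 is then block diagonal, so $\nu_\Lambda=\Tr[(EE^\dagger)^2]+\Tr[(OO^\dagger)^2]$.

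\emph{Action of one rotation.} Next I verify that $\exp[(\pi/8)\gamma_j^{(1)}\gamma_j^{(2)}]$ mixes only the two blocks, giving $E'=cE-sO$ and $O'=sE+cO$ with $c=\cos(\pi/8)$ and $s=\sin(\pi/8)$. This needs some care with fermionic signs when $\gamma_j^{(2)}$ passes through copy 1, but it can be absorbed into the phase convention for $\ket{o_\alpha}$.

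\emph{Expanding the purity.} With $X=EO^\dagger+OE^\dagger$, direct expansion uses $c^4+s^4=3/4$, $c^2s^2=1/8$ and $cs(c^2-s^2)=1/4$. It gives
\[
\nu_{\Lambda\cup\{j\}}-\tfrac34\nu_\Lambda=\tfrac12\|O^\dagger E\|_{\mathrm F}^2+\tfrac14\|X\|_{\mathrm F}^2-\tfrac12\Tr[(EE^\dagger-OO^\dagger)X].
\]
To match the identity stated in the main text, it then suffices to prove
\[
\Tr[(EE^\dagger-OO^\dagger)X]=-4\|O^\dagger E\|_{\mathrm F}^2 .
\]
This is where identical inputs must enter; for generic $E,O$ the cross term has no sign.

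\emph{The main obstacle: the cross-term identity.} I would extract it from the fermionic swap. As in Lemma~\ref{lem:sm-output-parity}, $S_{12}\ket\psi^{\otimes2}=p\ket\psi^{\otimes2}$. Moreover $S_{12}$ maps $\gamma_i^{(1)}\gamma_i^{(2)}$ to $-\gamma_i^{(1)}\gamma_i^{(2)}$, so $S_{12}U_\Lambda S_{12}=U_\Lambda^\dagger$. Consequently $U_\Lambda^\dagger\ket\psi^{\otimes2}$ has coefficient matrices equal to $E^{\top},O^{\top}$, up to signs fixed by the $\gamma_j$ convention. Also $U_\Lambda^\dagger\ket\psi^{\otimes2}=U_\Lambda^{-2}U_\Lambda\ket\psi^{\otimes2}$, where $U_\Lambda^{2}$ is a product of $\pi/4$ rotations, i.e.\ a Gaussian unitary acting locally as $(\gamma_i^{(1)},\gamma_i^{(2)})\mapsto(\mp\gamma_i^{(2)},\pm\gamma_i^{(1)})$ for $i\in\Lambda$. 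Comparing these two descriptions should yield a linear relation between $(E,O)$ and $(E^{\top},O^{\top})$ conjugated by fixed unitaries. I expect this relation to turn $\Tr(EE^\dagger X)$ and $\Tr(OO^\dagger X)$ into multiples of $\Tr(EE^\dagger OO^\dagger)$.

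If this route proves too messy, the fallback is to work directly with $\rho^{\otimes2}$ and write each term of $\nu_\Lambda$ as a four-copy expectation of a Gaussian operator, as in Eq.~\eqref{eq:purity-representations}. The cross term then becomes an expectation that is antisymmetric under exchanging copies, and it can be simplified using permutation symmetry of $\rho^{\otimes4}$. Once the identity holds, Eq.~\eqref{eq:sm-partial-purity-inequality} follows because the right-hand side is a sum of squared norms.

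\emph{Terminal identity.} Take $\Lambda=[N]\setminus\{j\}$. After the final rotation the full beam splitter has been applied, so Lemma~\ref{lem:sm-output-parity} says both outputs have parity $p$; that is, $O'=0$. Since $c\neq0$, this gives $O=-(s/c)E$ and $E'=cE+(s^2/c)E=E/c$. Therefore
\[
\nu_{[N]}=c^{-4}\Tr[(EE^\dagger)^2],\qquad \nu_{[N]\setminus\{j\}}=\bigl(1+s^4/c^4\bigr)\Tr[(EE^\dagger)^2],
\]
whose ratio is $c^4+s^4=3/4$. This is Eq.~\eqref{eq:sm-terminal-identity}.
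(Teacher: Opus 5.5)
Your block decomposition, the update rule $E'=cE-sO$, $O'=sE+cO$, the expansion of $\nu_{\Lambda\cup\{j\}}$, the reduction to $\Tr[(EE^\dagger-OO^\dagger)Z]=-4\|O^\dagger E\|_{\mathrm F}^2$ with $Z=EO^\dagger+OE^\dagger$, and the terminal identity all match the paper. The gap is that this cross-term identity is never proved, and it carries the entire content of the inequality. Your primary route cannot prove it.

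The relation you would extract from $S_{12}U_\Lambda S_{12}=U_\Lambda^\dagger$ uses only the exchange symmetry $S_{12}\ket\psi^{\otimes2}=p\ket\psi^{\otimes2}$. An argument resting on it alone would therefore apply verbatim to $U_\Lambda\ket\chi$ for any exchange-symmetric two-copy state $\ket\chi$, and there the identity is false. Take $\Lambda=\varnothing$, $N=4$, $p=+1$. Exchange symmetry only forces $E=E^{\top}$ and $O=-O^{\top}$. Choose $E\propto I_2$ and $O\propto bJ$ with $J=\bigl(\begin{smallmatrix}0&1\\-1&0\end{smallmatrix}\bigr)$. Then $Z\propto(b-\bar b)J$, so $\Tr[(EE^\dagger-OO^\dagger)Z]=0$, while $\|O^\dagger E\|_{\mathrm F}^2\propto2|b|^2\neq0$. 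The product structure of identical pure inputs must enter somewhere, and it does not enter through this relation. Separately, $U_\Lambda^{2}$ exchanges the Majoranas in $\Lambda$ between the copies. That is entangling, so it does not act on $(E,O)$ by fixed unitaries on the left and right.

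The missing idea is the paper's four-copy argument. First write
\[
\nu_\Lambda=\bra\psi^{\otimes4}\mathcal B_\Lambda\ket\psi^{\otimes4},\qquad \mathcal B_\Lambda=(U_{\Lambda,12}U_{\Lambda,34})^\dagger S_{13}\Pi_1(U_{\Lambda,12}U_{\Lambda,34})(S_{13}\Pi_1)^\dagger.
\]
The last factor may be inserted because it fixes $\ket\psi^{\otimes4}$; this is exactly where purity and identical inputs are used. The insertion also turns $\mathcal B_\Lambda$ into a product $\prod_{i\in\Lambda}V_i$ of single-index factors.

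Next, both sides of the identity become correlators $\kappa_{ab}=\langle\mathcal B_\Lambda\gamma_j^{(a)}\gamma_j^{(b)}\rangle$:
\begin{itemize}
\item directly, $\Tr(EE^\dagger OO^\dagger)=-\kappa_{13}/2$;
\item differentiating the extra factor $V_j(\theta)$ at $\theta=0$ gives $\Tr[(OO^\dagger-EE^\dagger)Z]=\tfrac12(\kappa_{14}+\kappa_{23}-\kappa_{12}-\kappa_{34})$.
\end{itemize}

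Finally, every $V_i$ and $\ket\psi^{\otimes4}$ are invariant under a signed $3$-cycle of copies $1,2,3$ and a signed double transposition $(12)(34)$. These force $\kappa_{12}=\kappa_{13}=\kappa_{34}=-\kappa_{14}=-\kappa_{23}=-\kappa_{24}$, so the cross term equals $-2\kappa_{13}=4\Tr(EE^\dagger OO^\dagger)$.

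Your fallback points this way, but two cautions apply. The cross term is not removed by an antisymmetry; it is generically nonzero and has to be mapped onto $\kappa_{13}$. And the symmetries that achieve this are signed permutations mixing the pairs $(1,2)$ and $(3,4)$ of the specific operator $\mathcal B_\Lambda$, not merely the permutation invariance of $\rho^{\otimes4}$.
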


\begin{proof}
  We track the coefficient matrix of the two-copy pure state, whose singular values are its Schmidt coefficients.
  For this purpose, we represent the Majorana operators of the two copies on an ordinary tensor product as
  \begin{equation}
    \gamma_i^{(1)}=\gamma_i\otimes I,
    \qquad
    \gamma_i^{(2)}=\Pi\otimes\gamma_i.
    \label{eq:sm-tensor-copy-majoranas}
  \end{equation}
  The generators $\gamma_i^{(1)}\gamma_i^{(2)}$ commute and square to $-I$, so that
  \begin{equation}
    U_\Lambda=\prod_{i\in\Lambda}\left(\cos\frac\pi8+\sin\frac\pi8\gamma_i^{(1)}\gamma_i^{(2)}\right).
    \label{eq:sm-partial-beam-splitter}
  \end{equation}
  We fix the label $j\notin\Lambda$ of the next rotation, choose an orthonormal basis $\{\ket{e_\alpha}\}_{\alpha=1}^d$ of the sector with parity $p$, and set $\ket{o_\alpha}=\gamma_j\ket{e_\alpha}$, which gives an orthonormal basis of the opposite sector.
  Since the total parity remains even, the two copies always have the same parity, and the amplitudes of the two possible local parities can be collected into two matrices $E$ and $O$ as
  \begin{equation}
    U_\Lambda\ket\psi^{\otimes2}
    =\sum_{\alpha,\beta=1}^d E_{\alpha\beta}\ket{e_\alpha}_1\ket{e_\beta}_2
    +p\sum_{\alpha,\beta=1}^d O_{\alpha\beta}\ket{o_\alpha}_1\ket{o_\beta}_2.
  \end{equation}
  The row and column indices of $E$ and $O$ label basis states of copies 1 and 2, respectively.
  For even input, $E$ contains the even-even amplitudes and $O$ the odd-odd amplitudes.
  For odd input, their roles are reversed, and the factor $p$ fixes the relative sign so that the formulas below apply to either input parity.
  For $\Lambda=\varnothing$, we have $E=\psi\psi^\top$ and $O=0$, where $\psi$ is the coordinate vector of the input in the chosen basis.
  The reduced state of copy 1 is block diagonal with blocks $EE^\dagger$ and $OO^\dagger$, whose traces are the probabilities of the two local parities.
  Hence
  \begin{equation}
    \nu_\Lambda=\Tr[(EE^\dagger)^2]+\Tr[(OO^\dagger)^2].
  \end{equation}

  With this choice of basis, the next rotation acts directly on the two matrices.
  Indeed, $\gamma_j^{(1)}\gamma_j^{(2)}=(\gamma_j\Pi)\otimes\gamma_j$ sends $\ket{e_\alpha}_1\ket{e_\beta}_2$ to $p\ket{o_\alpha}_1\ket{o_\beta}_2$ and $\ket{o_\alpha}_1\ket{o_\beta}_2$ to $-p\ket{e_\alpha}_1\ket{e_\beta}_2$.
  The rotation labeled by $j$ therefore acts as
  \begin{equation}
    E'=\cos\frac\pi8 E-\sin\frac\pi8 O,
    \qquad
    O'=\sin\frac\pi8 E+\cos\frac\pi8 O.
    \label{eq:sm-block-update}
  \end{equation}
  This update rule already gives the terminal relation.
  Suppose that $j$ is the last label, so that $\Lambda=[N]\setminus\{j\}$.
  By Lemma~\ref{lem:sm-output-parity}, both copies have parity $p$ after the last rotation, so only the block with parity $p$ in each copy survives.
  Since $O$ denotes the block with parity opposite to that of the input, Eq.~\eqref{eq:sm-block-update} gives $O'=0$ after the last rotation and hence $O=-\tan(\pi/8)E$ immediately before it.
  The purities before and after the last rotation are therefore
  \begin{align}
    \nu_{[N]\setminus\{j\}}
              & =\bigl[1+\tan^4(\pi/8)\bigr]\Tr[(EE^\dagger)^2],
    \notag                                                       \\
    \nu_{[N]} & =\frac{\Tr[(EE^\dagger)^2]}{\cos^4(\pi/8)}.
    \label{eq:sm-terminal-block-purities}
  \end{align}
  Their ratio is $[1+\tan^4(\pi/8)]\cos^4(\pi/8)=\cos^4(\pi/8)+\sin^4(\pi/8)=3/4$, which proves Eq.~\eqref{eq:sm-terminal-identity}.

  For the inequality at a general step, we expand the purity after the rotation in terms of the two blocks of the reduced state and their interference term,
  \begin{equation}
    X=EE^\dagger,
    \qquad
    Y=OO^\dagger,
    \qquad
    Z=EO^\dagger+OE^\dagger.
    \label{eq:sm-block-quadratics}
  \end{equation}
  We expand $\Tr[(E'E'^\dagger)^2]+\Tr[(O'O'^\dagger)^2]$ using $\cos^4(\pi/8)+\sin^4(\pi/8)=3/4$, $\cos^2(\pi/8)\sin^2(\pi/8)=1/8$, and $\cos(\pi/8)\sin(\pi/8)[\cos^2(\pi/8)-\sin^2(\pi/8)]=1/4$, and obtain
  \begin{equation}
    \nu_{\Lambda\cup\{j\}}
    =\frac34\nu_\Lambda
    +\frac12\Tr(XY)
    +\frac14\Tr(Z^2)
    +\frac12\Tr[(Y-X)Z].
    \label{eq:sm-one-step-expansion}
  \end{equation}
  The first two correction terms are nonnegative because $X$ and $Y$ are positive semidefinite and $Z$ is Hermitian, but the last term is not manifestly nonnegative.
  We show below that it satisfies
  \begin{equation}
    \Tr[(Y-X)Z]=4\Tr(XY).
    \label{eq:sm-parity-block-identity}
  \end{equation}
  With this identity, Eq.~\eqref{eq:sm-one-step-expansion} expresses the change in purity under a single rotation as a sum of squared norms,
  \begin{equation}
    \nu_{\Lambda\cup\{j\}}-\frac34\nu_\Lambda
    =\frac52\norm{O^\dagger E}_{\mathrm F}^2
    +\frac14\norm{EO^\dagger+OE^\dagger}_{\mathrm F}^2
    \geq0,
    \label{eq:sm-recursive-purity-bound}
  \end{equation}
  which proves Eq.~\eqref{eq:sm-partial-purity-inequality}.

  To prove Eq.~\eqref{eq:sm-parity-block-identity}, we write both of its sides in terms of a single expectation value in the four-copy input state.
  On the ordinary tensor product of four Fock spaces, we represent the Majorana operators of the four copies as
  \begin{equation}
    \gamma_i^{(a)}
    =\Pi^{\otimes(a-1)}\otimes\gamma_i\otimes I^{\otimes(4-a)},
    \qquad a=1,2,3,4.
    \label{eq:sm-four-copy-tensor-majoranas}
  \end{equation}
  Let $\ket\Phi=U_\Lambda\ket\psi^{\otimes2}$ be the two-copy state before the rotation labeled by $j$.
  We write $\langle A\rangle=\bra\psi^{\otimes4}A\ket\psi^{\otimes4}$ for expectation values in the four-copy input and denote by $U_{\Lambda,ab}$ the partial convolution $U_\Lambda$ acting on copies $a$ and $b$.
  The two local parities in $\ket\Phi$ agree, so the operator $S_{13}\Pi_1$ acts on $\ket\Phi^{\otimes2}$ as the ordinary SWAP of copies 1 and 3.
  Here, the factor $\Pi_1$ cancels the sign produced by the fermionic SWAP, including the signs from moving past copy 2.
  Moreover, $(S_{13}\Pi_1)\ket\psi^{\otimes4}=\ket\psi^{\otimes4}$ for either input parity.
  The swap trick therefore gives
  \begin{equation}
    \nu_\Lambda=\langle\mathcal B_\Lambda\rangle,
    \qquad
    \mathcal B_\Lambda=(U_{\Lambda,12}U_{\Lambda,34})^\dagger S_{13}\Pi_1(U_{\Lambda,12}U_{\Lambda,34})(S_{13}\Pi_1)^\dagger.
    \label{eq:sm-partial-four-copy-purity}
  \end{equation}
  The last factor can be inserted because it leaves $\ket\psi^{\otimes4}$ invariant, and this is the step that uses the purity of the input.
  The expectation values that enter the proof are
  \begin{equation}
    \kappa_{ab}=\langle\mathcal B_\Lambda\gamma_j^{(a)}\gamma_j^{(b)}\rangle,
    \qquad 1\leq a<b\leq4,
  \end{equation}
  and we show that both sides of Eq.~\eqref{eq:sm-parity-block-identity} equal $-2\kappa_{13}$.

  The right-hand side is related to $\kappa_{13}$ directly, because $\kappa_{13}$ measures the overlap of the two parity blocks.
  The insertion $\gamma_j^{(1)}\gamma_j^{(3)}$ commutes with $S_{13}\Pi_1$ and, because $j\notin\Lambda$, with both partial convolutions.
  Undoing the conjugation in Eq.~\eqref{eq:sm-partial-four-copy-purity} gives
  \begin{equation}
    \kappa_{13}=\bra\Phi^{\otimes2}S_{13}\Pi_1\gamma_j^{(1)}\gamma_j^{(3)}\ket\Phi^{\otimes2}=-\Tr(\sigma\gamma_j\sigma\gamma_j),
    \label{eq:sm-partial-overlap-expectation}
  \end{equation}
  where $\sigma=\Tr_2\ket\Phi\bra\Phi$ is the reduced state of copy 1.
  To check the sign, consider a component of $\ket\Phi^{\otimes2}$ in which copies 1 and 2 have parity $(-1)^r$ and copies 3 and 4 have parity $(-1)^s$, with $r,s\in\{0,1\}$.
  The insertion flips the parities of copies 1 and 3, after which $S_{13}\Pi_1$ contributes the sign $(-1)^{r+s}$.
  A nonzero overlap with $\bra\Phi^{\otimes2}$ requires $r\neq s$, which gives the minus sign in Eq.~\eqref{eq:sm-partial-overlap-expectation}.
  In the bases chosen above,
  \begin{equation}
    \sigma=\begin{pmatrix}X&0\\0&Y\end{pmatrix},
    \qquad
    \gamma_j=\begin{pmatrix}0&I\\I&0\end{pmatrix},
    \qquad
    \Tr(\sigma\gamma_j\sigma\gamma_j)=2\Tr(XY),
  \end{equation}
  and thus $\kappa_{13}=-2\Tr(XY)$.

  The left-hand side requires the explicit form of $\mathcal B_\Lambda$.
  The operator $S_{13}\Pi_1$ leaves the product vacuum invariant and has the factorization
  \begin{equation}
    S_{13}\Pi_1
    =\prod_{i=1}^N\exp\left(\frac\pi4\gamma_i^{(1)}\gamma_i^{(3)}\right)
    =\prod_{i=1}^N\frac{1+\gamma_i^{(1)}\gamma_i^{(3)}}{\sqrt2},
  \end{equation}
  where the two sides act identically on all Majorana operators and leave the product vacuum invariant, which fixes their relative phase.
  Conjugation by $(U_{\Lambda,12}U_{\Lambda,34})^\dagger$ replaces $\gamma_i^{(1)}$ and $\gamma_i^{(3)}$ by $(\gamma_i^{(1)}+\gamma_i^{(2)})/\sqrt2$ and $(\gamma_i^{(3)}+\gamma_i^{(4)})/\sqrt2$ when $i\in\Lambda$.
  The factors with $i\notin\Lambda$ cancel against the last factor in Eq.~\eqref{eq:sm-partial-four-copy-purity}, leaving
  \begin{equation}
    \mathcal B_\Lambda=\prod_{i\in\Lambda}V_i,
    \qquad
    V_i=\frac12\left[1+\frac{(\gamma_i^{(1)}+\gamma_i^{(2)})(\gamma_i^{(3)}+\gamma_i^{(4)})}{2}\right]
    (1-\gamma_i^{(1)}\gamma_i^{(3)}).
    \label{eq:sm-partial-four-copy-factors}
  \end{equation}
  By the Clifford relations, each factor expands as
  \begin{equation}
    4V_i=3+\gamma_i^{(1)}\gamma_i^{(2)}\gamma_i^{(3)}\gamma_i^{(4)}-\gamma_i^{(1)}\gamma_i^{(2)}-\gamma_i^{(1)}\gamma_i^{(3)}+\gamma_i^{(1)}\gamma_i^{(4)}+\gamma_i^{(2)}\gamma_i^{(3)}+\gamma_i^{(2)}\gamma_i^{(4)}-\gamma_i^{(3)}\gamma_i^{(4)}.
    \label{eq:sm-partial-four-copy-expansion}
  \end{equation}
  To isolate the interference term, we replace the angle $\pi/8$ of the rotation labeled by $j$ with a variable $\theta$.
  Let $\nu(\theta)$ be the reduced purity of $\exp(\theta\gamma_j^{(1)}\gamma_j^{(2)})\ket\Phi$.
  In the two-copy description, Eq.~\eqref{eq:sm-block-update} with $\pi/8$ replaced by $\theta$ shows that the two reduced blocks have derivatives $-Z$ and $Z$ at $\theta=0$, so
  \begin{equation}
    \nu'(0)=2\Tr[(Y-X)Z].
    \label{eq:sm-partial-purity-derivative}
  \end{equation}
  In the four-copy description, the same rotation multiplies $\mathcal B_\Lambda$ by one more factor, $\nu(\theta)=\langle\mathcal B_\Lambda V_j(\theta)\rangle$, where
  \begin{equation}
    V_j(\theta)=\frac12\bigl[1+(\cos2\theta\,\gamma_j^{(1)}+\sin2\theta\,\gamma_j^{(2)})(\cos2\theta\,\gamma_j^{(3)}+\sin2\theta\,\gamma_j^{(4)})\bigr](1-\gamma_j^{(1)}\gamma_j^{(3)}).
  \end{equation}
  By the Clifford relations, its derivative at $\theta=0$ is
  \begin{equation}
    V_j'(0)=\gamma_j^{(1)}\gamma_j^{(4)}+\gamma_j^{(2)}\gamma_j^{(3)}
    -\gamma_j^{(1)}\gamma_j^{(2)}-\gamma_j^{(3)}\gamma_j^{(4)}.
  \end{equation}
  Comparing the two expressions for $\nu'(0)$ gives
  \begin{equation}
    \Tr[(Y-X)Z]=\frac12\left(\kappa_{14}+\kappa_{23}-\kappa_{12}-\kappa_{34}\right).
    \label{eq:sm-partial-interference-expectation}
  \end{equation}

  It remains to reduce this combination to $\kappa_{13}$, for which we use two symmetries.
  Both $\mathcal B_\Lambda$ and the input are invariant under the signed permutations of the copies
  \begin{align}
    P:\quad(\gamma_i^{(1)},\gamma_i^{(2)},\gamma_i^{(3)},\gamma_i^{(4)})
     & \longmapsto(\gamma_i^{(2)},-\gamma_i^{(3)},-\gamma_i^{(1)},\gamma_i^{(4)}),
    \notag                                                                         \\
    Q:\quad(\gamma_i^{(1)},\gamma_i^{(2)},\gamma_i^{(3)},\gamma_i^{(4)})
     & \longmapsto(\gamma_i^{(2)},-\gamma_i^{(1)},-\gamma_i^{(4)},\gamma_i^{(3)}),
    \label{eq:sm-partial-replica-symmetries}
  \end{align}
  applied simultaneously to every Majorana index $i$.
  Substitution in Eq.~\eqref{eq:sm-partial-four-copy-expansion} shows that each transformation preserves $V_i$ and hence $\mathcal B_\Lambda$.
  Each is implemented by an even permutation of the copies together with parity operators on two copies, so it also leaves $\ket\psi^{\otimes4}$ invariant, including when $p=-1$.
  Invariance under $P$ gives $\kappa_{13}=\kappa_{12}=-\kappa_{23}$ and $\kappa_{14}=\kappa_{24}=-\kappa_{34}$, while invariance under $Q$ gives $\kappa_{14}=\kappa_{23}$.
  Together, these relations imply
  \begin{equation}
    \kappa_{12}=\kappa_{13}=\kappa_{34}
    =-\kappa_{14}=-\kappa_{23}=-\kappa_{24}.
    \label{eq:sm-partial-bilinear-relations}
  \end{equation}
  Each of the four terms in Eq.~\eqref{eq:sm-partial-interference-expectation} therefore equals $-\kappa_{13}$, and $\Tr[(Y-X)Z]=-2\kappa_{13}=4\Tr(XY)$.
  This proves Eq.~\eqref{eq:sm-parity-block-identity} and completes the proof.
\end{proof}

\begin{proof}[Proof of Theorem~\ref{thm:sm-mre-upper-bound}]
  We choose any ordering of the $N$ Majorana indices and denote the last one by $j$.
  Starting from $\nu_\varnothing=1$, we apply Eq.~\eqref{eq:sm-partial-purity-inequality} to the first $N-1$ rotations and then use Eq.~\eqref{eq:sm-terminal-identity} to obtain
  \begin{equation}
    \nu_2(\psi)
    =\nu_{[N]}
    =\frac43\nu_{[N]\setminus\{j\}}
    \geq\frac43\left(\frac34\right)^{N-1}
    =\frac{16}{9}\left(\frac34\right)^N.
    \label{eq:sm-terminal-purity-bound}
  \end{equation}
  The bound on $M_2$ follows by taking the negative logarithm.
\end{proof}

Together with the Haar mean, this bound determines the maximal MRE density in the large-$N$ limit, although it need not be saturated at finite $N$.

\begin{corollary}[Maximal MRE density]
  \label{cor:sm-maximal-mre-density}
  Let $M_{2,\max}(N)$ be the maximum of $M_2$ over normalized pure states in the even-parity sector.
  Then
  \begin{equation}
    \overline{M_2}\leq M_{2,\max}(N)
    \leq N\ln\frac43-\ln\frac{16}{9},
    \qquad
    \lim_{N\to\infty}\frac{M_{2,\max}(N)}{N}=\ln\frac43.
    \label{eq:sm-maximal-mre-density}
  \end{equation}
\end{corollary}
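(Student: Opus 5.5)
The corollary packages three results already established, so the plan is to chain them rather than prove anything new.

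\emph{Upper bound.} This is immediate from Theorem~\ref{thm:sm-mre-upper-bound}, specialized to $p=+1$. That theorem gives $M_2(\psi)\leq N\ln(4/3)-\ln(16/9)$ for every normalized pure state in $\mathcal H_+$, and taking the supremum over such states yields the right inequality.

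Before writing $M_{2,\max}(N)$ as a maximum, I would note that it is actually attained. The map $\ket\psi\mapsto\nu_2(\psi)=\bra\psi^{\otimes4}\mathcal R\ket\psi^{\otimes4}$ is continuous on the compact unit sphere of $\mathcal H_+$, and Eq.~\eqref{eq:sm-two-copy-rank-floor} gives $\nu_2\geq d^{-1}>0$. Hence $M_2=-\ln\nu_2$ is continuous and bounded, and its supremum is a maximum.

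\emph{Lower bound.} A mean cannot exceed a maximum, so for every state $M_2(\psi)\leq M_{2,\max}(N)$. Integrating this against the Haar measure on $\mathcal H_+$ gives $\overline{M_2}\leq M_{2,\max}(N)$. The expectation is finite because $M_2$ is bounded between $0$ and $\ln d$.

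\emph{Density limit.} Dividing the sandwich by $N$ gives
\[
\frac{\overline{M_2}}{N}\leq\frac{M_{2,\max}(N)}{N}\leq\ln\frac43-\frac{1}{N}\ln\frac{16}{9}.
\]
By Eq.~\eqref{eq:sm-two-copy-haar-density} of Theorem~\ref{thm:sm-haar-concentration}, the left side tends to $\ln(4/3)$. The right side tends to the same value. The squeeze theorem then gives the limit.

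There is no genuine obstacle here. The only points needing care are the attainment of the maximum and the finiteness of the Haar expectation, both of which follow from the lower bound $\nu_2\geq d^{-1}$ on the purity.
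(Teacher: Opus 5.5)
Your proposal is correct and follows essentially the same argument as the paper. The Haar mean lower-bounds the maximum, Theorem~\ref{thm:sm-mre-upper-bound} supplies the upper bound, and the asymptotics of $\overline{M_2}$ from Theorem~\ref{thm:sm-haar-concentration} combined with a squeeze give the density; your added remarks on attainment of the maximum and finiteness of the Haar expectation via $\nu_2\geq d^{-1}$ are valid details that the paper leaves implicit.
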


\begin{proof}
  The maximum is at least the Haar mean, and Theorem~\ref{thm:sm-mre-upper-bound} supplies the upper bound.
  Theorem~\ref{thm:sm-haar-concentration} gives $\overline{M_2}=N\ln(4/3)-\ln8+o(1)$.
  The limit follows by dividing the two bounds by $N$ and taking $N\to\infty$.
\end{proof}

\section{Real spinors and a construction above the Haar mean}
\label{sec:sm-above-haar-construction}

In this section, we prove Theorem~\ref{thm:real-spinors-above-haar} of the main text.
We first define the real structure and compute the mean purity of uniformly distributed real spinors exactly.
For infinitely many $N$, this mean lies below the Haar mean, and we turn the resulting averaging argument into a deterministic construction of states whose MRE provably exceeds the Haar mean at finite $N$.
Throughout this section, $N=2n\geq8$ is divisible by eight, $D=2^n$, and $d=D/2$.
We use $f_m=(\gamma_{2m-1}-i\gamma_{2m})/2$ and the occupation basis $\ket{x_1x_2\cdots x_n}=(f_1^\dagger)^{x_1}(f_2^\dagger)^{x_2}\cdots(f_n^\dagger)^{x_n}\ket{0^n}$.

\subsection{The real structure and an explicit basis}

\begin{lemma}[Real structure of the even-parity spinors]
  \label{lem:sm-real-structure}
  Let $K$ be complex conjugation in the occupation basis and define
  \begin{equation}
    T=\gamma_1\gamma_3\cdots\gamma_{2n-1}K.
    \label{eq:sm-real-spinor-antiunitary}
  \end{equation}
  Then $T$ is antiunitary, $T^2=1$, $T\Pi T^{-1}=\Pi$, and $T\gamma_jT^{-1}=-\gamma_j$ for every $j$.
  Its fixed space in $\mathcal H_+$ has real dimension $d$ and admits a basis that is orthonormal as a complex basis of $\mathcal H_+$.
\end{lemma}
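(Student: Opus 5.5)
The plan is to verify each property of $T$ in the Jordan--Wigner representation on the occupation basis, and then to obtain the basis of the fixed space from the standard structure theory of an antiunitary involution. First I will note that the operators $f_m$ have real matrix elements (all entries $0$ or $\pm1$) in the occupation basis $\ket{x_1\cdots x_n}$. Hence $\gamma_{2m-1}=f_m+f_m^\dagger$ is a real matrix, while $\gamma_{2m}=i(f_m-f_m^\dagger)$ is purely imaginary. Consequently
\[
K\gamma_{2m-1}K=\gamma_{2m-1},\qquad K\gamma_{2m}K=-\gamma_{2m}.
\]
The operator $C=\gamma_1\gamma_3\cdots\gamma_{2n-1}$ is a product of real unitaries, so it is real and unitary, and $T=CK$ is therefore antiunitary.

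Next comes the sign bookkeeping, which I expect to be the only delicate point. Throughout I will use that $N\in8\mathbb N$ means $n\equiv0\pmod 4$.
\begin{itemize}
\item Conjugating an odd Majorana by $C$: $\gamma_{2m-1}$ commutes with its own factor and anticommutes with the other $n-1$ factors. This gives $C\gamma_{2m-1}C^{-1}=(-1)^{n-1}\gamma_{2m-1}=-\gamma_{2m-1}$, since $n$ is even.
\item Conjugating an even Majorana by $C$: $\gamma_{2m}$ anticommutes with all $n$ factors, so $C\gamma_{2m}C^{-1}=(-1)^n\gamma_{2m}=\gamma_{2m}$.
\item Combining with the action of $K$ gives $T\gamma_jT^{-1}=-\gamma_j$ for every $j$.
\item For the square, $T^2=CKCK=C\bar C=C^2$. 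Reversing the product of $n$ mutually anticommuting Majoranas gives $C^2=(-1)^{n(n-1)/2}$, which equals $+1$ for $n\equiv0\pmod4$.
\item For the parity, $T\Pi T^{-1}=\overline{i^n}\,(-1)^N\gamma_1\cdots\gamma_N$. Since $i^n=\pm1$ is real for even $n$ and $N$ is even, this equals $\Pi$.
\end{itemize}
In particular $T$ preserves $\mathcal H_+$.

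Finally I will treat $T$ as an antiunitary involution on the $d$-dimensional complex space $\mathcal H_+$ and set $V=\{\psi\in\mathcal H_+:T\psi=\psi\}$, which is a real subspace.
\begin{itemize}
\item Every $\psi$ decomposes as $\psi=\tfrac12(\psi+T\psi)+i\cdot\tfrac1{2i}(\psi-T\psi)$, and both brackets lie in $V$ because $T$ is antilinear with $T^2=1$. Hence $\mathcal H_+=V+iV$.
\item For $\phi,\psi\in V$, antiunitarity gives $\langle\phi,\psi\rangle=\langle T\phi,T\psi\rangle=\overline{\langle\phi,\psi\rangle}$, so the inner product is real on $V$.
\item It follows that $V\cap iV=0$: if $\psi$ and $i\psi$ both lie in $V$, then $\langle\psi,i\psi\rangle=i\|\psi\|^2$ must be real, forcing $\psi=0$.
\end{itemize}
Gram--Schmidt in $V$ with respect to this real inner product yields vectors that are orthonormal in the complex sense and span $\mathcal H_+$ over $\mathbb C$. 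There are therefore exactly $d$ of them, so $\dim_{\mathbb R}V=d$.

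If an explicit basis is wanted, I would take $C$ restricted to the occupation states of $\mathcal H_+$. It maps $\ket x\mapsto\pm\ket{\bar x}$ with $\bar x$ the complement of $x$, and $\bar x\neq x$ so the states pair up. The combinations $(\ket x+T\ket x)/\sqrt2$ and $i(\ket x-T\ket x)/\sqrt2$, one pair per orbit $\{x,\bar x\}$, are then orthonormal, fixed by $T$, and give $d$ vectors in total. This explicit version is optional; the abstract argument already suffices for the lemma.
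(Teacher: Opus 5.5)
Your proof is correct. Your verification of the operator identities is the same as the paper's: reality of $\gamma_{2m-1}$ and imaginarity of $\gamma_{2m}$, the signs $(-1)^{n-1}$ and $(-1)^n$ under conjugation by $C$, $C^2=(-1)^{n(n-1)/2}$, and $\overline{i^n}(-1)^N=1$.

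Where you differ is the fixed space. Your main argument is the general real-structure decomposition: $\mathcal H_+=V\oplus iV$ with the complex inner product real-valued on $V$, followed by Gram--Schmidt. The paper instead writes down the explicit complementary-pair basis $\ket{e_{x,+}}=(\ket x+s_x\ket{\bar x})/\sqrt2$, $\ket{e_{x,-}}=i(\ket x-s_x\ket{\bar x})/\sqrt2$. It chooses representatives with $x_1=0$, computes the sign $s_x=(-1)^{\sum_m(n-m)x_m}$ explicitly, and fixes a lexicographic order.

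Your abstract route is cleaner and applies to any antiunitary involution without combinatorics. It fully proves the lemma as stated. The paper's explicit route is the one used downstream. The deterministic construction of states above the Haar mean applies the method of conditional expectations in the specified order of this basis. It also needs the basis coefficients to lie in $\mathbb Q(i)$ up to the factor $1/\sqrt2$, so that the degree-eight polynomial $P_N$ has rational coefficients and every comparison can be made exactly. An abstractly chosen Gram--Schmidt basis gives no such control, so the ``optional'' explicit version you sketch at the end is in fact the one the paper relies on.

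A minor point: your step $V\cap iV=0$ is redundant. Complex orthonormality of the Gram--Schmidt vectors already gives their complex linear independence, which together with spanning yields exactly $d$ of them.
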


\begin{proof}
  In the occupation basis, $\gamma_{2m-1}$ has real matrix elements and $\gamma_{2m}$ has imaginary ones.
  The product $\gamma_1\gamma_3\cdots\gamma_{2n-1}$ is therefore real and commutes with $K$, so $T^2=(\gamma_1\gamma_3\cdots\gamma_{2n-1})^2=(-1)^{n(n-1)/2}=1$ because $n$ is divisible by four.
  Since $n$ is even, conjugation by this product changes the sign of each odd-indexed Majorana operator and leaves each even-indexed one unchanged, while $K$ does the opposite.
  Thus $T\gamma_jT^{-1}=-\gamma_j$ for every $j$.
  Moreover, $i^n$ is real because $n$ is even, and the $N$ sign changes multiply to $(-1)^N=1$, so $T\Pi T^{-1}=\Pi$.

  For the deterministic construction below, we also need an explicit ordered basis of the fixed space.
  The operator $T$ maps each occupation string to its complement, so we choose one representative from each complementary pair.
  For each occupation string $x$ with an even number of occupied modes and $x_1=0$, let $\bar x$ be its bitwise complement and set $s_x=(-1)^{\sum_{m=1}^n(n-m)x_m}$.
  Applying the odd-indexed Majorana operators from right to left gives $T\ket x=s_x\ket{\bar x}$, and $T^2=1$ then gives $T\ket{\bar x}=s_x\ket x$.
  Each pair therefore contains the two $T$-fixed vectors
  \begin{equation}
    \ket{e_{x,+}}=\frac{\ket x+s_x\ket{\bar x}}{\sqrt2},
    \qquad
    \ket{e_{x,-}}=\frac{i(\ket x-s_x\ket{\bar x})}{\sqrt2},
    \label{eq:sm-explicit-real-spinor-basis}
  \end{equation}
  which are orthonormal over the complex numbers.
  The complementary pairs exhaust the even-parity sector, so these vectors form an orthonormal basis of $\mathcal H_+$, and the fixed space has real dimension $d$.
  We order the strings $x$ lexicographically and place $e_{x,+}$ before $e_{x,-}$.
\end{proof}

As in the main text, we call a $T$-fixed vector a real spinor, although its coefficients in the occupation basis need not be real.
We denote the uniform average over the unit sphere of the fixed space by $\mathbb E_{\mathbb R}$ and the mean purity of real spinors by $\bar\nu_{\mathbb R}=\mathbb E_{\mathbb R}\nu_2$.

\subsection{Mean purity of uniformly sampled real spinors}

The determinant formula in Lemma~\ref{lem:sm-haar-determinant-moments} relies on the unitary invariance of the Haar measure and does not apply directly to real spinors.
We instead expand the four-copy operator $\mathcal R$ in products of Majorana operators, which reduces the average to moments of real quadratic forms on the sphere.
For $A=\{i_1<i_2<\cdots<i_\ell\}\subseteq[N]$, we write $\gamma_A=\gamma_{i_1}\gamma_{i_2}\cdots\gamma_{i_\ell}$ and $\gamma_\varnothing=I$.
A coloring $c:[N]\to\{1,2,3,4\}$ defines four blocks $A_a=c^{-1}(a)$ and the inversion number $\operatorname{inv}(c)=|\{(i,j):i<j,\ c(i)>c(j)\}|$.

\begin{lemma}[Majorana expansion of the purity]
  \label{lem:sm-real-correlation-expansion}
  For a normalized $\ket\psi\in\mathcal H_+$,
  \begin{equation}
    \nu_2(\psi)
    =2^{-N}
    \sum_{\substack{c:[N]\to\{1,2,3,4\}\\ |A_a|\ {\rm even}\ \text{for all }a}}
    (-1)^{\operatorname{inv}(c)}
    \prod_{a=1}^4\bra\psi\gamma_{A_a}\ket\psi.
    \label{eq:sm-real-correlation-expansion}
  \end{equation}
\end{lemma}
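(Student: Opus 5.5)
We start from the four-copy representation $\nu_2(\psi)=\bra\psi^{\otimes4}\mathcal R\ket\psi^{\otimes4}$ of Eq.~\eqref{eq:purity-representations}, together with the factorized form of $\mathcal R$ from the proof of Lemma~\ref{lem:sm-haar-determinant-moments}. There $\mathcal R$ is identified with the parity operator of the symmetric modes $\widetilde\gamma_i=\frac12\sum_a\gamma_i^{(a)}$. The plan is to write $\mathcal R$ as an ordered product over Majorana indices of single-index factors, expand each factor in monomials, and then reorder the resulting product into copy-by-copy blocks.

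\emph{Factorization.} Since $\mathcal R$ is the parity of the $N$ symmetric Majoranas and leaves the vacuum invariant, up to a phase we have $\mathcal R=i^n\widetilde\gamma_1\widetilde\gamma_2\cdots\widetilde\gamma_N$, with $\widetilde\gamma_i=\frac12\sum_{a}\gamma_i^{(a)}$. We fix the phase by checking the action on the vacuum, or by comparing with the $N=2$ case. Expanding the product gives
\[
\mathcal R=i^n2^{-N}\sum_{c}\gamma^{(c(1))}_1\gamma^{(c(2))}_2\cdots\gamma^{(c(N))}_N .
\]

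\emph{Reordering into copy blocks.} For each coloring $c$, we move the operators into copy order $1,2,3,4$, keeping increasing $i$ within each copy. Different copies anticommute, so each transposition of an inversion pair costs a sign, and the total sign is $(-1)^{\operatorname{inv}(c)}$. The result is $\gamma^{(1)}_{A_1}\gamma^{(2)}_{A_2}\gamma^{(3)}_{A_3}\gamma^{(4)}_{A_4}$. Next we pass to the tensor-product representation of Eq.~\eqref{eq:sm-four-copy-tensor-majoranas}, where $\gamma^{(a)}_i=\Pi^{\otimes(a-1)}\otimes\gamma_i\otimes I$. The string picks up parity factors $\Pi^{|A_2|+|A_3|+|A_4|}$ on copy 1, and similarly on the other copies. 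Acting on $\ket\psi^{\otimes4}$ with $\Pi\ket\psi=\ket\psi$, these factors are harmless. Each copy then contributes $\bra\psi\gamma_{A_a}\ket\psi$, which vanishes unless $|A_a|$ is even because $\gamma_{A_a}$ would otherwise flip parity. This restricts the sum to colorings with all blocks even.

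\emph{Fixing the phase.} The remaining issue is the global phase $i^n$ and the sign conventions. For all-even block sizes we must show that the prefactor becomes exactly $+1$. To do this we compare with the Gaussian case: for the vacuum, $\nu_2=1$. Alternatively, we check the term $c\equiv1$, which gives $\bra\psi i^n\gamma_{[N]}\ket\psi=\bra\psi\Pi\ket\psi=1$. This shows that the $i^n$ is absorbed consistently once we write $\mathcal R$ with the correct ordering. I expect this phase bookkeeping to be the main obstacle, since a reordering may introduce extra factors depending on $|A_a|$. Those factors do reduce to $1$ precisely because all blocks are even.
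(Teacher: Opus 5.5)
Your route is the paper's: write $\mathcal R=i^n2^{-N}\prod_i(\gamma_i^{(1)}+\gamma_i^{(2)}+\gamma_i^{(3)}+\gamma_i^{(4)})$, expand over colorings, sort by copy to get $(-1)^{\operatorname{inv}(c)}$, discard odd blocks, and factorize. Your handling of the $\Pi$ strings in the tensor-product representation is fine. The gap is in your last paragraph. The factor $i^n$ is a global constant that multiplies every coloring, so neither reordering nor the evenness of the blocks can absorb it. What your argument actually establishes, for every even $N$, is
\[
\nu_2(\psi)=i^n\,2^{-N}\sum_{c}(-1)^{\operatorname{inv}(c)}\prod_{a=1}^4\bra\psi\gamma_{A_a}\ket\psi .
\]

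Your own check shows this. The coloring $c\equiv1$ contributes $2^{-N}\bra\psi\gamma_{[N]}\ket\psi=2^{-N}i^{-n}$ to the right-hand side of the lemma as displayed, and only after multiplying by $i^n$ does it give $2^{-N}$. Likewise, normalizing with the vacuum ($\nu_2=1$) only confirms that the phase of $\mathcal R$ is $i^n$; it does not remove it. Concretely, for the vacuum at $N=4$ the displayed sum gives $2^{-4}(-4-12)=-1$ while $\nu_2=1$, and at $N=2$ it gives $-i$.

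The missing ingredient is the standing hypothesis of that section: $N$ is divisible by eight, so $n$ is divisible by four and $i^n=1$. This is exactly how the paper closes the proof, and the statement is false without that hypothesis.
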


\begin{proof}
  As shown in the proof of Lemma~\ref{lem:sm-haar-determinant-moments}, $\mathcal R$ is the parity operator of the symmetric combinations $\sum_av_a\gamma_i^{(a)}$ with $\bm v$ in Eq.~\eqref{eq:sm-replica-reflection}.
  Its action on the Majorana operators, together with the requirement that it leave the Fock vacuum invariant, fixes the representation
  \begin{equation}
    \mathcal R
    =\frac{i^n}{2^N}
    \prod_{i=1}^N
    \left(\gamma_i^{(1)}+\gamma_i^{(2)}
    +\gamma_i^{(3)}+\gamma_i^{(4)}\right),
    \label{eq:sm-real-reflection-product}
  \end{equation}
  with the product ordered by increasing $i$.
  The normalized combinations $(\gamma_i^{(1)}+\gamma_i^{(2)}+\gamma_i^{(3)}+\gamma_i^{(4)})/2$ satisfy the Majorana relations, so their parity operator reverses them and leaves the orthogonal combinations unchanged.
  The complex fermions formed from these combinations annihilate the product vacuum, which fixes the overall phase to agree with that of $\mathcal R$.

  Each term in the expansion of the product assigns every Majorana index to one of the four copies and thus defines a coloring $c$.
  The term has the coefficient $2^{-N}i^n$, and sorting its Majorana operators by copy gives the sign $(-1)^{\operatorname{inv}(c)}$.
  Terms with a block of odd size vanish, because $\gamma_A$ with odd $|A|$ has zero expectation value in a parity eigenstate.
  When all blocks are even, the operators on different copies have even parity, so the expectation value factorizes without further signs from fermionic ordering.
  Since $i^n=1$ for $N\equiv0\pmod8$, this proves Eq.~\eqref{eq:sm-real-correlation-expansion}.
\end{proof}

\begin{lemma}[Moments of commuting quadratic forms on the sphere]
  \label{lem:sm-sphere-moments}
  Let $j\geq1$ be an integer and let $u$ be uniformly distributed on the unit sphere of $\mathbb R^d$.
  Let $M_1,M_2,\ldots,M_j$ be commuting real symmetric $d\times d$ matrices.
  Suppose that $M_1M_2\cdots M_j=\eta I_d$ with $\eta\in\{+1,-1\}$, and that $\Tr\prod_{a\in S}M_a=0$ for every nonempty proper subset $S\subsetneq[j]$.
  Then
  \begin{equation}
    \mathbb E_u\prod_{a=1}^j(u^\top M_au)
    =\eta\,\frac{2^{j-1}(j-1)!}{(d+2)(d+4)\cdots(d+2j-2)},
    \label{eq:sm-real-sphere-single-cycle}
  \end{equation}
  where the denominator is the empty product, equal to one, when $j=1$.
\end{lemma}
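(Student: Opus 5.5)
We diagonalize the commuting family simultaneously and reduce the claim to a moment computation for Gaussian vectors. Since the $M_a$ are commuting real symmetric matrices, there is a single real orthogonal matrix $W$ such that $W^\top M_aW=\operatorname{diag}(m_a(1),\ldots,m_a(d))$ for every $a$. The uniform measure on the sphere is invariant under $W$, so we may assume from the start that all $M_a$ are diagonal.

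The two hypotheses then become statements about the diagonal entries. The condition $M_1\cdots M_j=\eta I_d$ means $\prod_a m_a(k)=\eta$ for every $k$, so no $m_a(k)$ vanishes. The trace condition says $\sum_k\prod_{a\in S}m_a(k)=0$ for every nonempty proper subset $S\subsetneq[j]$.

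Next we pass to a Gaussian vector. Let $g\sim\mathcal N(0,I_d)$ and write $g=|g|u$, where $|g|$ and $u$ are independent. The product $\prod_a(g^\top M_ag)$ is homogeneous of degree $2j$ in $g$, so
\begin{equation}
  \mathbb E_g\prod_a(g^\top M_ag)=\mathbb E|g|^{2j}\,\mathbb E_u\prod_a(u^\top M_au),
\end{equation}
with
\begin{equation}
  \mathbb E|g|^{2j}=d(d+2)\cdots(d+2j-2).
\end{equation}
It therefore suffices to show that the Gaussian moment equals $\eta\,2^{j-1}(j-1)!\,d$.

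We compute the Gaussian moment from the joint cumulants of the quadratic forms $X_a=g^\top M_ag$. The standard formula for Gaussian quadratic forms gives, for any multiset $S$ of indices of size $|S|$,
\begin{equation}
  \kappa(X_a:a\in S)=2^{|S|-1}(|S|-1)!\,\Tr\prod_{a\in S}M_a .
\end{equation}
Commutativity is what makes all cyclic orderings of the product contribute equally here. The joint moment $\mathbb E\prod_{a=1}^jX_a$ is the sum over set partitions of $[j]$ of the products of cumulants over the blocks. By hypothesis, every block that is a nonempty proper subset of $[j]$ has vanishing trace, hence vanishing cumulant. The only partition with no proper block is the single block $[j]$, and it contributes
\begin{equation}
  2^{j-1}(j-1)!\,\Tr(\eta I_d)=\eta\,2^{j-1}(j-1)!\,d .
\end{equation}
Dividing by $d(d+2)\cdots(d+2j-2)$ gives Eq.~\eqref{eq:sm-real-sphere-single-cycle}. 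For $j=1$ the same computation reads $\mathbb E_u\,u^\top M_1u=\Tr M_1/d=\eta$, consistent with the empty-product convention.

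The only step that needs care is the cumulant formula for Gaussian quadratic forms. If it cannot be quoted directly, we derive it from the joint cumulant generating function
\begin{equation}
  \ln\mathbb E\,e^{\sum_at_aX_a}=-\tfrac12\ln\det\Bigl(I-2\sum_at_aM_a\Bigr)=\tfrac12\sum_{r\geq1}\frac{2^r}{r}\Tr\Bigl(\sum_at_aM_a\Bigr)^r .
\end{equation}
We then extract the coefficient of $t_{a_1}\cdots t_{a_r}$ for distinct indices. Commutativity makes all $r!$ orderings of the product equal, so this coefficient is $2^{r-1}(r-1)!\,\Tr(M_{a_1}\cdots M_{a_r})$. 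We also need to make sure the moment–cumulant expansion uses only cumulants with distinct indices, which holds because each $X_a$ appears once in the product.
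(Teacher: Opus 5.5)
Your proof is correct and follows essentially the same route as the paper: reduce to a Gaussian moment via $g=|g|u$ with $\mathbb E|g|^{2j}=d(d+2)\cdots(d+2j-2)$, then note that only the fully connected contribution survives, because every block of a nontrivial decomposition is a nonempty proper subset of $[j]$ and so has vanishing trace. Your moment--cumulant expansion over set partitions repackages the paper's grouping of Wick pairings by their cycles, with the connected count $2^{j-1}(j-1)!$ read off from the $\ln\det$ generating function instead of counted directly; your simultaneous diagonalization step is harmless but never used.
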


\begin{proof}
  We separate the length and direction of a Gaussian vector.
  Let $g\in\mathbb R^d$ have independent standard Gaussian coordinates and write $g=ru$, where $u$ is uniform on the unit sphere and independent of $r$.
  Since $\mathbb E r^{2j}=d(d+2)\cdots(d+2j-2)$,
  \begin{equation}
    \mathbb E_u\prod_{a=1}^j(u^\top M_au)
    =\frac{\mathbb E_g\prod_{a=1}^j(g^\top M_ag)}{d(d+2)\cdots(d+2j-2)}.
    \label{eq:sm-real-sphere-gaussian-moment}
  \end{equation}
  We evaluate the numerator by Wick's theorem, representing each quadratic form by a vertex with two coordinate factors.
  Because each vertex has two factors, a pairing decomposes into cycles, and summing the coordinate indices around a cycle gives the trace of the product of its matrices.
  Commutativity makes each trace depend only on the subset of matrices in the cycle.
  For example, the three pairings for $j=2$ give
  \begin{equation*}
    \mathbb E_g[(g^\top M_1g)(g^\top M_2g)]
    =\Tr M_1\,\Tr M_2+2\Tr(M_1M_2).
  \end{equation*}
  The first term represents two separate cycles, and the second represents the two pairings that join the vertices.
  If a pairing has more than one cycle, each cycle involves a nonempty proper subset of the matrices, and the pairing vanishes by assumption.
  For $j\geq3$, a single cycle has $(j-1)!/2$ unoriented cyclic orders and two choices at each vertex for which coordinate factor is paired with each neighbor, giving $2^{j-1}(j-1)!$ contractions.
  For $j=1$ there is one such pairing, and for $j=2$ there are two, in agreement with the same expression.
  Each contraction with a single cycle contributes $\Tr(\eta I_d)=\eta d$, and dividing by the denominator in Eq.~\eqref{eq:sm-real-sphere-gaussian-moment} gives Eq.~\eqref{eq:sm-real-sphere-single-cycle}.
\end{proof}

\begin{lemma}[Exact mean for real spinors]
  \label{lem:sm-real-spinor-mean}
  Let $N\geq8$ be divisible by eight, and sample a normalized $T$-fixed vector uniformly in $\mathcal H_+$.
  Its mean purity is
  \begin{equation}
    \bar\nu_{\mathbb R}
    =\frac{96\Re[(2+i)^N]+12D^3+76D^2+192D+8\cdot3^N+344}
    {D(D+4)(D+8)(D+12)}.
    \label{eq:sm-real-mean-purity}
  \end{equation}
  With $\alpha=\arctan(1/2)$, the relative difference satisfies
  \begin{equation}
    \frac{\bar\nu-\bar\nu_{\mathbb R}}{\bar\nu}
    =-12\left(\frac{\sqrt5}{3}\right)^N\cos(N\alpha)
    +\mathcal{O}\!\left(2^{-N/2}\right).
    \label{eq:sm-real-complex-mean-difference-sm}
  \end{equation}
\end{lemma}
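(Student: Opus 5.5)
The plan is to average the Majorana expansion of Lemma~\ref{lem:sm-real-correlation-expansion} over the real sphere, evaluating each term with Lemma~\ref{lem:sm-sphere-moments}. In the orthonormal real basis of Lemma~\ref{lem:sm-real-structure}, a $T$-fixed vector is $\ket\psi=\sum_k u_k\ket{e_k}$ with $u\in\mathbb R^d$ uniform on the unit sphere. For an even subset $A$, $T\gamma_AT^{-1}=(-1)^{|A|}\gamma_A=\gamma_A$, while the Hermitian combination $i^{|A|(|A|-1)/2}\gamma_A$ picks up a sign $(-1)^{|A|(|A|-1)/2}$ under $T$. Thus $\bra\psi\gamma_A\ket\psi=u^\top M_Au$ with $M_A$ real symmetric, and $M_A$ vanishes unless $|A|\equiv0\pmod4$; for $|A|\equiv2\pmod4$ the matrix is real antisymmetric, so the quadratic form is zero. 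This is already a simplification relative to the complex case.

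\textbf{Step 1: reduce to single cycles.} The restrictions of $\gamma_{A}$ to $\mathcal H_+$ commute or anticommute, and for blocks of a coloring they all commute, since even-size products of disjoint index sets commute. Moreover $\Tr_{\mathcal H_+}\gamma_B$ vanishes unless $B=\varnothing$ or $B=[N]$, where $\gamma_{[N]}\propto\Pi$. So for the nonempty blocks $A_1,\dots,A_j$ of a coloring, a nonempty proper sub-product has trace zero unless its blocks union to $[N]$. Because the blocks partition $[N]$, this cannot happen for a proper subset of the nonempty blocks, except that empty blocks ($M=I$) must be handled separately; an empty block simply factors out as $1$. Hence Lemma~\ref{lem:sm-sphere-moments} applies to the $j$ nonempty blocks, with $\eta$ given by $M_{A_1}\cdots M_{A_j}=\pm\gamma_{[N]}|_{\mathcal H_+}$ up to the phases above. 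Each coloring contributes $(-1)^{\mathrm{inv}(c)}\eta\,2^{j-1}(j-1)!/[(d+2)\cdots(d+2j-2)]$, where $j\le4$ is the number of nonempty blocks. The denominators $(d+2)(d+4)(d+6)$ produce $D(D+4)(D+8)(D+12)$ after clearing, with $d=D/2$.

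\textbf{Step 2: sum the signs (main obstacle).} The combined sign $(-1)^{\mathrm{inv}(c)}\eta(c)$ times the phase factors $i^{\sum|A_a|(|A_a|-1)/2}$ must be summed over colorings with prescribed block sizes $\equiv0\pmod 4$. I would write the sign as a product over positions and encode it by a generating function. The condition $|A_a|\equiv0\pmod4$ is imposed by roots-of-unity filters $\frac14\sum_{\omega^4=1}\omega^{|A_a|}$, which turn the sum into a product of $N$ identical single-site factors of the form $\sum_a\omega_a\times(\text{sign})$, raised to the $N$th power, i.e.\ a transfer-matrix computation. The inversion sign is the delicate part: I would handle it exactly as in Lemma~\ref{lem:sm-haar-determinant-moments}, recognizing the filtered sum as $\det(I_4+\text{diag}(\omega)R)^n$-like expressions, or a Pfaffian. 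The various $\omega$ choices then give terms $3^N$, $(2\pm i)^N$, $8^n$, $4^n$, $2^n$, and constants, with coefficients found by counting as in the table of Theorem~\ref{thm:sm-haar-mean}; this should reproduce $8\cdot3^N+96\Re[(2+i)^N]+\dots$. The value $|2+i|^2=5$ explains the $(\sqrt5)^N$ term. As a safety check, I would verify small cases with $N=8$ numerically.

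\textbf{Step 3: asymptotics.} Divide $\bar\nu_{\mathbb R}$ by $\bar\nu$ from Theorem~\ref{thm:sm-haar-mean}. Both leading terms are $8\cdot3^N/D^4$, and the $12D^3$ corrections cancel at relative order $(2\sqrt2/3)^N$ between the two exact formulas to the needed accuracy, up to $\mathcal O(2^{-N/2})$. The surviving term is $96\Re[(2+i)^N]/(8\cdot3^N)=12(\sqrt5/3)^N\cos(N\alpha)$, which gives Eq.~\eqref{eq:sm-real-complex-mean-difference-sm} with the stated sign.
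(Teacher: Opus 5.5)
Your Steps 1 and 3 are sound, but Step 2, the step that actually produces the exact formula, is left as a list of possible techniques (a transfer matrix, ``$\det(I_4+\mathrm{diag}(\omega)R)^n$-like expressions,'' a Pfaffian). It is also built around a difficulty that does not exist, and you already have everything needed to see that the sign is trivial. Take a contributing coloring with nonempty blocks $A_{a_1},\dots,A_{a_j}$, $a_1<\cdots<a_j$. The product $M_{A_{a_1}}\cdots M_{A_{a_j}}$ is the matrix of $\gamma_{A_{a_1}}\cdots\gamma_{A_{a_j}}$. Moving these Majorana operators into increasing index order uses exactly the same transpositions as sorting the expansion term by copy, so $\gamma_{A_{a_1}}\cdots\gamma_{A_{a_j}}=(-1)^{\operatorname{inv}(c)}\gamma_{[N]}$. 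Because $8\mid N$ gives $i^n=1$, we have $\gamma_{[N]}=\Pi$, which is the identity on $\mathcal H_+$. Hence $\eta=(-1)^{\operatorname{inv}(c)}$, which cancels the coloring sign identically. Every coloring whose blocks all have sizes divisible by four therefore contributes $+2^{j-1}(j-1)!/[(d+2)\cdots(d+2j-2)]$. No extra phases appear either, since $\gamma_A$ is already Hermitian when $4\mid|A|$. This cancellation is the missing idea, and it is exactly where the hypothesis $8\mid N$ enters.

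Once the signs are gone, what remains is an unsigned count. Let $C_j$ be the number of ordered partitions of $[N]$ into $j$ nonempty blocks of sizes divisible by four, with $\binom4j$ ways to place the blocks on copies. Then $\bar\nu_{\mathbb R}=2^{-N}[4+12C_2/(d+2)+32C_3/((d+2)(d+4))+48C_4/((d+2)(d+4)(d+6))]$. The $C_j$ follow from the fourth-root-of-unity filter with inclusion--exclusion over empty blocks, using $(1+i)^N=D$ and $(3+i)^N=D(2-i)^N$ for $8\mid N$.

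Your suggested alternatives are not justified as stated. The determinant formula of Lemma~\ref{lem:sm-haar-determinant-moments} and the histogram of Theorem~\ref{thm:sm-haar-mean} rest on unitary invariance of the Haar measure, which the real-spinor ensemble lacks. Listing the terms $3^N$, $(2\pm i)^N$, $D^3,\dots$ that ``should'' appear reads them off the target rather than deriving their coefficients.

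Step 3 is fine once the exact formula is established. The one point to make explicit is that the product of the oscillating term with the common $12D^3$ correction is of order $(2\sqrt{10}/9)^N$. This is $\mathcal O(2^{-N/2})$ because $2\sqrt{10}/9<1/\sqrt2$.
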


\begin{proof}
  We write a real spinor as $\ket\psi=\sum_{a=1}^du_a\ket{e_a}$ in the basis of Eq.~\eqref{eq:sm-explicit-real-spinor-basis}.
  Since the vectors $\ket{e_a}$ are $T$-fixed and orthonormal over the complex numbers, $\ket\psi$ is a normalized real spinor precisely when $u$ is a real unit vector, and $\mathbb E_{\mathbb R}$ is the uniform average $\mathbb E_u$ over the unit sphere of $\mathbb R^d$.
  For even $|A|$, the operator $\gamma_A$ preserves $\mathcal H_+$ and commutes with $T$, so its matrix $M_A$ in this basis is real and $\bra\psi\gamma_A\ket\psi=u^\top M_Au$.
  Since $M_A$ is real, its transpose equals its adjoint, $M_A^\top=(-1)^{|A|(|A|-1)/2}M_A$.
  The quadratic forms of a real antisymmetric matrix vanish, so only colorings whose blocks all have sizes divisible by four contribute to Eq.~\eqref{eq:sm-real-correlation-expansion}.

  Consider such a coloring $c$ with $j$ nonempty blocks $A_{a_1},A_{a_2},\ldots,A_{a_j}$, where $a_1<a_2<\cdots<a_j$ label the copies that carry them.
  We apply Lemma~\ref{lem:sm-sphere-moments} to the matrices $M_{A_{a_1}},M_{A_{a_2}},\ldots,M_{A_{a_j}}$.
  They are real symmetric, and they commute because the blocks are disjoint and have even sizes.
  Their ordered product is the matrix of $\gamma_{A_{a_1}}\gamma_{A_{a_2}}\cdots\gamma_{A_{a_j}}=(-1)^{\operatorname{inv}(c)}\gamma_{[N]}$.
  Since $i^n=1$ for $N\equiv0\pmod8$, we have $\gamma_{[N]}=\Pi$, which acts as the identity on $\mathcal H_+$, so $\eta=(-1)^{\operatorname{inv}(c)}$.
  The product over a nonempty proper subset of the blocks is $\pm\gamma_A$, where $A$ is their union and satisfies $A\neq\varnothing$ and $A\neq[N]$.
  Up to this sign, its trace on $\mathcal H_+$ is $\tfrac12\Tr[(I+\Pi)\gamma_A]$, and both terms vanish.
  For $i\in A$, conjugation gives $\gamma_i\gamma_A\gamma_i=-\gamma_A$ because $A$ is nonempty and even, so $\Tr\gamma_A=0$.
  The operator $\Pi\gamma_A$ is proportional to $\gamma_{[N]\setminus A}$, and applying the same argument to the nonempty even complement $[N]\setminus A$ gives $\Tr(\Pi\gamma_A)=0$.
  Lemma~\ref{lem:sm-sphere-moments} therefore applies, and its sign $\eta=(-1)^{\operatorname{inv}(c)}$ cancels the coloring sign in Eq.~\eqref{eq:sm-real-correlation-expansion}.

  We now sum over colorings.
  Let $C_j$ be the number of ordered partitions of $[N]$ into $j$ nonempty blocks whose sizes are divisible by four.
  A contributing coloring with $j$ nonempty blocks is specified by such a partition and by the $j$ copies that carry the blocks, which can be chosen in $\binom4j$ ways.
  Together with the factor $2^{j-1}(j-1)!$ in Eq.~\eqref{eq:sm-real-sphere-single-cycle}, this gives the coefficients $4$, $6\cdot2=12$, $4\cdot8=32$, and $1\cdot48=48$ for $j=1,2,3,4$, respectively, and hence
  \begin{equation}
    \bar\nu_{\mathbb R}
    =2^{-N}\left[
      4+\frac{12C_2}{d+2}
      +\frac{32C_3}{(d+2)(d+4)}
      +\frac{48C_4}{(d+2)(d+4)(d+6)}
      \right].
    \label{eq:sm-real-spinor-partition-mean}
  \end{equation}
  We count $C_j$ by filtering the block sizes with fourth roots of unity.
  For an integer $m\geq0$, $\frac14\sum_{s=0}^3i^{sm}$ equals one when $m$ is divisible by four and zero otherwise.
  Applying this filter to all assignments of the $N$ indices to labeled blocks and using inclusion-exclusion to remove empty blocks gives
  \begin{equation}
    C_j
    =\sum_{\ell=1}^j(-1)^{j-\ell}\binom j\ell
    \frac1{4^\ell}\sum_{s_1,s_2,\ldots,s_\ell=0}^3
    \left(\sum_{a=1}^\ell i^{s_a}\right)^N.
    \label{eq:sm-real-spinor-partition-filter}
  \end{equation}
  We evaluate the filter using $(1+i)^N=D$ and $(3+i)^N=D(2-i)^N$, both valid for $N\equiv0\pmod8$, and obtain
  \begin{align}
    C_1 & =1,
    \qquad C_2=\frac{D^2}4+\frac D2-2,
    \notag                                                                           \\
    C_3 & =\frac38\Re[(2+i)^N]+\frac{3^N}{16}-\frac{3D^2}4-\frac{3D}2+\frac{57}{16},
    \notag                                                                           \\
    C_4 & =\frac{D^4}{64}+\frac{3D^3}{32}+\frac{7D^2}4+\frac{27D}8
    -\frac{3^N}4+\left(\frac D8-\frac32\right)\Re[(2+i)^N]-\frac{25}4.
    \label{eq:sm-real-spinor-partition-counts}
  \end{align}
  With these counts and $D=2d$, Eq.~\eqref{eq:sm-real-spinor-partition-mean} becomes Eq.~\eqref{eq:sm-real-mean-purity}.

  For the asymptotic form, we use $\Re[(2+i)^N]=5^{N/2}\cos(N\alpha)$.
  The numerators in Eqs.~\eqref{eq:sm-haar-exact-mean-sm} and \eqref{eq:sm-real-mean-purity} share the terms $8\cdot3^N+12D^3$.
  The remaining polynomial terms give relative corrections of order $(2/3)^N$, while the two denominators differ by a relative amount of order $1/D=2^{-N/2}$.
  The ratio of the two means is therefore
  \begin{equation*}
    \frac{\bar\nu_{\mathbb R}}{\bar\nu}
    =1+12\left(\frac{\sqrt5}{3}\right)^N\cos(N\alpha)
    \left[1+\mathcal{O}\!\left(\left(\frac{2\sqrt2}{3}\right)^N\right)\right]
    +\mathcal{O}\!\left(2^{-N/2}\right),
  \end{equation*}
  where the correction in the brackets comes from $12D^3$ in the common part of the numerators.
  The product of the oscillating term with this correction is of order $(2\sqrt{10}/9)^N$, and $2\sqrt{10}/9<1/\sqrt2$, so it is included in $\mathcal{O}(2^{-N/2})$.
  This proves Eq.~\eqref{eq:sm-real-complex-mean-difference-sm}.
\end{proof}

\subsection{A deterministic construction of states whose MRE exceeds the Haar mean}

When $\bar\nu_{\mathbb R}<\bar\nu$, some real spinor has purity below $\bar\nu$ and hence MRE above $-\ln\bar\nu$.
To exceed the Haar mean $\overline{M_2}$, the gain in MRE above $-\ln\bar\nu$ must also exceed the Jensen gap $\overline{M_2}+\ln\bar\nu$.
We therefore choose a target purity and compare the corresponding gain in MRE with an explicit bound on this gap.

For $N\in8\mathbb N$, we define $\delta_N$ as half the relative difference between the two mean purities and choose the target purity $b_N$ halfway between them,
\begin{equation}
  \delta_N=\frac{\bar\nu-\bar\nu_{\mathbb R}}{2\bar\nu},
  \qquad
  b_N=\bar\nu(1-\delta_N)=\frac{\bar\nu+\bar\nu_{\mathbb R}}2.
  \label{eq:sm-above-haar-threshold}
\end{equation}
By Eq.~\eqref{eq:sm-real-complex-mean-difference-sm}, the leading term of $\delta_N$ is proportional to $-\cos(N\alpha)$ and oscillates with $N$.
When $\delta_N>0$, the target lies strictly between the two means, $\bar\nu_{\mathbb R}<b_N<\bar\nu$, and since the mean purity of real spinors is below $b_N$, some real spinor has purity below this target.

For the gap, we use $\ln(d\bar\nu)\leq d\bar\nu-1$ in Theorem~\ref{thm:sm-haar-concentration} and obtain
\begin{equation}
  0\leq\overline{M_2}+\ln\bar\nu\leq B_N,
  \qquad
  B_N=(4d\bar\nu-3)V_N,
  \label{eq:sm-rational-haar-jensen-bound}
\end{equation}
where both $\bar\nu$ and $B_N$ are rational numbers determined by the exact Haar moments.
For any state with $\nu_2(\psi)<b_N$, Eqs.~\eqref{eq:sm-above-haar-threshold} and \eqref{eq:sm-rational-haar-jensen-bound} give
\begin{equation}
  M_2(\psi)-\overline{M_2}
  >-\ln(1-\delta_N)-B_N
  \geq\delta_N-B_N,
  \label{eq:sm-above-haar-certificate}
\end{equation}
where the last step uses $-\ln(1-x)\geq x$ for $x<1$, which applies because $\bar\nu_{\mathbb R}>0$ implies $\delta_N<1/2$.
Thus a state whose purity is below the target has MRE above the Haar mean whenever $\delta_N>B_N$, in which case the margin $\delta_N-B_N$ is positive.
We carry out the construction at the system sizes satisfying this condition,
\begin{equation}
  \mathcal I=\{N\in8\mathbb N:\delta_N>B_N\}.
  \label{eq:sm-above-haar-size-set}
\end{equation}

\begin{theorem}[A deterministic family with MRE above the Haar mean]
  \label{thm:sm-deterministic-above-haar}
  For every $N\in\mathcal I$, let $e_1,e_2,\ldots,e_d$ be the orthonormal basis of $T$-fixed vectors in Eq.~\eqref{eq:sm-explicit-real-spinor-basis}, in the order specified there.
  A finite deterministic procedure produces integers $q_1,q_2,\ldots,q_d\in\{-4,-3,\ldots,4\}$, not all zero, such that the state
  \begin{equation}
    \ket{\psi_N}
    =\frac{\sum_{a=1}^d q_a\ket{e_a}}{\sqrt{\sum_{a=1}^d q_a^2}}
    \label{eq:sm-integer-real-spinor-family}
  \end{equation}
  has $\nu_2(\psi_N)<b_N$ and
  \begin{equation}
    M_2(\psi_N)-\overline{M_2}>\delta_N-B_N>0.
    \label{eq:sm-deterministic-quenched-advantage}
  \end{equation}
  The set $\mathcal I$ is infinite.
  Applying any Gaussian unitary to $\psi_N$ gives a state that obeys the same inequality.
\end{theorem}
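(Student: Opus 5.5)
The approach is to derandomize the averaging argument of Lemma~\ref{lem:sm-real-spinor-mean} by the method of conditional expectations, applied to a homogeneous polynomial. For a real spinor written as $\ket\psi=\sum_a u_a\ket{e_a}/\|u\|$ with $u\in\mathbb R^d$, Lemma~\ref{lem:sm-real-correlation-expansion} together with $\bra\psi\gamma_A\ket\psi=u^\top M_Au/\|u\|^2$ (real symmetric $M_A$) expresses the purity as $\nu_2=F(u)/\|u\|^8$. Here $F$ is a homogeneous polynomial of degree eight with rational coefficients computable from the explicit basis. Set $G(u)=F(u)-b_N\|u\|^8$, which is also homogeneous of degree eight.

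For a standard Gaussian vector $g$, the same radial--angular separation as in Lemma~\ref{lem:sm-sphere-moments} gives $\mathbb E_gG(g)=\mathbb E r^8\,(\bar\nu_{\mathbb R}-b_N)$. This is strictly negative whenever $\delta_N>0$, and in particular for $N\in\mathcal I$, since $B_N\ge0$.

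\textbf{Step 1: a discrete substitute for the Gaussian.} I will find a symmetric distribution $\mu$ on $\{-4,\ldots,4\}$ whose even moments up to order eight are proportional to Gaussian ones, i.e.
\begin{equation*}
m_2=\sigma^2,\quad m_4=3\sigma^4,\quad m_6=15\sigma^6,\quad m_8=105\sigma^8
\end{equation*}
for some $\sigma>0$. Odd moments vanish by symmetry. With weights $p_0,\dots,p_4$ and the scale $\sigma$ there are six unknowns for five equations, so a rational or explicit solution with nonnegative weights should exist; I will exhibit one concretely.

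Because $G$ has degree eight, each monomial involves every coordinate to a power at most eight. Under i.i.d.\ coordinates drawn from $\mu$, each monomial's expectation therefore factorizes into per-coordinate moments. This yields $\mathbb E_{q\sim\mu^{\otimes d}}G(q)=\sigma^8\,\mathbb E_gG(g)<0$. Verifying that such a $\mu$ exists, which is why the support $\{-4,\dots,4\}$ is allowed, is the step I expect to be the main obstacle. If five support points with these moment ratios prove infeasible, I would instead match the Gaussian only on the finitely many monomial moment patterns actually occurring in $F$.

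\textbf{Step 2: conditional expectations.} Fix $q_1,q_2,\dots$ in the prescribed order. At step $a$, choose $q_a\in\{-4,\dots,4\}$ minimizing $\mathbb E[G(q_1,\dots,q_a,X_{a+1},\dots,X_d)]$ with $X_b\sim\mu$ i.i.d.; ties are broken by the smallest value, so the procedure is deterministic. Each such conditional expectation is a finite rational computation, since it only uses the moments of $\mu$ and the coefficients of $F$. The minimum never exceeds the average, so the conditional expectation never increases, and at the end $G(q)<0$. Since $G(0)=0$, the vector $q$ is nonzero, the state in Eq.~\eqref{eq:sm-integer-real-spinor-family} is well defined, and $\nu_2(\psi_N)<b_N$. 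Equation~\eqref{eq:sm-above-haar-certificate} then gives Eq.~\eqref{eq:sm-deterministic-quenched-advantage}. Invariance under Gaussian unitaries follows from the invariance of $M_2$.

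\textbf{Step 3: $\mathcal I$ is infinite.} By Lemma~\ref{lem:sm-real-spinor-mean}, $\delta_N=-6(\sqrt5/3)^N\cos(N\alpha)+\mathcal O(2^{-N/2})$. Since $\alpha/\pi$ is irrational, the sequence $8k\alpha$ is equidistributed mod $2\pi$. Hence infinitely many $N=8k$ satisfy $\cos(N\alpha)\le-1/2$, and along them $\delta_N\ge3(\sqrt5/3)^N(1+o(1))$.

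On the other hand, $d\bar\nu\sim4(3/(2\sqrt2))^N$ and $V_N\sim144(5/9)^N$ by Theorems~\ref{thm:sm-haar-mean} and \ref{thm:sm-haar-concentration}. This gives $B_N=\mathcal O\bigl((15/(9\sqrt2))^N\bigr)$ with $15/(9\sqrt2)\approx0.79$. I still need to check that this ratio is below $\sqrt5/3\approx0.745$; it is not. So I would sharpen the bound using the original estimate $[1+4\ln(d\bar\nu)]V_N=\mathcal O(N(5/9)^N)$. This is why the construction is stated on $\mathcal I$: the condition $\delta_N>B_N$ can be checked exactly in rational arithmetic for each $N$. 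For infiniteness, I would redefine $B_N$ with a rational upper bound on $\ln(d\bar\nu)$ that is $\mathcal O(N)$, for example $n\ln2$ bounded above by a rational, and this restores $B_N=\mathcal O(N(5/9)^N)\ll(\sqrt5/3)^N$.
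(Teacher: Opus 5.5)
Your Steps 1 and 2 follow the paper's construction. Your $G$ is its $P_N$, and the Gaussian mean $d(d+2)(d+4)(d+6)(\bar\nu_{\mathbb R}-b_N)$ is negative. A symmetric nine-point law on $\{-4,\ldots,4\}$ reproduces the Gaussian moments through order eight, and the conditional-expectation sweep with smallest-integer tie-breaking ends at a nonzero $q$ with $P_N(q)<0$. The step you flagged as the main obstacle is not one. With $\sigma=1$ the five linear equations already have the strictly positive rational solution $p_0=115/288$, $p_1=29/120$, $p_2=13/240$, $p_3=11/2520$, $p_4=1/6720$. So no fallback is needed, and all conditional expectations stay rational.

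The genuine gap is in Step 3, and it comes from an arithmetic slip. The base of $B_N$ is $\frac{3}{2\sqrt2}\cdot\frac59=\frac{5}{6\sqrt2}=\sqrt{25/72}\approx0.589$, not $15/(9\sqrt2)$, which is about $1.18$ in any case. Hence $B_N=(4d\bar\nu-3)V_N=2304\,(25/72)^{N/2}[1+o(1)]$, so $B_N/(\sqrt5/3)^N\sim2304\,(5/8)^{N/2}\to0$. The value $0.79\approx\sqrt{5/8}$ is the ratio of the two bases, and it is below one. The $\mathcal{O}(2^{-N/2})$ remainder in $\delta_N$ is also negligible, because $1/\sqrt2<\sqrt5/3$. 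Therefore every sufficiently large $N=8\ell$ with $\cos(N\alpha)\le-1/2$ satisfies $\delta_N>B_N$ for the $B_N$ in the statement, and no sharpening is required.

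Your proposed repair would not prove the claim. Replacing $B_N$ by a smaller certified bound $B_N'$ gives a set $\mathcal I'=\{N:\delta_N>B_N'\}\supseteq\mathcal I$. Infinitude of this larger set says nothing about $\mathcal I$, which is fixed by its definition through $(4d\bar\nu-3)V_N$.

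You also use the irrationality of $\alpha/\pi$ without proof. The paper supplies it as follows: otherwise $2\cos2\alpha=6/5$ would be the sum of a root of unity and its inverse, hence an algebraic integer, which a non-integral rational number cannot be.
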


\begin{proof}
  We encode the target bound on the purity in a polynomial $P_N$ and show that its expectation is negative for Gaussian coordinates.
  We then reproduce this expectation exactly with a distribution on finitely many integers and fix the coordinates one at a time by the method of conditional expectations~\cite{alon2016probabilistic}.

  We first express the desired bound on the purity as the condition that a polynomial is negative.
  For $x\in\mathbb R^d$, set $\ket{v(x)}=\sum_a x_a\ket{e_a}$ and define
  \begin{equation}
    P_N(x)=\bra{v(x)}^{\otimes4}\mathcal R\ket{v(x)}^{\otimes4}-b_N\left(\sum_a x_a^2\right)^4.
    \label{eq:sm-above-haar-polynomial}
  \end{equation}
  Equation~\eqref{eq:purity-representations} gives $P_N(x)=\|x\|^8[\nu_2(v(x)/\|x\|)-b_N]$ for $x\ne0$, and $P_N(0)=0$.
  By construction, $P_N$ is homogeneous of degree eight, and $P_N(x)<0$ is precisely the desired inequality for the purity of the normalized state.

  The coefficients of $P_N$ are rational, so every subsequent comparison can be made exactly.
  The matrix elements of the Majorana operators in the occupation basis and the coefficients in Eq.~\eqref{eq:sm-real-reflection-product} lie in $\mathbb Q(i)$, while the eight factors of $1/\sqrt2$ from the basis vectors multiply to $1/16$.
  The coefficients of the first term are therefore in $\mathbb Q(i)$ and are real because this term equals a purity times $\|x\|^8$ for every real $x$.
  Finally, $b_N$ is rational by the exact formulas for the two mean purities.

  Lemma~\ref{lem:sm-real-spinor-mean} implies that $P_N$ has negative expectation for a standard real Gaussian vector $g\in\mathbb R^d$.
  The direction $u=g/\|g\|$ is uniform on the unit sphere and independent of $\|g\|$, so $\ket{v(u)}$ is a uniformly distributed normalized real spinor and $\mathbb E P_N(u)=\bar\nu_{\mathbb R}-b_N$.
  By homogeneity and $\mathbb E\|g\|^8=d(d+2)(d+4)(d+6)$, we obtain
  \begin{equation*}
    \mathbb E P_N(g)
    =\mathbb E\|g\|^8\,\mathbb E P_N(u)
    =d(d+2)(d+4)(d+6)(\bar\nu_{\mathbb R}-b_N)<0.
  \end{equation*}
  The inequality is strict because $\bar\nu_{\mathbb R}-b_N=-\bar\nu\delta_N$ and $N\in\mathcal I$ implies $\delta_N>B_N\geq0$.

  To obtain integer coordinates, we replace each Gaussian coordinate by a discrete random variable with the same moments through degree eight.
  We choose a symmetric random variable $Z$ on the nine integers from $-4$ to $4$ and write $p_0=\Pr(Z=0)$ and $p_j=\Pr(Z=j)=\Pr(Z=-j)$ for $1\leq j\leq4$.
  Normalization and the second, fourth, sixth, and eighth moments of a standard real Gaussian variable give the five linear equations
  \begin{equation*}
    \begin{aligned}
      p_0+2(p_1+p_2+p_3+p_4)         & =1,   \\
      2(p_1+4p_2+9p_3+16p_4)         & =1,   \\
      2(p_1+16p_2+81p_3+256p_4)      & =3,   \\
      2(p_1+64p_2+729p_3+4096p_4)    & =15,  \\
      2(p_1+256p_2+6561p_3+65536p_4) & =105.
    \end{aligned}
  \end{equation*}
  The factor of two accounts for the two signs, and the odd moments vanish automatically by symmetry.
  Solving this system gives the positive probabilities
  \begin{equation}
    \renewcommand{\arraystretch}{1.8}
    \begin{array}{c|ccccc}
      z        & 0                & \pm1             & \pm2          & \pm3 & \pm4 \\
      \hline
      \Pr(Z=z) & \dfrac{115}{288} & \dfrac{29}{120}
               & \dfrac{13}{240}  & \dfrac{11}{2520} & \dfrac1{6720}
    \end{array},
    \label{eq:sm-discrete-real-moments}
  \end{equation}

  These moments suffice to preserve the expectation of $P_N$ when we replace $g$ by independent random variables $Z_1,Z_2,\ldots,Z_d$ distributed as $Z$.
  Independence makes the expectation of each monomial a product of moments of the individual coordinates.
  Since $P_N$ has total degree eight, every exponent in these monomials is at most eight, so all the required moments agree with those of the Gaussian coordinates.
  Consequently,
  \begin{equation}
    \mathbb E P_N(Z_1,Z_2,\ldots,Z_d)
    =\mathbb E P_N(g)<0.
    \label{eq:sm-negative-polynomial-mean}
  \end{equation}

  We now turn the negative average in Eq.~\eqref{eq:sm-negative-polynomial-mean} into a deterministic choice of the coordinates, taken in the order specified in Eq.~\eqref{eq:sm-explicit-real-spinor-basis}.
  At step $a$, the coordinates $q_1,q_2,\ldots,q_{a-1}$ have already been fixed.
  We average over the remaining coordinates $Z_{a+1},Z_{a+2},\ldots,Z_d$ and choose $q_a$ to minimize
  \begin{equation}
    \mathbb E\!\left[P_N(q_1,q_2,\ldots,q_{a-1},q,Z_{a+1},Z_{a+2},\ldots,Z_d)\right]
    \quad\text{over }q\in\{-4,-3,\ldots,4\},
    \label{eq:sm-conditional-expectation-construction}
  \end{equation}
  breaking ties by choosing the smallest integer.
  All nine conditional expectations are rational and are obtained by substituting the known moments into a finite polynomial.

  This choice keeps the conditional expectation negative at every step.
  Before $q_a$ is fixed, the conditional expectation is the weighted average of the nine candidate values, with the probabilities in Eq.~\eqref{eq:sm-discrete-real-moments}.
  Their minimum is no larger than this average, and the average is negative by induction from Eq.~\eqref{eq:sm-negative-polynomial-mean}, so
  \begin{equation*}
    \mathbb E\!\left[P_N(q_1,q_2,\ldots,q_a,Z_{a+1},Z_{a+2},\ldots,Z_d)\right]
    \leq\mathbb E\!\left[P_N(q_1,q_2,\ldots,q_{a-1},Z_a,Z_{a+1},\ldots,Z_d)\right]<0.
  \end{equation*}
  After the last step, no coordinates remain random, and the conditional expectation is simply $P_N(q_1,q_2,\ldots,q_d)<0$.
  Since $P_N(0)=0$, the chosen vector is nonzero, and normalizing it gives $\nu_2(\psi_N)<b_N$.
  Equation~\eqref{eq:sm-above-haar-certificate} with $N\in\mathcal I$ then gives Eq.~\eqref{eq:sm-deterministic-quenched-advantage}.

  It remains to prove that $\delta_N>B_N$ for infinitely many $N\in8\mathbb N$.
  By Eq.~\eqref{eq:sm-two-copy-purity-asymptotic}, $4d\bar\nu=16\left(\frac{3}{2\sqrt2}\right)^N[1+o(1)]$ grows with $N$, and Eq.~\eqref{eq:sm-relative-variance} gives $V_N=144(5/9)^N[1+o(1)]$.
  Together with Eq.~\eqref{eq:sm-real-complex-mean-difference-sm}, we obtain
  \begin{equation}
    B_N=2304\left(\frac{25}{72}\right)^{N/2}[1+o(1)],
    \qquad
    \delta_N=-6\left(\frac{\sqrt5}{3}\right)^N\cos(N\alpha)
    +\mathcal{O}\!\left(2^{-N/2}\right).
    \label{eq:sm-above-haar-certificate-asymptotic}
  \end{equation}
  If $\alpha/\pi$ were rational, $e^{2i\alpha}$ and its inverse would be roots of unity, so their sum $2\cos(2\alpha)=6/5$ would be an algebraic integer.
  A rational algebraic integer must be an integer, so $\alpha/\pi$ is irrational.
  Hence the sequence $8\ell\alpha$ is dense modulo $2\pi$, and infinitely many $N=8\ell$ satisfy $\cos(N\alpha)\leq-1/2$.
  For these sizes, $\delta_N\geq(3+o(1))(\sqrt5/3)^N$, while both $B_N$ and the remainder in the expansion of $\delta_N$ decay faster.
  Every sufficiently large size in this sequence therefore belongs to $\mathcal I$, so $\mathcal I$ is infinite.
  The final assertion follows from the invariance of the MRE under Gaussian unitaries.
\end{proof}

The procedure requires only finitely many operations on rational numbers, although we do not have a polynomial bound on its running time or on the cost of preparing $\ket{\psi_N}$.
The guaranteed excess of the MRE over the Haar mean is exponentially small in $N$, consistent with the fact that typical Haar-random states and states with maximal MRE have the same MRE density in the large-$N$ limit.

\section{Comparison with measures based on the covariance matrix}
\label{sec:sm-covariance-comparison}

In this section, we give the calculations behind the comparison between the MRE and the FAF in the main text.
We first show that real spinors have vanishing covariance matrices.
We then compute the covariance matrix and the purity for superpositions of the empty and fully occupied Fock states, a family that includes both the cat states and the state $\ket\eta$ of the main text.
These results give the two examples in the main text, namely states with identical covariance matrices but extensively different MRE, and states with equal MRE but different FAF.
Throughout this section, $N=2n$, $\Gamma_{ij}=(i/2)\bra\psi[\gamma_i,\gamma_j]\ket\psi$ is the covariance matrix, and $\mathcal F_k(\psi)=n-\tfrac12\Tr[(\Gamma^{\top}\Gamma)^k]$ is the FAF.

We first take $N\in8\mathbb N$, as in Sec.~\ref{sec:sm-above-haar-construction}, and show that the antiunitary symmetry of a real spinor forces its covariance matrix to vanish.
For $i\neq j$, the operator $i\gamma_i\gamma_j$ is Hermitian, and Lemma~\ref{lem:sm-real-structure} together with $TiT^{-1}=-i$ gives $T(i\gamma_i\gamma_j)T^{-1}=-i\gamma_i\gamma_j$.
For an antiunitary $T$ with $T\ket\psi=\ket\psi$, every operator $X$ satisfies $\bra\psi X\ket\psi=\bra\psi TXT^{-1}\ket\psi^*$.
Applied to $X=i\gamma_i\gamma_j$, this gives $\Gamma_{ij}=-\Gamma_{ij}^*=-\Gamma_{ij}$, because $\Gamma_{ij}$ is real, so $\Gamma_{ij}=0$.
The diagonal entries vanish by definition.
The states $\psi_N$ of Theorem~\ref{thm:sm-deterministic-above-haar} therefore have $\Gamma=0$ and $\mathcal F_k(\psi_N)=N/2$ for every integer $k\geq1$.

We next consider the states
\begin{equation}
  \ket\phi=\sqrt p\,\ket{0^n}+\sqrt q\,\ket{1^n},
  \qquad p,q\geq0,\quad p+q=1,
  \label{eq:sm-empty-full-superposition}
\end{equation}
with even $n\geq4$, so that $\ket{1^n}$ has even parity.
The cat state $\ket{\mathrm{cat}_n}$ corresponds to $p=q=1/2$.
In each Fock state, a bilinear of Majorana operators from two different modes has zero expectation value, and the occupation bilinear $i\gamma_{2m-1}\gamma_{2m}=1-2f_m^\dagger f_m$ has expectation value $+1$ in $\ket{0^n}$ and $-1$ in $\ket{1^n}$.
The cross terms between the two Fock states vanish, because a bilinear changes at most two occupations and cannot connect $\ket{0^n}$ and $\ket{1^n}$ for $n\geq4$.
The only possibly nonzero entries of the covariance matrix are therefore $\Gamma_{2m-1,2m}=-\Gamma_{2m,2m-1}=p-q$, and all $N$ singular values of $\Gamma$ equal $|p-q|$.
In particular, $\Gamma=0$ and $\mathcal F_k(\mathrm{cat}_n)=n$ for every $k\geq1$.

To compute the purity, we track the occupations after the beam splitter.
By Eq.~\eqref{eq:sm-balanced-majorana-action}, $U$ maps $f_m^{(1)\dagger}$ and $f_m^{(2)\dagger}$ to $(f_m^{(1)\dagger}-f_m^{(2)\dagger})/\sqrt2$ and $(f_m^{(1)\dagger}+f_m^{(2)\dagger})/\sqrt2$, respectively.
It leaves $\ket{0^n}\ket{0^n}$ invariant, and it also leaves $\ket{1^n}\ket{1^n}$ invariant because each product $f_m^{(1)\dagger}f_m^{(2)\dagger}$ is unchanged.
Applied to a fully occupied copy and an empty copy, $U$ produces a superposition of all ways of distributing the $n$ occupied modes between the two copies, each with an amplitude of magnitude $2^{-n/2}$.
For $S\subseteq[n]$, let $\ket{1_S}$ denote the occupation state with precisely the modes in $S$ filled.
The components $\ket{1_S}\ket{1_{S^c}}$ obtained from $\ket{1^n}\ket{0^n}$ and $\ket{0^n}\ket{1^n}$ differ by the factor $(-1)^{|S^c|}=(-1)^{|S|}$, since $n$ is even, so they cancel for odd $|S|$ and add for even $|S|$.
The two-copy output is therefore
\begin{equation}
  U\ket\phi^{\otimes2}
  =p\ket{0^n}\ket{0^n}
  +2\sqrt{pq}\,2^{-n/2}\sum_{\substack{S\subseteq[n]\\|S|\ {\rm even}}}\epsilon_S\ket{1_S}\ket{1_{S^c}}
  +q\ket{1^n}\ket{1^n},
  \qquad \epsilon_S\in\{\pm1\},
  \label{eq:sm-cat-output}
\end{equation}
where the signs $\epsilon_S$ come from fermionic ordering and satisfy $\epsilon_\varnothing=\epsilon_{[n]}=1$, since these two components require no reordering.
After tracing out the second copy, each occupation state with an even number of occupied modes receives the diagonal weight $r=4pq\,2^{-n}$ from the middle sum.
The empty and fully occupied states receive the additional weights $p^2$ and $q^2$, respectively.
Their coherence comes from the terms with $S=\varnothing$ and $S=[n]$, and since $\epsilon_\varnothing=\epsilon_{[n]}=1$, the corresponding off-diagonal element of the reduced state is $p\sqrt r+q\sqrt r=\sqrt r$.
Summing the squares of the diagonal weights and the contribution $2r$ of the two off-diagonal elements gives
\begin{equation}
  \nu_2(\phi)
  =(p^2+r)^2+(q^2+r)^2+(2^{n-1}-2)r^2+2r,
  \qquad r=4pq\,2^{-n}.
  \label{eq:sm-superposition-purity}
\end{equation}

For the cat state, $p=q=1/2$ and $r=2^{-n}$, so
\begin{equation}
  \nu_2(\mathrm{cat}_n)
  =2\left(\frac14+2^{-n}\right)^2+(2^{n-1}-2)2^{-2n}+2^{1-n}
  =\frac18+\frac7{2^{n+1}}.
  \label{eq:sm-cat-purity}
\end{equation}
Hence $M_2(\mathrm{cat}_n)\to\ln8$ as $n\to\infty$.
In contrast, $M_2(\psi_N)$ exceeds the Haar mean and obeys the universal upper bound, so Theorems~\ref{thm:sm-haar-concentration} and \ref{thm:sm-mre-upper-bound} give $M_2(\psi_N)=N\ln(4/3)+\mathcal{O}(1)$ along the constructed sequence.
Since $n=N/2$ is even and at least four for $N\in8\mathbb N$, comparing $\psi_N$ with $\mathrm{cat}_{N/2}$ proves Eqs.~\eqref{eq:equal-faf} and \eqref{eq:equal-faf-mre-gap}.

Conversely, equal MRE does not imply equal FAF.
On six complex fermion modes, we consider the states of the main text,
\begin{equation}
  \ket\chi=\ket{\mathrm{cat}_4}\otimes\ket{00},
  \qquad \ket\eta=\sqrt{\frac{1+z}{2}}\ket{0^6}+\sqrt{\frac{1-z}{2}}\ket{1^6},
  \qquad z=\sqrt{\frac{2\sqrt{65}}5-3}.
\end{equation}
The Gaussian factor $\ket{00}$ does not change the MRE by additivity, so Eq.~\eqref{eq:sm-cat-purity} with $n=4$ gives $\nu_2(\chi)=11/32$.
The state $\ket\eta$ has the form of Eq.~\eqref{eq:sm-empty-full-superposition} with $n=6$ and $p-q=z$, and Eq.~\eqref{eq:sm-superposition-purity} gives
\begin{equation}
  \nu_2(\eta)=\frac{23+90z^2+15z^4}{128}=\frac{11}{32},
  \qquad M_2(\chi)=M_2(\eta)=\ln\frac{32}{11}.
\end{equation}
The chosen value of $z$ satisfies $5z^4+30z^2-7=0$, which gives $\nu_2(\eta)=11/32$.
For the FAF, the covariance matrix of $\chi$ is the direct sum of the vanishing covariance matrix of $\ket{\mathrm{cat}_4}$ and that of the two-mode vacuum, so it has eight zero singular values and four unit singular values.
All twelve singular values of the covariance matrix of $\eta$ equal $z$.
Therefore $\mathcal F_k(\chi)=4$ and $\mathcal F_k(\eta)=6(1-z^{2k})>4$ for every $k\geq1$, where the last inequality follows from $0<z^2<1/3$.

\section{Higher-order MRE: Haar means and upper bounds}
\label{sec:sm-higher-mre}

In this section, we define higher-order MREs through convolutions of $k\geq3$ copies and determine their Haar means and maximal densities.
The two-copy output has fixed parity, and its typical purity is parametrically larger than the inverse dimension of that parity sector.
In contrast, the typical $k$-copy output approaches the maximally mixed state on the full Fock space.
Throughout this section, $N=2n$, $D=2^n$, and $d=D/2$ refer to a single copy, and all limits are taken at fixed $k\geq3$.

The $k$-copy convolution mixes the copies by a Gaussian unitary and retains the equal-weight combination of their modes.
We choose a real orthogonal $k\times k$ matrix $O_k$ whose first row is $(1,1,\ldots,1)/\sqrt k$, and let $U_k$ be the Gaussian unitary that leaves the Fock vacuum invariant and acts as $U_k^\dagger f_m^{(a)}U_k=\sum_b(O_k)_{ab}f_m^{(b)}$.
For $\rho=\ket\psi\bra\psi$, we define
\begin{align}
  \mathcal C_k(\rho) & =\Tr_{2,3,\ldots,k}(U_k\rho^{\otimes k}U_k^\dagger), \\
  \nu_k(\psi)        & =\Tr[\mathcal C_k(\rho)^2],
  \qquad M_k(\psi)=-\frac{\ln\nu_k(\psi)}{k-1}.
  \label{eq:sm-higher-mre-definition}
\end{align}
The other rows of $O_k$ only change the basis of the discarded modes and do not affect $\mathcal C_k$.
For $k=2$, Eq.~\eqref{eq:sm-balanced-majorana-action} gives $U^\dagger\gamma_i^{(1)}U=(\gamma_i^{(1)}+\gamma_i^{(2)})/\sqrt2$ and $U^\dagger\gamma_i^{(2)}U=(\gamma_i^{(2)}-\gamma_i^{(1)})/\sqrt2$.
Since $U$ also leaves the Fock vacuum invariant, the choice of rows $(1,1)/\sqrt2$ and $(-1,1)/\sqrt2$ gives $U_2=U$, so $M_2$ is the MRE of the main text.

\begin{theorem}[Haar mean and maximal density for $k\geq3$]
  \label{thm:sm-general-k-haar}
  For a Haar-random state in $\mathcal H_+$,
  \begin{equation}
    \mu_k:=\mathbb E_{\mathrm H}\nu_k
    =D^{-1}\left[1+k^2r_k^n+\mathcal{O}(q_k^n)\right],
    \label{eq:sm-general-k-asymptotic}
  \end{equation}
  where
  \begin{equation}
    r_k=\frac{(k+1)^2}{2k^2},
    \qquad q_k=\max\left\{\frac12,\frac{(k+2)^2}{4k^2}\right\}<r_k<1.
  \end{equation}
  Every pure state of fixed parity obeys $M_k\leq\ln D/(k-1)$, and
  \begin{equation}
    \lim_{N\to\infty}\frac{\mathbb E_{\mathrm H}M_k}{N}
    =\lim_{N\to\infty}\frac{\max_{\psi\in\mathcal H_+}M_k(\psi)}{N}
    =\frac{\ln2}{2(k-1)}.
    \label{eq:sm-general-k-density}
  \end{equation}
  Moreover, $\|\mathcal C_k(\rho)-I/D\|_1\to0$ in probability under the Haar measure.
\end{theorem}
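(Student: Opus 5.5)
We will follow the template of Lemma~\ref{lem:sm-haar-determinant-moments}, now with $2k$ copies. By the swap trick on the two retained first outputs, $\nu_k(\psi)=\bra\psi^{\otimes2k}\mathcal R_k\ket\psi^{\otimes2k}$ with $\mathcal R_k=(U_k\otimes U_k)^\dagger S_{1,k+1}(U_k\otimes U_k)$. Before using the trick we must fix the fermionic signs. The $k$-copy output need not have fixed parity, so we cannot simply replace the ordinary SWAP by $S_{1,k+1}$ as was done for $k=2$. Instead, we insert a parity factor as in Eq.~\eqref{eq:sm-partial-four-copy-purity}, using $S_{1,k+1}\Pi_1$ or the analogous combination that reproduces the ordinary SWAP on the retained pair.

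In both cases $S_{1,k+1}$ is the parity of the antisymmetric modes $(f^{(1)}-f^{(k+1)})/\sqrt2$. Conjugating by $U_k\otimes U_k$ turns this into the parity of the single mode per $m$ along the unit vector $\bm w=(\bm 1_k,-\bm 1_k)/\sqrt{2k}$. Hence $\mathcal R_k=\widehat W$ with $W=I_{2k}-2\bm w\bm w^\top$, possibly times an even-parity Gaussian factor coming from the inserted $\Pi_1$. The Haar average then reduces to
\begin{equation*}
  \mu_k=\frac{1}{2^{2k}(d)_{2k}}\sum_{\pi\in S_{2k}}\sum_{\boldsymbol\epsilon}\det(I+Q_{\boldsymbol\epsilon}WP_\pi)^n .
\end{equation*}
Each determinant is $\prod_a(1+\lambda_a)$ over the eigenvalues of a signed permutation perturbed by a rank-one reflection. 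The matrix determinant lemma gives $\det(I+M)\bigl(1-2\bm w^\top(I+M)^{-1}\bm w\bigr)$, where $M=Q_{\boldsymbol\epsilon}P_\pi$. Since $\det(I+M)\in\{0,2^{c}\}$ with $c$ the number of cycles of positive sign, the dominant terms are easy to identify.

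For the asymptotics we group the terms by their growth rate. The identity-like terms sum to $D^{2k}$ and reproduce $D^{-1}\cdot(d)_{2k}2^{2k}/D^{2k}$-normalized leading $1/D$. The first correction comes from the $k^2$ transpositions pairing a copy in the first group with one in the second, combined with suitable signs. For such a term the rank-one correction multiplies a factor $2^{2k-1}$ by $(k+1)^2/k^2$, which gives the rate $r_k$. All remaining classes will be bounded by $q_k^n$ relative to $D^{2k}$; these include double transpositions, within-group transpositions (rate $1/2$), and cycles yielding $((k+2)/2k)^2$. We expect this classification to be the main obstacle. We would first compute $\bm w^\top(I+M)^{-1}\bm w$ cycle by cycle, where it is an explicit sum over cycles meeting the two groups. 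We would then bound the contribution of each cycle type crudely rather than enumerate the terms exactly.

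The remaining claims follow quickly. The output $\mathcal C_k(\rho)$ lives on the $D$-dimensional Fock space, so $\nu_k\ge 1/D$ and $M_k\le\ln D/(k-1)=N\ln2/(2(k-1))$. Jensen's inequality gives $\mathbb E M_k\ge-\ln\mu_k/(k-1)=(\ln D)/(k-1)-o(1)$ because $r_k<1$. Together these give Eq.~\eqref{eq:sm-general-k-density}. For the trace norm, we have $\|\mathcal C_k-I/D\|_2^2=\nu_k-1/D$, whose mean is $D^{-1}\mathcal O(r_k^n)$. Applying Cauchy--Schwarz, $\|X\|_1\le\sqrt D\|X\|_2$, and then Markov's inequality gives $\mathbb E\|\mathcal C_k-I/D\|_1\le\mathcal O(r_k^{n/2})\to0$.
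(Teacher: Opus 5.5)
\textbf{There is a genuine gap: the observable whose Haar mean you compute is not $\nu_k$.}

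You correctly note that a parity factor is needed, and $S_{1,k+1}\Pi_1$ is in fact a valid choice. The reason is not the agreement of local parities used in Eq.~\eqref{eq:sm-partial-four-copy-purity}. It is that $\sigma=\mathcal C_k(\rho)$ is parity-even, so only terms $\sigma_{xy}\sigma_{yx}$ with $|x|\equiv|y|$ enter the swap trick, and for these $(-1)^{|x|}=(-1)^{|x||y|}$.

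However, you then discard this factor, and it cannot be discarded. After conjugation by $U_k\otimes U_k$, $\Pi_1$ becomes the reflection $I_{2k}-2\bm h_A\bm h_A^{\top}$. The product $(I_{2k}-2\bm w\bm w^{\top})(I_{2k}-2\bm h_A\bm h_A^{\top})=J_k$ is the rotation by $\pi/2$ in the plane of $\bm h_A$ and $\bm h_B$. The opposite ordering gives $J_k^{-1}$, which has the same Haar sum by batch exchange. This is exactly the paper's observable, which the paper reaches via $S_{\mathrm o}=\tfrac12S_{\mathrm f}(I+\Pi_A+\Pi_B-\Pi_A\Pi_B)$ and two symmetries; keeping your factor would be a slightly shorter route to the same operator.

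With the bare reflection $W=I_{2k}-2\bm w\bm w^{\top}$, your sum instead computes the Haar mean of $\Tr\sigma_+^2-\Tr\sigma_-^2$, where $\sigma_\pm$ are the parity blocks of $\sigma$. Its terms behave quite differently from what you describe. If $I+Q_{\boldsymbol\epsilon}P_\pi$ is invertible, then $\det(Q_{\boldsymbol\epsilon}P_\pi)=1$. The orthogonal matrix $Q_{\boldsymbol\epsilon}WP_\pi$ then has determinant $-1$, hence an eigenvalue $-1$, so $\det(I+Q_{\boldsymbol\epsilon}WP_\pi)=0$. Equivalently, the rank-one factor $1-2\bm w^{\top}(\cdots)\bm w$ in your formula always vanishes, because the symmetric part of $(I+M)^{-1}$ is $\tfrac12I$ for orthogonal $M$. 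So the identity and the $k^2$ inter-batch transpositions contribute zero, not $D^{2k}$ and $r_k^n$.

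Your normalization is also wrong independently of this. An identity term of $2^{2k}$ per mode, against the prefactor $1/[2^{2k}(d)_{2k}]\approx D^{-2k}$, would give $\mu_k\approx1$. The correct identity value is $\det(I+J_k)=2^{2k-1}$, and this is what produces the leading $D^{-1}$.

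Once the observable is $\widehat J_k$, the perturbation of the identity has rank two. The classification you defer as the ``main obstacle'' is then the actual content of the proof, and it must be carried out for this operator:
\begin{itemize}
\item The paper writes $J_k=I+\mathsf H\mathsf K\mathsf H^{\top}$ with $\mathsf H=[\bm h_A,\bm h_B]$.
\item It uses $W(I+W)^{-1}=\tfrac12I+\Omega$ with $\Omega^{\top}=-\Omega$ to obtain $\Delta_k(W)=2^{2-\tau(W)}(\tfrac12-t)^2$, where $t=\bm h_A^{\top}\Omega\bm h_B$.
\item It bounds $|t|\leq\sum_C|C\cap A|\,|C\cap B|/(2k)$ cycle by cycle, with explicit patterns for $\tau(W)\le4$ and a general estimate for $\tau(W)\ge5$.
\item It treats singular $I+W$ separately: with exactly two $(-1)$-cycles the determinant need not vanish, and it is bounded by $81/(16k^2)$.
\end{itemize}

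Your final paragraph is correct and matches the paper. Your Jensen step $\mathbb E_{\mathrm H}\ln(D\nu_k)\le\ln(D\mu_k)$ is equivalent to its $\ln x\le x-1$ step. But that paragraph relies on $\mu_k=D^{-1}[1+\mathcal O(r_k^n)]$, which your first part, as written, does not establish.
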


\subsection{The purity as a Gaussian expectation value}

As for $k=2$, we write the purity as an expectation value in copies of the input state, now with $2k$ copies.
We label these copies by $1,2,\ldots,2k$ and group them into two batches, $A=\{1,2,\ldots,k\}$ and $B=\{k+1,k+2,\ldots,2k\}$, each of which is convolved by $U_k$.
The modes retained from the two batches correspond to the unit vectors
\begin{equation}
  \bm h_A=\frac1{\sqrt k}(1,1,\ldots,1,0,0,\ldots,0)^{\top},
  \qquad
  \bm h_B=\frac1{\sqrt k}(0,0,\ldots,0,1,1,\ldots,1)^{\top}
  \label{eq:sm-batch-vectors}
\end{equation}
in the space of copy labels.
Unlike the two-copy output, $\mathcal C_k(\rho)$ need not be supported in a single parity sector, so the swap trick requires a parity correction.
Let $S_{\mathrm o}$ and $S_{\mathrm f}$ be the ordinary and fermionic SWAPs of the two retained outputs, and let $\Pi_A$ and $\Pi_B$ be their parities.
The two SWAPs differ by a minus sign precisely when both outputs have odd parity, so
\begin{equation}
  S_{\mathrm o}=\tfrac12S_{\mathrm f}(I+\Pi_A+\Pi_B-\Pi_A\Pi_B).
  \label{eq:sm-ordinary-graded-swap}
\end{equation}
Each of the four terms is a Gaussian operator that acts only on the two retained combinations.
After conjugation by the convolutions, they correspond to the following orthogonal matrices on the copy labels.
The fermionic SWAP exchanges $\bm h_A$ and $\bm h_B$ and therefore corresponds to the reflection $R_-$.
The term with both parities also reverses the signs of the exchanged combinations and corresponds to the reflection $R_+$.
These reflections are
\begin{equation}
  R_\pm=I_{2k}-2\bm v_\pm\bm v_\pm^{\top},
  \qquad
  \bm v_\pm=\frac{\bm h_A\pm\bm h_B}{\sqrt2}.
  \label{eq:sm-two-batch-reflections}
\end{equation}
The two terms with a single parity give the rotations by $\pm\pi/2$ in the plane spanned by $\bm h_A$ and $\bm h_B$, that is, $J_k$ and $J_k^{-1}$ with
\begin{equation}
  J_k
  =I_{2k}-\bm h_A\bm h_A^{\top}-\bm h_B\bm h_B^{\top}
  +\bm h_A\bm h_B^{\top}-\bm h_B\bm h_A^{\top}.
  \label{eq:sm-Jk-definition}
\end{equation}
As in Sec.~\ref{sec:sm-haar-methods}, we write $\widehat W$ for the Gaussian unitary that implements a real orthogonal matrix $W$ on the copy labels and leaves the Fock vacuum invariant.

\begin{proposition}[Purity as a Gaussian expectation value]
  \label{prop:sm-parity-corrected-observable}
  For every pure state $\ket\psi$ of fixed parity, the purity is
  \begin{equation}
    \nu_k(\psi)=\Tr\left(\rho^{\otimes2k}\widehat J_k\right).
    \label{eq:sm-Jk-purity-observable}
  \end{equation}
\end{proposition}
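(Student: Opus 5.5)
The plan is to start from the swap trick with the \emph{ordinary} SWAP $S_{\mathrm o}$ on the two retained outputs, use the decomposition of Eq.~\eqref{eq:sm-ordinary-graded-swap} to write $\nu_k$ as four Gaussian expectation values, and then use copy symmetries of $\rho^{\otimes2k}$ to collapse these four terms into the single term $\widehat J_k$.

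\emph{Step 1: swap trick.} I will write $\nu_k=\Tr[(\mathcal C_k(\rho)\otimes\mathcal C_k(\rho))S_{\mathrm o}]$ and undo the partial traces. This gives $\nu_k=\Tr[\rho^{\otimes2k}\,\mathcal U^\dagger S_{\mathrm o}\,\mathcal U]$, where $\mathcal U=U_k^{(A)}U_k^{(B)}$ convolves the two batches. Because every copy has the fixed parity $p$ and each convolution preserves total parity, the embedding of the $2k$ copies into a single fermionic Fock space changes $\rho^{\otimes2k}$ at most by signs. These signs are the same on bra and ket, so they cancel; this is the same bookkeeping as in the proof of Lemma~\ref{lem:sm-haar-determinant-moments}.

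\emph{Step 2: identify the four Gaussian operators.} I will substitute Eq.~\eqref{eq:sm-ordinary-graded-swap}. For each of the four operators $S_{\mathrm f}$, $S_{\mathrm f}\Pi_A$, $S_{\mathrm f}\Pi_B$ and $S_{\mathrm f}\Pi_A\Pi_B$, I will check two things. First, after conjugation by $\mathcal U$ it fixes the product vacuum, since $\mathcal U$, the parities and the fermionic SWAP all do. Second, its action on the modes $f_m^{(a)\dagger}$ is given by the matrices $R_-$, $J_k^{\pm1}$ and $R_+$ respectively. As in Sec.~\ref{sec:sm-haar-methods}, these two properties identify each operator with $\widehat W$ for the corresponding matrix $W$, including the overall phase. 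This yields
\begin{equation*}
\nu_k=\tfrac12\bigl[\langle\widehat R_-\rangle+\langle\widehat J_k\rangle+\langle\widehat J_k^{-1}\rangle-\langle\widehat R_+\rangle\bigr],
\end{equation*}
where $\langle\cdot\rangle$ denotes the expectation value in $\ket\psi^{\otimes2k}$. The assignment of $J_k$ versus $J_k^{-1}$ to $\Pi_A$ versus $\Pi_B$ does not matter, because of Step 3.

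\emph{Step 3: symmetries.} Let $P$ be the permutation matrix that exchanges the two batches copy by copy. Then $P\bm h_A=\bm h_B$ and $P\bm h_B=\bm h_A$, and hence $PJ_kP^{\top}=J_k^{-1}$. Since $\widehat P\ket\psi^{\otimes2k}$ is a phase times $\ket\psi^{\otimes2k}$, and the phase cancels between bra and ket, we get $\langle\widehat J_k^{-1}\rangle=\langle\widehat P\widehat J_k\widehat P^\dagger\rangle=\langle\widehat J_k\rangle$.

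Next let $Q$ be the diagonal matrix equal to $-1$ on batch $B$ and $+1$ on batch $A$. Then $\widehat Q=\prod_{a\in B}\Pi_a$ acts on the input as the scalar $p^k$. Moreover $Q\bm v_-=\bm v_+$, so $QR_-Q=R_+$, and therefore $\langle\widehat R_+\rangle=\langle\widehat R_-\rangle$.

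The two reflection terms therefore cancel, and the two rotation terms combine to give $\nu_k=\langle\widehat J_k\rangle=\Tr(\rho^{\otimes2k}\widehat J_k)$.

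I expect Step 2 to be the main obstacle. The difficulty is getting the signs and phases exactly right: each of the four terms must carry precisely the sign shown in Eq.~\eqref{eq:sm-ordinary-graded-swap}, and $\widehat W$ must be normalized by its action on the vacuum. For odd $p$ with odd $k$, the reordering signs between batches need care. I would handle this by checking every identification on the vacuum together with the one-particle action, rather than tracking reordering signs directly.
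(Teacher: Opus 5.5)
Your proposal is correct and follows essentially the same route as the paper. The swap trick with the decomposition of $S_{\mathrm o}$ gives the four Gaussian terms $\widehat R_-$, $\widehat J_k$, $\widehat J_k^{-1}$, $-\widehat R_+$; conjugation by the parities on batch $B$ cancels the two reflection terms, and exchanging the batches identifies the two rotation terms, with your vacuum-plus-one-particle check in Step 2 spelling out an identification that the paper simply asserts.
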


\begin{proof}
  Applying the swap trick with Eq.~\eqref{eq:sm-ordinary-graded-swap} to the two retained outputs and conjugating by the convolutions gives
  \begin{equation}
    \nu_k(\psi)
    =\frac12\Tr\left[\rho^{\otimes2k}
      \left(\widehat R_-+\widehat J_k+\widehat J_k^{-1}-\widehat R_+\right)\right].
    \label{eq:sm-four-gaussian-swap-resolution}
  \end{equation}
  Two symmetries of $\rho^{\otimes2k}$ simplify this expression.
  Applying the parity operator to every input in batch $B$ changes the expectation value only by the factor $p^{2k}=1$, where $p$ is the parity of $\ket\psi$, and it conjugates $R_-$ to $R_+$.
  The two reflection terms in Eq.~\eqref{eq:sm-four-gaussian-swap-resolution} therefore cancel.
  Exchanging the two identical batches conjugates $J_k$ to $J_k^{-1}$, so the remaining two terms are equal, and Eq.~\eqref{eq:sm-four-gaussian-swap-resolution} reduces to Eq.~\eqref{eq:sm-Jk-purity-observable}.
\end{proof}

The Haar mean then follows as in Lemma~\ref{lem:sm-haar-determinant-moments}.
We insert Eq.~\eqref{eq:sm-haar-state-moment} with $q=2k$, expand the even-parity projectors, and use Eq.~\eqref{eq:sm-exterior-algebra-character}.
With $Q_{\boldsymbol\epsilon}$ and $P_\pi$ defined as in Lemma~\ref{lem:sm-haar-determinant-moments} for $2k$ copies, this gives, for every integer $k\geq2$,
\begin{equation}
  \mu_k
  =\mathbb E_{\mathrm H}\nu_k
  =\frac{1}{2^{2k}(d)_{2k}}
  \sum_{\pi\in S_{2k}}
  \sum_{\boldsymbol\epsilon\in\{0,1\}^{2k}}
  \det\left(I_{2k}+Q_{\boldsymbol\epsilon}J_kP_\pi\right)^n.
  \label{eq:sm-general-k-exact-mean}
\end{equation}
Each pair $(\pi,\boldsymbol\epsilon)$ corresponds to the signed permutation $W=P_\pi Q_{\boldsymbol\epsilon}$, and $\det(I+Q_{\boldsymbol\epsilon}J_kP_\pi)=\det(I+J_kW)$.
We normalize these determinants by their value at the identity,
\begin{equation}
  \Delta_k(W)=\frac{\det(I_{2k}+J_kW)}{\det(I_{2k}+J_k)},
  \qquad
  \det(I_{2k}+J_k)=2^{2k-1},
  \label{eq:sm-delta-k-definition}
\end{equation}
where the last value follows from the eigenvalues $\pm i$ of $J_k$ in the plane of $\bm h_A$ and $\bm h_B$ and the eigenvalue $1$ elsewhere.
Since $2^{2k}(d)_{2k}=\prod_{j=0}^{2k-1}(D+2j)$ and $2^{(2k-1)n}/D^{2k}=D^{-1}$, Eq.~\eqref{eq:sm-general-k-exact-mean} becomes
\begin{equation}
  \mu_k
  =\frac{1}{D}\prod_{j=0}^{2k-1}\left(1+\frac{2j}{D}\right)^{-1}
  \sum_{W}\Delta_k(W)^n,
  \label{eq:sm-general-k-normalized-mean}
\end{equation}
where the sum runs over all $2^{2k}(2k)!$ signed permutations of the $2k$ copy labels.

\subsection{The leading determinant values}

At fixed $k$, the sum in Eq.~\eqref{eq:sm-general-k-normalized-mean} has finitely many terms, so its large-$N$ behavior is determined by the largest values of $\Delta_k(W)$.
We call the cycles of $\pi$, together with the signs of $Q_{\boldsymbol\epsilon}$ on their labels, the signed cycles of $W$.

\begin{lemma}[Determinant estimates]
  \label{lem:sm-determinant-estimates}
  Let $k\geq3$.
  For every signed permutation $W$ of the $2k$ copy labels, $\Delta_k(W)\geq0$, with $\Delta_k(W)=1$ only for $W=I_{2k}$.
  Exactly $k^2$ signed permutations have $\Delta_k(W)=r_k$.
  Each of them is a transposition of a label in batch $A$ with a label in batch $B$, with sign $+1$ on all other labels and, for each such pair, a unique choice of signs on the two transposed labels.
  Every other signed permutation satisfies
  \begin{equation}
    \Delta_k(W)\leq q_k.
    \label{eq:sm-general-k-channel-bound}
  \end{equation}
\end{lemma}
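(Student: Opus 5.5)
The plan is to treat $\det(I_{2k}+J_kW)$ as the characteristic-type determinant of the real orthogonal matrix $M=J_kW$. Its size is then controlled by the eigenangles of $M$, together with the fact that $J_k$ differs from $I_{2k}$ only on the plane $\Pi_{AB}=\operatorname{span}\{\bm h_A,\bm h_B\}$.

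\emph{Nonnegativity.} The eigenvalues of $M$ are $\pm1$ and conjugate pairs $e^{\pm i\theta}$, so
\begin{equation*}
\det(I+M)=2^{m_+}\,0^{m_-}\prod_{\text{pairs}}(2+2\cos\theta_j),
\end{equation*}
where $m_\pm$ are the multiplicities of $\pm1$. Every factor is nonnegative, hence $\Delta_k(W)\ge0$. The same formula also bounds $\det(I+M)$ by $2^{2k}$, with equality only if $M=I$. Since $\det(I+J_k)=2^{2k-1}$, I first need the sharper statement that $\Delta_k(W)\le1$ with equality only at $W=I$. I will get it from the structure of $M=W+(J_k-I)W$, which is a rank-two perturbation of $W$.

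\emph{Reduction to the support of $W$.} Let $S$ be the set of labels moved by $W$ or carrying a minus sign, and set $s=|S|$. Consider vectors $x$ supported off $S$ with $x\perp\Pi_{AB}$; these are fixed by both $W$ and $J_k$. Such vectors, together with their images, span a subspace of dimension at least $2k-s-2$ on which $M$ acts trivially. The determinant therefore factorizes as $2^{\ge 2k-s-2}$ times a determinant on an invariant complement of dimension at most $s+2$. That complement is spanned by the labels in $S$ together with the two vectors $\bm h_A,\bm h_B$ projected off them. On it I write $M$ explicitly in terms of the overlaps $\bm h_{A,B}^\top W\bm h_{A,B}$, which are sums of $\pm1/k$ over the signed cycles of $W$.

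\emph{Small $s$: exact evaluation.} For $s\le 2$ I will evaluate $\Delta_k$ case by case.
\begin{itemize}
\item A single sign flip.
\item A transposition within one batch, with each of its sign choices.
\item A transposition between the batches, with each of its sign choices.
\item Two sign flips, in the same batch or in different batches.
\end{itemize}
I expect the transposition of a label in $A$ with a label in $B$, with the appropriate sign pattern, to give exactly $\Delta_k=(k+1)^2/(2k^2)=r_k$. Checking the four sign choices should show that exactly one attains this value, which produces the $k^2$ maximizers. All the other $s\le2$ cases should give at most $\max\{1/2,(k+2)^2/(4k^2)\}$: the sign flips give roughly $1/2$, and same-batch transpositions give the $(k+2)^2/(4k^2)$-type value.

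\emph{Large $s$: concavity bound.} For $s\ge3$ I will use
\begin{equation*}
\det(I+M)=\prod_{\lambda}|1+\lambda|=\prod_\lambda\sqrt{2+2\cos\theta_\lambda}
\end{equation*}
(or $0$ if $\det M=-1$), restricted to the $(s+2)$-dimensional block. Concavity of $\log$ then gives
\begin{equation*}
\det\le \bigl(2+2\,\Tr M_{\rm block}/(s+2)\bigr)^{(s+2)/2}.
\end{equation*}
Here $\Tr M=\Tr W-\bm h_A^\top W\bm h_A-\bm h_B^\top W\bm h_B+O(1/k)$ and $\Tr W\le 2k-s$, so the trace on the block is at most about $2+O(1/k)$ out of $s+2$. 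This should force $\Delta_k\lesssim 2\cdot 2^{-s}\cdot(\text{const})\le q_k$ once $s\ge3$.

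\emph{Main obstacle.} The hard step is making the $s\ge3$ estimate sharp enough for every $k\ge3$ without losing a constant factor to the overlap terms. If the concavity bound is too weak at $s=3,4$ and small $k$, I will first try enumerating the signed-cycle types with $s\le4$ explicitly, as in the tables for $k=2$. Since $\Delta_k$ depends only on the cycle type and on how the cycles split between batches, this is a finite calculation in the parameter $1/k$. I would then apply the concavity bound only for $s\ge5$, where the factor $2^{-s}$ leaves ample room.
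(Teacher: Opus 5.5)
Your strategy (exact evaluation for small support, concavity over the eigenvalues of $J_kW$ for large support) is a genuinely different route from the paper's, and it can be completed. Your nonnegativity argument is the paper's. But several predictions are wrong, and the estimate is much tighter than you expect.

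\emph{Small support.} A single sign flip has $\det(J_kW)=\det W=-1$, so $J_kW$ has eigenvalue $-1$ and $\Delta_k=0$, not roughly $1/2$. Two sign flips give $0$ (same batch) or $1/k^2$ (different batches). A same-batch transposition gives $1/2$ (sign product $-1$) or $0$. Your between-batch transposition analysis is right.

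The value $(k+2)^2/(4k^2)$ never occurs for $s\le2$. It is attained by a signed $3$-cycle through two labels of one batch and one of the other, e.g.\ $e_{a_1}\mapsto-e_{a_2}$, $e_{a_2}\mapsto e_b$, $e_b\mapsto-e_{a_1}$. This has $s=3$ and gives $\Delta_k=q_k$ exactly for $k=3,4$. So exact treatment of $s\le4$ is forced, not a fallback. For this $W$ your concavity bound gives $2^{-3/2}(1+6/(5k))^{5/2}\approx0.82>25/36$ at $k=3$, and no estimate with slack can work there.

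\emph{Large support.} Let $m$ be the dimension of the invariant block and $T$ its trace. Concavity gives $\Delta_k\le2^{1-m/2}(1+T/m)^{m/2}$, which decays like $2^{-s/2}$, not $2^{-s}$. The cross terms are also not $O(1/k)$. Exactly,
$\Tr(J_kW)=\Tr W-\bm h_A^\top W\bm h_A-\bm h_B^\top W\bm h_B+\bm h_B^\top W\bm h_A-\bm h_A^\top W\bm h_B\le2k-s-2+2s/k$.
So in the generic case $m=s+2$ you only get $T\le2s/k$, which exceeds $2$ once $s>k$. With this exact trace bound and the true $m$ (smaller than $s+2$ when a batch lies inside $S$), the estimate does close for all $s\ge5$ and $k\ge3$. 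The margin is thin, though. In the tightest case $(k,s)=(3,5)$ one has $m=6$ and $\Tr(J_kW)\le7/3$, giving $\Delta_3\le(25/9)^3/32\approx0.670$ against $q_3=25/36\approx0.694$. Replacing $(1+T/m)^{m/2}$ by $e^{T/2}$ would already fail there. You must carry the constants exactly.

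\emph{Comparison with the paper.} The paper uses the rank-two structure exactly rather than through a spectral inequality. It writes $I+J_kW=(I+W)+\mathsf H\mathsf K\mathsf H^\top W$ and uses the Cayley transform $W(I+W)^{-1}=I/2+\Omega$ with the matrix determinant lemma. When $I+W$ is invertible this gives the closed form $\Delta_k(W)=2^{2-\tau(W)}(1/2-t)^2$, where $t=\bm h_A^\top\Omega\bm h_B$ and $|t|\le\sum_C|C\cap A|\,|C\cap B|/(2k)$. When $I+W$ is singular, $\Delta_k$ vanishes unless exactly two signed cycles carry the eigenvalue $-1$, and a second closed form covers that case. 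Organizing by $\tau$ rather than by support captures the $2^{-\tau}$ decay, which for a single cycle is $2^{3-s}$ up to the factor $(1/2-t)^2$, whereas concavity sees only $2^{-s/2}$. The small cases then reduce to a short table.

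Your approach needs only a generic spectral inequality and invariance under permutations within each batch. It pays for this with a symbolic enumeration of all signed, batch-colored cycle patterns of support at most four, and with margins of a few percent at $k=3$.
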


\begin{proof}
  The determinant is nonnegative because the nonreal eigenvalues of the orthogonal matrix $J_kW$ occur in conjugate pairs, every $+1$ eigenvalue contributes two, and any $-1$ eigenvalue makes the determinant zero.
  To evaluate it, we use the fact that $J_k$ differs from the identity only in the plane spanned by $\bm h_A$ and $\bm h_B$.
  We write $J_k=I+\mathsf H\mathsf K\mathsf H^{\top}$, where $\mathsf H$ is the $2k\times2$ matrix with columns $\bm h_A$ and $\bm h_B$, and $\mathsf K=\left(\begin{smallmatrix}-1&1\\-1&-1\end{smallmatrix}\right)$ represents $J_k-I$ in this plane.
  Then $I+J_kW=(I+W)+\mathsf H\mathsf K\mathsf H^{\top}W$ is a rank-two perturbation of $I+W$.

  We first assume that $I+W$ is invertible.
  A signed cycle of length $\ell$ with sign product $s$ contributes the factor $1-(-1)^\ell s\in\{0,2\}$ to $\det(I+W)$, so invertibility means that every signed cycle contributes $2$.
  Hence $\det(I+W)=2^{c(W)}$, where $c(W)$ is the number of signed cycles, including fixed labels.
  We measure the distance of $W$ from the identity by
  \begin{equation}
    \tau(W)=2k-c(W)=\sum_C(|C|-1),
    \label{eq:sm-signed-cycle-deficit}
  \end{equation}
  the minimal number of transpositions needed to write $\pi$, where the sum runs over the cycles $C$ of $W$.
  Since $W$ is orthogonal, the symmetric part of $W(I+W)^{-1}$ is $I/2$, and we write
  \begin{equation}
    W(I+W)^{-1}=\frac12I+\Omega,
    \qquad
    \Omega^{\top}=-\Omega.
    \label{eq:sm-signed-cayley-transform}
  \end{equation}
  With $t=\bm h_A^{\top}\Omega\bm h_B$, the matrix determinant lemma reduces the determinant to a two-dimensional one,
  \begin{equation}
    \det(I+J_kW)=2^{c(W)}\det\!\left[I_2+\mathsf K\begin{pmatrix}1/2&t\\-t&1/2\end{pmatrix}\right]=2^{c(W)+1}\left(\frac12-t\right)^2,
  \end{equation}
  and division by $2^{2k-1}$ gives
  \begin{equation}
    \Delta_k(W)=2^{2-\tau(W)}\left(\frac12-t\right)^2.
    \label{eq:sm-cycle-scalar-determinant}
  \end{equation}

  The size of $t$ is controlled by the cycle structure.
  On a signed cycle of length $\ell$ without a $-1$ eigenvalue,
  \begin{equation}
    (I+W)^{-1}=\frac12\sum_{j=0}^{\ell-1}(-W)^j.
    \label{eq:sm-signed-cycle-inverse}
  \end{equation}
  Since $\Omega=I/2-(I+W)^{-1}$, every off-diagonal entry of $\Omega$ within a cycle has absolute value $1/2$, and $\Omega$ has no entries between different cycles.
  As $t=k^{-1}\sum_{a\in A,\,b\in B}\Omega_{ab}$, a cycle $C$ contributes at most $|C\cap A|\,|C\cap B|/(2k)$, so that
  \begin{equation}
    |t|\leq\frac1{2k}\sum_C|C\cap A|\,|C\cap B|.
    \label{eq:sm-cycle-cayley-cut-bound}
  \end{equation}
  For a general bound, we use $|C\cap A|\,|C\cap B|\leq|C|^2/4$ and
  \begin{equation}
    \sum_{C\ \mathrm{nontrivial}}|C|^2\leq(\tau(W)+1)^2.
    \label{eq:sm-cycle-length-square-bound}
  \end{equation}
  The last inequality holds because replacing two lengths $\ell_1,\ell_2\geq2$ in the sum by the single length $\ell_1+\ell_2-1$ preserves $\sum_C(|C|-1)$ and increases the sum of squares by $2(\ell_1-1)(\ell_2-1)-1>0$.
  For $\tau(W)>0$, repeating this replacement leaves the single length $\tau(W)+1$.
  For $\tau(W)=0$, the sum is empty and the inequality holds trivially.
  Hence $|t|\leq(\tau(W)+1)^2/(8k)$.

  We now treat the invertible cases in order of $\tau(W)$.
  For $\tau(W)=0$, the identity is the only invertible signed permutation, and $\Delta_k(I_{2k})=1$.
  For $\tau(W)=1$, $W$ is a single signed transposition with sign $+1$ on all fixed labels.
  A transposition within a batch has $t=0$ and $\Delta_k(W)=1/2$.
  A transposition between the batches has two invertible sign choices, with $t=\pm1/(2k)$.
  The choice with $t=-1/(2k)$ gives $\Delta_k(W)=2[(k+1)/(2k)]^2=r_k$, and the other gives $(k-1)^2/(2k^2)<1/2$.
  These are the $k^2$ signed permutations with $\Delta_k(W)=r_k$, one for each pair of labels from different batches.
  For $\tau(W)=2,3,4$, we list the possible nontrivial cycle lengths and the largest value of the sum in Eq.~\eqref{eq:sm-cycle-cayley-cut-bound}.
  Equation~\eqref{eq:sm-cycle-scalar-determinant} then gives the bounds
  \begin{equation}
    \begin{array}{c|c|c|c}
      \tau(W) & \text{nontrivial cycle lengths} & \max\sum_C|C\cap A|\,|C\cap B| & \text{bound on }\Delta_k(W)             \\
      \hline
      2       & 3;\ 2+2                         & 2                              & \bigl(\tfrac12+\tfrac1k\bigr)^2         \\
      3       & 4;\ 3+2;\ 2+2+2                 & 4                              & \tfrac12\bigl(\tfrac12+\tfrac2k\bigr)^2 \\
      4       & 5;\ 4+2;\ 3+3;\ 3+2+2;\ 2+2+2+2 & 6                              & \tfrac14\bigl(\tfrac12+\tfrac3k\bigr)^2
    \end{array}.
    \label{eq:sm-small-deficit-bounds}
  \end{equation}
  For a given $k$, the admissible patterns have total length at most $2k$, and the bounds hold for each of them.
  The bound for $\tau(W)=2$ equals $(k+2)^2/(4k^2)\leq q_k$.
  For $\tau(W)=3$, the bound is $49/72<25/36=q_3$ at $k=3$ and at most $1/2$ for $k\geq4$.
  For $\tau(W)=4$, it is $9/16<25/36$ at $k=3$ and below $1/2$ for $k\geq4$.
  For $\tau(W)\geq5$, we use $|t|\leq(\tau(W)+1)^2/(8k)$ together with $k\geq(\tau(W)+1)/2$, which follows from $\tau(W)\leq2k-1$.
  This gives
  \begin{equation}
    \Delta_k(W)
    \leq2^{2-\tau(W)}\left(\frac{\tau(W)+3}{4}\right)^2
    \leq\frac12,
    \label{eq:sm-large-deficit-channel-bound}
  \end{equation}
  since the middle expression equals $1/2$ at $\tau(W)=5$ and decreases thereafter.

  It remains to treat signed permutations for which $I+W$ is singular, that is, with signed cycles on which $W$ has the eigenvalue $-1$.
  If their number is odd, then $\det(J_kW)=-1$, so the orthogonal matrix $J_kW$ itself has the eigenvalue $-1$ and $\Delta_k(W)=0$.
  If there are at least three such cycles, the rank-two perturbation cannot remove the kernel of $I+W$, and again $\Delta_k(W)=0$.
  In the remaining case, there are exactly two such cycles, of lengths $\ell_1$ and $\ell_2$.
  Let $\bm x_1$ and $\bm x_2$ be the corresponding normalized eigenvectors with eigenvalue $-1$, each supported on its cycle.
  Their nonzero coordinates are $\pm1/\sqrt{\ell_j}$, and we denote by $\alpha_j$ and $\beta_j$ the sums of these signs over the labels in batches $A$ and $B$, respectively.
  The kernel of $I+W$ is spanned by $\bm x_1$ and $\bm x_2$.
  In an orthonormal basis consisting of $\bm x_1$, $\bm x_2$, and a basis of their orthogonal complement, the normal matrix $I+W$ is block diagonal, with a zero block on the kernel and an invertible block on the complement.
  We expand $\det(I+J_kW)$ by multilinearity in the columns of this basis.
  The two columns along $\bm x_1$ and $\bm x_2$ must be taken from the perturbation $\mathsf H\mathsf K\mathsf H^{\top}W$, and all other columns from $I+W$, because any three columns of a rank-two matrix are linearly dependent.
  The determinant therefore factorizes into the determinant of the invertible block and that of the $2\times2$ matrix $[\bm x_1,\bm x_2]^{\top}\mathsf H\mathsf K\mathsf H^{\top}W[\bm x_1,\bm x_2]$.
  The invertible block contributes a factor $2$ from each cycle without the eigenvalue $-1$ and the product $\ell_j$ of the nonzero eigenvalues of $I+W$ on the cycle of length $\ell_j$.
  Since $W\bm x_j=-\bm x_j$ and $\det\mathsf K=2$, the $2\times2$ determinant is $2\det(\mathsf H^{\top}[\bm x_1,\bm x_2])^2=2(\alpha_1\beta_2-\alpha_2\beta_1)^2/(k^2\ell_1\ell_2)$.
  Therefore
  \begin{equation}
    \Delta_k(W)
    =2^{-\tau(W)}\frac{(\alpha_1\beta_2-\alpha_2\beta_1)^2}{k^2}.
    \label{eq:sm-two-null-cycle-determinant}
  \end{equation}
  Since the two batches are disjoint, $|\alpha_j|+|\beta_j|\leq\ell_j$, and hence $|\alpha_1\beta_2-\alpha_2\beta_1|\leq\ell_1\ell_2$.
  Together with $\tau(W)\geq(\ell_1-1)+(\ell_2-1)$, this gives
  \begin{equation}
    \Delta_k(W)
    \leq\frac1{k^2}
    \prod_{j=1}^{2}\frac{\ell_j^2}{2^{\ell_j-1}}
    \leq\frac{81}{16k^2},
    \label{eq:sm-null-cycle-channel-bound}
  \end{equation}
  where $\ell^2/2^{\ell-1}$ is maximized over positive integers by $\ell=3$, with value $9/4$.
  The final bound is below $25/36$ for $k=3$ and below $1/2$ for $k\geq4$.
  In all cases other than the identity and the $k^2$ transpositions above, we have thus shown $\Delta_k(W)\leq q_k$.
\end{proof}

\subsection{Proof of Theorem~\ref{thm:sm-general-k-haar}}

By Lemma~\ref{lem:sm-determinant-estimates}, the sum in Eq.~\eqref{eq:sm-general-k-normalized-mean} equals $1+k^2r_k^n$ plus at most $2^{2k}(2k)!$ further terms, each bounded by $q_k^n$.
The prefactor satisfies $\prod_j(1+2j/D)^{-1}=1+\mathcal{O}(2^{-n})$, and this relative correction is also included in $\mathcal{O}(q_k^n)$ because $q_k\geq1/2$.
This proves Eq.~\eqref{eq:sm-general-k-asymptotic}.

The mean purity also controls the MRE, without a separate second-moment calculation, because the purity has a lower bound that holds for every input.
Since $\mathcal C_k(\rho)$ acts on a $D$-dimensional space, $D\nu_k\geq1$, which gives $M_k\leq\ln D/(k-1)$ for every state.
Together with $\ln x\leq x-1$, this gives
\begin{equation}
  0\leq\frac{\ln D}{k-1}-\mathbb E_{\mathrm H}M_k
  =\frac{\mathbb E_{\mathrm H}\ln(D\nu_k)}{k-1}
  \leq\frac{D\mu_k-1}{k-1}=\mathcal{O}(r_k^n).
\end{equation}
The maximum of $M_k$ lies between this mean and $\ln D/(k-1)$, and division by $N$ proves Eq.~\eqref{eq:sm-general-k-density}.
Finally, the inequality between the trace norm and the Hilbert-Schmidt norm gives
\begin{equation}
  \mathbb E_{\mathrm H}\|\mathcal C_k(\rho)-I/D\|_1^2
  \leq D\mu_k-1=\mathcal{O}(r_k^n),
\end{equation}
and Markov's inequality implies convergence in probability.

\section{Derivations and additional checks for the SYK model}
\label{sec:sm-syk-methods}

In this section, we derive the effective action and the SD equations of the main text and classify their saddles by copy-permutation symmetry.
We then examine the two assumptions underlying the MRE density $m(\beta,t)$, namely that the two saddles dominate $\mathbb E_JZ_R$ and that the annealed and quenched averages agree.
We prove the second assumption for the normalization $Z_B$ and test both assumptions numerically by searching for other saddles and by computing the MRE directly at finite $N$.
The reduction of the SD equations to two scalar functions and the numerical method used to solve them are given in the End Matter.

\subsection{Effective action and SD equations}

The effective action in Eq.~\eqref{eq:syk-action} contains a quartic term from the disorder average, a constraint term enforcing the definition of $G$, and a determinant term from integrating out the fermions.
We derive these terms in turn, using the contour and copy basis of the main text with the same symbol $\gamma_i$ and normalization for the Majorana operators and their Grassmann fields.
All double integrals below are taken over $[0,2]^2$, and all transposes exchange the contour coordinates and copy indices without complex conjugation.

The disorder average couples the four copies through the term quartic in $\widehat G$.
For each index set $\mathcal A=\{i_1<i_2<i_3<i_4\}$, we define $X_{\mathcal A}=\sum_a\int\dd u\,\dot z(u)\gamma_{i_1}^{(a)}(u)\gamma_{i_2}^{(a)}(u)\gamma_{i_3}^{(a)}(u)\gamma_{i_4}^{(a)}(u)$, so that the interaction contributes $\exp(\sum_{\mathcal A}J_{\mathcal A}X_{\mathcal A})$.
Since the couplings are independent Gaussian variables of variance $6J^2/N^3$, the disorder average gives
\begin{equation}
  \mathbb E_J e^{\sum_{\mathcal A}J_{\mathcal A}X_{\mathcal A}}
  =\exp\!\left(\frac{3J^2}{N^3}\sum_{\mathcal A}X_{\mathcal A}^2\right)
  =\exp\!\left[\frac{NJ^2}{8}\int\dd u\,\dd v\,\dot z(u)\dot z(v)\sum_{a,b}\widehat G_{ab}(u,v)^4\right].
  \label{eq:sm-syk-disorder-average}
\end{equation}
For the second equality, we move each field at $v$ in $X_{\mathcal A}^2$ next to the field with the same Majorana index at $u$, which requires six fermionic interchanges and gives a positive sign.
The resulting bilinears commute, and summing over $\mathcal A$ reproduces $(N\widehat G_{ab})^4/4!$ because terms with repeated indices vanish and each set of four distinct indices occurs $4!$ times.
The sum includes every pair of copies $a,b$ because all copies share the same couplings.

Before integrating out the fermions, we absorb the reflection $\mathcal R$ into the interaction by rotating the fields.
We write $\widetilde\gamma_i(u)=O(u)^{\top}\gamma_i(u)$ and $G_{ab}(u,v)=N^{-1}\sum_i\widetilde\gamma_i^{(a)}(u)\widetilde\gamma_i^{(b)}(v)$, so that $\widehat G(u,v)=O(u)G(u,v)O(v)^{\top}$.
The insertion of $\mathcal R$ imposes the matching condition $\gamma_i(1^+)=R\gamma_i(1^-)$, which becomes $\widetilde\gamma_i(1^+)=\widetilde\gamma_i(1^-)$ after this rotation.
The orthogonal matrix $O(u)$ also mixes annihilation operators among the copies without mixing them with creation operators, so it leaves the product Fock vacuum invariant.
The boundary conditions at the contour endpoints are therefore unchanged.
Since $O(u)$ is constant on each contour segment and the rotated fields are continuous at $u=1$, their kinetic term is $S_{\mathrm{kin}}=\tfrac14\sum_{i,a}\int\dd u\,\widetilde\gamma_i^{(a)}\partial_u\widetilde\gamma_i^{(a)}$ for the normalization $\{\gamma_i,\gamma_j\}=2\delta_{ij}$.
Consequently, the explicit dependence on $R$ in the effective action appears only through $\widehat G$ in the quartic term.

The free propagator of the rotated fields is fixed by the boundary state and carries the KM boundary condition.
For each pair of Majorana operators, the complex fermion $f_m=(\gamma_{2m-1}-i\gamma_{2m})/2$ obeys $\bra B f_mf_m^\dagger\ket B=1$ and $\bra B f_m^\dagger f_m\ket B=0$.
The free particle and hole kernels are therefore $F_0=\Theta I_4$ and $-F_0^{\top}$.
Since $F_0=[\tfrac12\operatorname{sgn}(u-v)+\tfrac12]I_4$, the boundary condition resides in the symmetric part $\tfrac12I_4$, and replacing $F_0$ by its antisymmetric part before solving Dyson's equation would discard it.

We now impose the definition of $G$ and integrate out the fermions.
We insert the identity
\begin{equation}
  1=\int\mathcal DG\,\mathcal D\Sigma\,\exp\!\left[-\frac N2\int\dd u\,\dd v\sum_{a,b}\Sigma_{ab}(u,v)\left(G_{ab}(u,v)-\frac1N\sum_i\widetilde\gamma_i^{(a)}(u)\widetilde\gamma_i^{(b)}(v)\right)\right],
  \label{eq:sm-syk-constraint}
\end{equation}
where the integration contour of the antisymmetric multiplier $\Sigma$ implements the functional delta constraint and the normalization of this representation is included in the measures.
The second term in the exponent is quadratic in the fermions and shifts the inverse free propagator for each Majorana operator by $-\Sigma$, corresponding to a shift of $-2\Sigma$ for the complex fermion.
Relative to the free integral, the Pfaffian for one Majorana pair is
\begin{equation}
  \left[\det(1-2F_0\Sigma)\det(1+2F_0^{\top}\Sigma)\right]^{1/2}
  =\det(1-2F_0\Sigma),
  \label{eq:sm-syk-pair-determinant}
\end{equation}
since $\Sigma^{\top}=-\Sigma$ implies $\det(1+2F_0^{\top}\Sigma)=\det(1-2\Sigma F_0)=\det(1-2F_0\Sigma)$, with the phase continued from $\Sigma=0$.
The $N/2$ pairs thus contribute $\det(1-2F_0\Sigma)^{N/2}$, which combines with the constraint and the disorder average to give $\mathbb E_JZ_R=\int\mathcal DG\,\mathcal D\Sigma\,e^{-NI[G,\Sigma]}$ with the action in Eq.~\eqref{eq:syk-action}.
Field-independent factors in the measure are fixed by the normalization $Z_R=1$ at $J=0$, where $\mathcal R$ leaves the product vacuum invariant.

Varying this action with respect to $\Sigma$ relates $G$ to $\Sigma$ through Dyson's equation, and varying it with respect to $G$ gives the self-energy equation.
We write $\Tr$ for the trace over copy indices and contour coordinates, $\Tr(AB)=\int\dd u\,\dd v\sum_{a,b}A_{ab}(u,v)B_{ba}(v,u)$, and define the particle kernel dressed by $\Sigma$ as $F=(1-2F_0\Sigma)^{-1}F_0$, which is Eq.~\eqref{eq:syk-dyson}.
The determinant term varies as $\delta\ln\det(1-2F_0\Sigma)=\Tr[(1-2F_0\Sigma)^{-1}(-2F_0\,\delta\Sigma)]=-2\Tr(F\,\delta\Sigma)$.
Using $\delta\Sigma_{ba}(v,u)=-\delta\Sigma_{ab}(u,v)$ to write $\Tr(F\,\delta\Sigma)$ as a double integral, we obtain
\begin{equation}
  \delta_\Sigma I
  =\Tr(F\,\delta\Sigma)+\frac12\int\dd u\,\dd v\sum_{a,b}\delta\Sigma_{ab}(u,v)G_{ab}(u,v)
  =\frac12\int\dd u\,\dd v\sum_{a,b}\delta\Sigma_{ab}(u,v)\bigl[G_{ab}(u,v)-2F_{ab}(u,v)\bigr].
  \label{eq:sm-syk-sigma-variation}
\end{equation}
Since $\delta\Sigma$ is antisymmetric, only the antisymmetric part of the bracket contributes.
With $G^{\top}=-G$, this part is $G-(F-F^{\top})$, and stationarity gives $G=F-F^{\top}$, which is Eq.~\eqref{eq:syk-majorana-propagator}.
Here, $-F^{\top}=(1+2F_0^{\top}\Sigma)^{-1}(-F_0^{\top})$ is the dressed hole kernel, so $G$ is the sum of the particle and hole contributions.

For the variation with respect to $G$, only the constraint term and the quartic term contribute.
The relation $\widehat G(u,v)=O(u)G(u,v)O(v)^{\top}$ gives $\delta\widehat G_{ab}(u,v)=\sum_{c,d}O_{ac}(u)\,\delta G_{cd}(u,v)\,O_{bd}(v)$.
Since varying $\widehat G_{ab}^4$ produces $4\widehat G_{ab}^3\,\delta\widehat G_{ab}$, moving the two factors of $O$ onto $\widehat G^{\circ3}$ yields
\begin{equation}
  \delta_GI
  =\frac12\int\dd u\,\dd v\sum_{a,b}\delta G_{ab}(u,v)\Bigl\{\Sigma_{ab}(u,v)-J^2\dot z(u)\dot z(v)\bigl[O(u)^{\top}\widehat G(u,v)^{\circ3}O(v)\bigr]_{ab}\Bigr\}.
  \label{eq:sm-syk-g-variation}
\end{equation}
Both terms in the braces are antisymmetric, because $\Sigma$ and $\widehat G$ are antisymmetric and the elementwise cube preserves this property.
Stationarity with respect to $G$ therefore gives Eq.~\eqref{eq:syk-self-energy}, which completes the derivation of the SD equations.

\subsection{Permutation symmetry and the eight equivalent saddles}

The SD equations are invariant under permutations of the four copies, and a saddle with $h\neq0$ breaks this symmetry down to $C_3$.
To see the invariance, let $P$ be a permutation matrix acting on the copy indices.
Since $P$ commutes with $R=I_4-\bm1\bm1^{\top}/2$, it also commutes with the change of basis $O(u)$, so $G\mapsto PGP^{\top}$ induces $\widehat G\mapsto P\widehat GP^{\top}$.
The free propagator $F_0=\Theta I_4$ is proportional to $I_4$, so this transformation, together with $\Sigma\mapsto P\Sigma P^{\top}$, leaves each term of the action in Eq.~\eqref{eq:syk-action} unchanged and maps solutions of the SD equations to solutions.

The symmetric saddle $G=gI_4$ is invariant under every permutation, whereas a saddle $G=gI_4+hA$ with $h\neq0$ is invariant only under permutations with $PAP^{\top}=A$.
The matrix $A$ singles out copy 1, whose row sum $3$ differs from the row sums $-1$ of the other copies.
It also orients the remaining copies, because its positive entries $A_{23}=A_{34}=A_{42}=1$ form the directed cycle $2\to3\to4\to2$.
A permutation with $PAP^{\top}=A$ must therefore fix copy 1 and preserve this cycle, which leaves only the three cyclic permutations of copies 2, 3, and 4.
The residual symmetry group is thus $C_3$, and acting with all permutations on one saddle gives $|S_4|/|C_3|=24/3=8$ distinct saddles with equal action.
Equivalently, each saddle with $h\neq0$ is labeled by a distinguished copy and an orientation of the other three copies, which gives $4\times2=8$ saddles.

The same symmetry pattern appears in the leading terms of the Haar mean of the purity.
For large $N$, the mean purity in Eq.~\eqref{eq:sm-haar-exact-mean-sm} is dominated by $8\cdot3^N=8\cdot9^n$, which comes from the eight pairs $(\pi,\boldsymbol\epsilon)$ with determinant 9 in Eq.~\eqref{eq:sm-first-determinant-histogram}.
In each of these pairs, $\pi$ is a three-cycle with one fixed label, and $Q_{\boldsymbol\epsilon}$ is negative on the three-cycle and positive on the fixed label.
Each term is therefore labeled by a distinguished copy, namely the fixed label, and an orientation of the three-cycle on the other copies, exactly as for the saddles with $h\neq0$.
Relabeling the copies permutes these eight terms, and the permutations that leave a given term unchanged again form $C_3$.
The saddles with broken symmetry in the SYK model thus have the same symmetry pattern as the leading terms of the Haar mean of the purity.

\subsection{Annealed and quenched averages of the KM normalization}
\label{sec:sm-syk-normalization}

We now show that $\ln\mathbb E_J Z_B$ and $\mathbb E_J\ln Z_B$ agree to leading order in $N$ at fixed $\beta J$.
\begin{proposition}[Self-averaging of the KM normalization]
  \label{prop:sm-zb-averaging}
  For the Gaussian SYK ensemble of Eq.~\eqref{eq:syk-hamiltonian},
  \begin{equation}
    0\leq\ln\mathbb E_J Z_B-\mathbb E_J\ln Z_B
    \leq C_n+c\frac{(\beta J)^2}{N},
    \qquad n=N/2,
    \label{eq:sm-zb-averaging-bound}
  \end{equation}
  where $c$ is independent of $N$ and $\beta$ for sufficiently large $N$, and
  \begin{equation}
    C_n=\sum_{k=1}^{n-1}\bigl[\psi(2k)-\psi(k)-\ln2\bigr]
    =\frac14\ln n+\mathcal{O}(1).
    \label{eq:sm-gaussian-log-constant}
  \end{equation}
  Here, $\Gamma(x)$ is the gamma function, $\psi(x)=\mathrm d\ln\Gamma(x)/\mathrm dx$ is the digamma function, and $C_1=0$.
  In particular, the right-hand side of Eq.~\eqref{eq:sm-zb-averaging-bound} is $o(N)$ at fixed $\beta J$.
  Moreover, $\mathbb E_JZ_B=2^{-N/2}\mathbb E_J\operatorname{Tr}e^{-\beta H}$.
\end{proposition}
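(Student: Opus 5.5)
The plan is to use the $O(N)$ invariance of the Gaussian SYK ensemble to split the gap into a disorder part and an overlap part. The lower bound is Jensen's inequality for $\ln$. The couplings $J_{ijkl}$ form an i.i.d.\ Gaussian antisymmetric tensor, so their law is invariant under $\gamma_i\mapsto\sum_jO_{ij}\gamma_j$ with $O\in O(N)$. Every Gaussian unitary induces such a rotation, so $Z_B$ has the same law as $\bra{B'}e^{-\beta H}\ket{B'}$, where $\ket{B'}=\widehat O\ket B$ is a Gaussian state drawn uniformly over the orbit and independent of $J$.

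Write $Z_B=\sum_k e^{-\beta E_k}p_k$ with $p_k=|\langle E_k|B'\rangle|^2$ in a fixed eigenbasis of $H$. Conditional on $H$, averaging over $B'$ should give $\mathbb E_{B'}p_k=2^{-n}$. The orbit average $\mathbb E_{B'}\ket{B'}\bra{B'}$ commutes with every Gaussian unitary, including all parities $i\gamma_{2m-1}\gamma_{2m}$ and all products $\gamma_i\gamma_j$ that permute Fock states. It is therefore diagonal and uniform in the occupation basis, hence equal to $I/2^n$. This proves $\mathbb E_JZ_B=2^{-N/2}\mathbb E_J\Tr e^{-\beta H}$. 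I then decompose
\begin{equation*}
  \ln\mathbb E_JZ_B-\mathbb E_J\ln Z_B
  =\bigl[\ln\mathbb E_J\mathcal Z-\mathbb E_J\ln\mathcal Z\bigr]
  +\mathbb E_J\bigl[\ln\mathbb E_{B'}Z_B-\mathbb E_{B'}\ln Z_B\bigr],
\end{equation*}
with $\mathcal Z=2^{-n}\Tr e^{-\beta H}$. Both brackets are nonnegative.

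For the first (disorder) bracket I would use the Gaussian Herbst bound: if $X(J)$ has $\sum_A\sigma_A^2(\partial_{J_A}X)^2\leq L^2$ everywhere, then $\ln\mathbb E e^{X-\mathbb EX}\leq L^2/2$. Apply it to $X=\ln\mathcal Z$, for which $\partial_{J_A}X=\beta\langle\gamma_A\rangle_\beta$ up to a phase, and $\sigma_A^2=6J^2/N^3$. Then $L^2=(6J^2\beta^2/N^3)\sup_\rho\sum_{|A|=4}|\Tr\rho\gamma_A|^2$.

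The main obstacle is to show $\sup_\rho\sum_{|A|=4}|\Tr\rho\gamma_A|^2=\mathcal O(N^2)$ rather than the trivial $\mathcal O(N^4)$. A Fock state saturates this at $\binom n2$. My first attempt would write $\sum_A|\Tr\rho\gamma_A|^2=\Tr[(\rho\otimes\rho)\,\mathcal S]$ with $\mathcal S=\sum_A\gamma_A\otimes\gamma_A^\dagger$. This $\mathcal S$ is a polynomial in the $O(N)$-invariant Casimir-like operator $\sum_i\gamma_i\otimes\gamma_i$, so it is diagonal in its spectral decomposition. I would then bound the largest eigenvalue explicitly via Krawtchouk-type polynomials. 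This yields $c(\beta J)^2/N$.

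For the second (overlap) bracket, $w\mapsto\ln\mathbb E\sum_kw_kp_k-\mathbb E\ln\sum_kw_kp_k$ is homogeneous of degree zero on positive weights. I would try to show, via concavity of $\ln$ and exchangeability of the $p_k$ under the residual symmetry, that it is maximized when the weight concentrates on a single vector. The bound then reduces to $\sup_{\ket\phi}\bigl[\ln 2^{-n}-\mathbb E_{B'}\ln|\langle\phi|B'\rangle|^2\bigr]$.

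For $\ket\phi$ Gaussian I expect $|\langle B|B'\rangle|^2$, with $B'$ uniform on $SO(2n)/U(n)$, to factor into independent Beta-distributed variables over the canonical angles. Their log-moments produce the digamma sum $C_n=\sum_{k<n}[\psi(2k)-\psi(k)-\ln2]=\tfrac14\ln n+\mathcal O(1)$ by Stirling. I would then need to argue that a Gaussian $\ket\phi$ is the worst case, or otherwise bound general $\ket\phi$ by the same quantity. This is the second delicate point, and I would attack it by the same Beta-factorization in a Wick-ordered basis adapted to $\ket\phi$. Adding the two brackets gives Eq.~\eqref{eq:sm-zb-averaging-bound}.
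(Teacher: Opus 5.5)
Your architecture (orbit averaging, then a disorder bracket controlled by Gaussian concentration plus an overlap bracket) is the paper's. However, both steps you flag as delicate remain open, and the first fails as proposed.

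Bounding the largest eigenvalue of $\mathcal S=\sum_{|A|=4}\gamma_A\otimes\gamma_A$ cannot give $\mathcal O(N^2)$. The operators $X_i=\gamma_i\otimes\gamma_i$ are commuting involutions, and their products $\gamma_B\otimes\gamma_B$ with $B\neq\varnothing$ are traceless. They therefore carry the regular representation of $\mathbb Z_2^N$ on the $2^N$-dimensional two-copy space. Consequently the common $+1$ eigenvector exists, and on it $\mathcal S=e_4(X_1,\ldots,X_N)=\binom{N}{4}$. That yields only $L^2\approx(\beta J)^2N/4$, a gap bound of order $N$ rather than $o(N)$. The maximizing vector is entangled. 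The $\mathcal O(N^2)$ estimate holds only because the relevant state is the product $\rho\otimes\rho$. This is exactly the commutation-index bound $\Delta_4=\mathcal O(N^{-2})$ of Anschuetz et al., which the paper imports; it cannot be obtained from the spectrum of $\mathcal S$.

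The overlap bracket has two problems.

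First, $\mathbb E_{B'}\ket{B'}\bra{B'}=I/2^n$ requires $B'$ to range over the full $O(N)$ orbit, which contains odd Gaussian states. Then $|\langle\phi|B'\rangle|^2=0$ with probability $1/2$ for every eigenvector $\phi$ of definite parity, so your single-vector supremum is $+\infty$. You have to stay in one parity sector, as the paper does:
\begin{itemize}
\item it factorizes $Z_B=Z_+p_B$;
\item it averages the boundary state over $SO(N)$ only, so that $\mathbb E_JZ_B=\mathbb E_JZ_+/d$;
\item it uses improper rotations solely to obtain $\mathbb E_JZ_+=\tfrac12\mathbb E_J\Tr e^{-\beta H}$.
\end{itemize}
Within a sector, your reduction to a single vector follows immediately from convexity of $-\ln$, with no exchangeability needed; the paper's lemma covers mixed $\rho$ directly anyway.

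Second, and more seriously, the bound for non-Gaussian states is the heart of the matter, and no Beta factorization of $|\langle\phi|B'\rangle|^2$ exists when $\phi$ is not Gaussian. The missing idea is to reveal the Haar frame one pair $(u,v)$ at a time and post-select the empty outcome of each new mode.
\begin{itemize}
\item For the (generally non-Gaussian) conditional state with covariance $\Gamma_r$, the empty probability is $q_r=(1+u^\top\Gamma_rv)/2$.
\item Since $\Gamma_ru\perp u$ and $\|\Gamma_r\|_{\mathrm{op}}\le1$, conditionally on $u$ the quantity $u^\top\Gamma_rv$ is distributed as $aX$, with $a\le1$ and $X$ a coordinate of a uniform point on $S^{2r-2}$.
\item The function $\mathbb E[-\ln(1+aX)]=-\tfrac12\mathbb E\ln(1-a^2X^2)$ is increasing in $a$. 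Each step therefore costs at most the $\mathrm{Beta}(r-1,r-1)$ value $\psi(2r-2)-\psi(r-1)$, attained at $a=1$.
\item Parity fixes the last mode, and summing over $r=2,\ldots,n$ gives $\mathbb E[-\ln p_U]\le\ln d+C_n$ for every $\rho$ on $\mathcal H_+$.
\end{itemize}
Pure Gaussian states saturate every step. This is the precise sense in which your expectation that a Gaussian $\phi$ is the worst case is correct.
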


The proof rests on the exact factorization $Z_B=Z_+p_B$, where $Z_+=\Tr_{\mathcal H_+}e^{-\beta H}$ is the partition function in the even-parity sector $\mathcal H_+$ of dimension $d=2^{n-1}$, and $p_B=\bra B\rho_+\ket B$ is the probability of the boundary state in the thermal state $\rho_+=e^{-\beta H}|_{\mathcal H_+}/Z_+$.
We control the fluctuations of $Z_+$ by the concentration bound of Anschuetz et al.~\cite{anschuetz2025strongly} and those of $p_B$ by a bound for projection onto a random Gaussian state.

For the first factor, we write $H=-\sigma\sum_\alpha\xi_\alpha\gamma_\alpha$, where $\alpha$ runs over the $\binom N4$ index sets $i<j<k<l$, $\gamma_\alpha=\gamma_i\gamma_j\gamma_k\gamma_l$, $\sigma^2=6J^2/N^3$, and the $\xi_\alpha$ are independent standard Gaussian variables.
The commutation index of these monomials satisfies~\cite{anschuetz2025strongly}
\begin{equation}
  \Delta_4\equiv\sup_\rho\frac{1}{\binom N4}\sum_\alpha\bigl(\Tr\rho\gamma_\alpha\bigr)^2=\mathcal{O}(N^{-2}),
  \label{eq:sm-syk-commutation-index}
\end{equation}
where the supremum is over density operators on the full Fock space.
Every $\gamma_\alpha$ preserves parity, so cyclicity of the trace on $\mathcal H_+$ gives the first relation below.
Since $\rho_+$, extended by zero to the odd-parity sector, is a density operator on the full Fock space, Eq.~\eqref{eq:sm-syk-commutation-index} then bounds the gradient,
\begin{equation}
  \partial_{\xi_\alpha}\ln Z_+=\beta\sigma\Tr(\rho_+\gamma_\alpha),
  \qquad
  \|\nabla_\xi\ln Z_+\|_2^2\leq\beta^2\sigma^2\binom N4\Delta_4
  =\mathcal{O}\!\left(\frac{(\beta J)^2}{N}\right).
\end{equation}
The Gaussian concentration and exponential-moment estimates used in Ref.~\cite{anschuetz2025strongly} therefore imply
\begin{equation}
  0\leq\ln\mathbb E_J Z_+-\mathbb E_J\ln Z_+
  \leq c\frac{(\beta J)^2}{N}.
  \label{eq:sm-parity-thermal-bound}
\end{equation}

For the second factor, we use rotation invariance to replace the fixed boundary state by a random Gaussian state.
Let $U$ be a Gaussian unitary implementing a Haar-distributed rotation in $SO(N)$ of the Majorana operators, and let $\ket{B_U}=U\ket B$.
The sign ambiguity in the lift to $\operatorname{Spin}(N)$ does not affect the projector onto $\ket{B_U}$.
The following bound holds for any state of even parity, and we apply it to $\rho_+$ below.

\begin{lemma}[Overlap with a random Gaussian state]
  \label{lem:sm-gaussian-overlap}
  For any density operator $\rho$ supported on $\mathcal H_+$, the probability $p_U=\bra{B_U}\rho\ket{B_U}$ satisfies
  \begin{equation}
    0\leq\mathbb E_U[-\ln(dp_U)]\leq C_n.
    \label{eq:sm-gaussian-log-overlap}
  \end{equation}
\end{lemma}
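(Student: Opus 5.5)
The plan has three stages: prove the lower bound by Jensen's inequality, reduce the upper bound to pure states by convexity, and compare an arbitrary pure state with a Gaussian one, whose value should be exactly $C_n$.

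\emph{Lower bound.} I would first compute the mean projector $\mathbb E_U\ket{B_U}\bra{B_U}$. The operator $\ket B\bra B$ lies in $\mathcal H_+$, and the Gaussian unitaries implementing $SO(N)$ act on $\mathcal H_+$ through one of the two half-spin representations of $\operatorname{Spin}(N)$. That representation is irreducible, so Schur's lemma gives $\mathbb E_U\ket{B_U}\bra{B_U}=I_{\mathcal H_+}/d$. Hence $\mathbb E_Up_U=\Tr\rho/d=1/d$. Jensen's inequality for the convex function $-\ln$ then gives $\mathbb E_U[-\ln(dp_U)]\geq-\ln(d\,\mathbb E_Up_U)=0$.

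\emph{Reduction to pure states.} For fixed $U$, $p_U$ is linear in $\rho$, so $\rho\mapsto-\ln(dp_U)$ is convex. Averaging over $U$ keeps it convex, so its supremum over density operators on $\mathcal H_+$ is attained at extreme points. It therefore suffices to prove
\begin{equation*}
  \mathbb E_U\bigl[-\ln\bigl(d\,|\langle B_U|\phi\rangle|^2\bigr)\bigr]\leq C_n
\end{equation*}
for every normalized $\ket\phi\in\mathcal H_+$.

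\emph{Gaussian case.} Next I would show that $C_n$ is exactly the value of this average when $\ket\phi=\ket B$ is itself Gaussian. For two Gaussian states related by a rotation $W\in SO(N)$, the squared overlap is a determinant of a block of $W$. Equivalently, $|\langle B|B_U\rangle|^2=\prod_{k}\cos^2\theta_k$, where the $\theta_k$ are the principal angles between the complex structure of $\ket B$ and its rotated image. Under the Haar measure on $SO(2n)$, the orbit of complex structures is $SO(2n)/U(n)$, and $\ln|\det|$ of the relevant $n\times n$ complex block can be decomposed by the standard Bartlett-type (sequential Householder) argument into independent Beta-distributed factors. Their log-moments are digamma differences. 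Summing $\mathbb E[-\ln\mathrm{Beta}]$ over the factors, and adding the $-\ln d=-(n-1)\ln2$ term, should reproduce $\sum_{k=1}^{n-1}[\psi(2k)-\psi(k)-\ln2]$. A check at $n=1$ works: then $d=1$, the overlap is identically $1$, and $C_1=0$.

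\emph{Comparison with the Gaussian case (main obstacle).} The remaining step is to show that Gaussian $\phi$ maximize $\mathbb E_U[-\ln|\langle B_U|\phi\rangle|^2]$, that is, that they maximize the log-geometric mean of the Husimi function on the orbit $SO(2n)/U(n)$. I would use the holomorphic picture. The map $U\mapsto\langle B_U|\phi\rangle$ is, up to the coherent-state normalization, a holomorphic section of a line bundle over the Kähler manifold $SO(2n)/U(n)$. So $\ln|\langle B_U|\phi\rangle|^2$ equals $\ln$ of a holomorphic function minus a fixed Kähler potential, and the potential term is the same for all $\phi$. I would then try a Jensen/Mahler-measure inequality. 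Zeros of the holomorphic part only lower the mean of $\ln|f|^2$, by plurisubharmonicity. Gaussian $\phi$ are the highest-weight vectors, whose section has no zeros in the affine chart. This should give $\mathbb E\ln|\langle B_U|\phi\rangle|^2\geq\mathbb E\ln|\langle B_U|B\rangle|^2$ after the normalization $\|\phi\|=1$ is accounted for. I expect the normalization bookkeeping in this comparison to be the delicate part. If it fails, I would fall back on a Lieb–Solovej-type argument: expand $\phi$ in weight vectors of the maximal torus and bound the log-average factor by factor along the same Beta decomposition used in the Gaussian case.

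Finally, the asymptotic $C_n=\tfrac14\ln n+\mathcal O(1)$ follows from $\psi(2k)-\psi(k)-\ln2=-\tfrac1{4k}+\mathcal O(k^{-2})$. Summing this term by term gives a harmonic-type sum; its magnitude is $\tfrac14\ln n+\mathcal O(1)$, which is the order claimed in Eq.~\eqref{eq:sm-gaussian-log-constant}.
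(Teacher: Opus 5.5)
\textbf{There is a genuine gap: the upper bound is not proved.} Your lower bound (Schur's lemma plus Jensen) is fine, and so is the convexity reduction to pure states. Your claim that a Gaussian $\phi$ gives exactly $C_n$ is also true, though you only assert it. The gap is the step you flag as the main obstacle, which is in fact the entire content of the upper bound: that Gaussian states maximize $\mathbb E_U[-\ln|\langle B_U|\phi\rangle|^2]$ among normalized $\phi\in\mathcal H_+$. This is a Wehrl/Lieb--Solovej-type extremal statement on $SO(2n)/U(n)$, and your heuristic for it points the wrong way.

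\emph{The heuristic is reversed.} The inequality requires the Gaussian state to have the \emph{smallest} mean of $\ln|\langle B_U|\phi\rangle|^2$. The argument ``zeros only lower the mean and the Gaussian section has none'' would instead make it the largest. In the one-dimensional analogue with spin-$j$ coherent states $|\Omega\rangle$ and $\Omega$ uniform on $S^2$, take $m<j$. Then $\mathbb E\ln|\langle\Omega|j,m\rangle|^2=\ln\binom{2j}{j+m}-2j$ exceeds the coherent value $-2j$, even though this section vanishes inside the chart centered at the coherent state.

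\emph{The proposed tools do not close it.} Plurisubharmonicity only gives sub-mean-value bounds anchored at point values such as $|\langle B|\phi\rangle|$. Those values can vanish, so you get no comparison at fixed $\|\phi\|=1$. On $S^2$ the inequality can be rescued: factor the section into linear factors whose log-averages do not depend on $\phi$, then apply a Cauchy--Schwarz (Bombieri-norm) estimate. On $SO(2n)/U(n)$, however, the zero locus of a section is a hypersurface, so no such factorization exists. The Lieb--Solovej theorems also do not cover this orbit for general $n$, so your fallback amounts to proving a new extremal theorem.

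\textbf{The missing idea is to avoid the global extremal problem by conditioning.}
\begin{itemize}
\item[(i)] Build the Haar frame one mode at a time and post-select the empty outcome in each mode. Then $p_U=\prod_{r=2}^{n}q_r$, and the last mode is empty automatically by parity because the frame is oriented.
\item[(ii)] When $r$ modes remain, $q_r=(1+u^\top\Gamma_rv)/2$ depends only on the covariance matrix $\Gamma_r$ of the conditional state, which may be non-Gaussian.
\item[(iii)] Given $u$, the quantity $u^\top\Gamma_rv$ has the law of $aX$. Here $a=\|\Gamma_ru\|\le1$ and $X$ is a coordinate of a uniform point on $S^{2r-2}$.
\item[(iv)] By symmetry, $\mathbb E[-\ln(1+aX)]=-\tfrac12\mathbb E\ln(1-a^2X^2)$, which is increasing in $a$.
\item[(v)] The worst case $a=1$ makes $q_r$ a $\mathrm{Beta}(r-1,r-1)$ variable, so $\mathbb E[-\ln q_r]\le\psi(2r-2)-\psi(r-1)$. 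Summing over $r$ gives $\ln d+C_n$.
\end{itemize}
This handles mixed $\rho$ directly, with no convexity reduction needed. Because $a=1$ at every step for a pure Gaussian input, it also proves the extremality you conjectured.

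\textbf{Sign error in your asymptotics.} From $\psi(x)=\ln x-\frac{1}{2x}+\mathcal O(x^{-2})$ one gets $\psi(2k)-\psi(k)-\ln2=+\frac{1}{4k}+\mathcal O(k^{-2})$. With your minus sign, $C_n$ would tend to $-\infty$. That contradicts $C_n\ge0$, which your own lower bound forces; for example, $C_2=1-\ln2>0$.
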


\begin{proof}
  Irreducibility of the even spinor representation implies $\mathbb E_U\ketbra{B_U}=I_{\mathcal H_+}/d$, so $\mathbb E_Up_U=1/d$, and the lower bound follows from Jensen's inequality.

  For the upper bound, we construct a Haar-distributed oriented orthonormal frame of Majorana directions one pair at a time.
  Projection onto $\ket{B_U}$ is equivalent to selecting the empty outcome for each complex fermion mode associated with this frame.
  Only the frame is random here, and the outcomes are always chosen to be empty rather than sampled with their Born probabilities.
  After the previous empty outcomes have been selected, suppose that $r\geq2$ modes remain, with normalized conditional state $\rho_r$.
  This state has even parity but need not be Gaussian, and the remaining frame is conditionally Haar-distributed in the orthogonal complement of the directions already chosen.
  For the next orthonormal pair $u,v\in\mathbb R^{2r}$, we define $\gamma(u)=\sum_a u_a\gamma_a$ and the covariance matrix $(\Gamma_r)_{ab}=(i/2)\Tr\rho_r[\gamma_a,\gamma_b]$.
  In the convention $f_m=(\gamma_{2m-1}-i\gamma_{2m})/2$ of the main text, the conditional probability of an empty outcome is
  \begin{equation}
    q_r=\Tr\rho_r\frac{1+i\gamma(u)\gamma(v)}2
    =\frac{1+u^{\top}\Gamma_r v}{2}.
    \label{eq:sm-conditional-empty-probability}
  \end{equation}

  We now bound $\mathbb E[-\ln q_r]$ uniformly in the conditional state.
  The covariance matrix of any fermionic state satisfies $\|\Gamma_r\|_{\mathrm{op}}\leq1$, and since $\Gamma_r$ is antisymmetric, $\Gamma_ru$ is orthogonal to $u$.
  For fixed $u$, the vector $v$ is uniform on the unit sphere in $u^\perp$, so $u^{\top}\Gamma_r v=-(\Gamma_ru)^{\top}v$ has the same distribution as $aX$, where $a=\|\Gamma_r u\|\leq1$ and $X$ is one coordinate of a uniform point on $S^{2r-2}$.
  Symmetry of the distribution of $X$ gives
  \begin{equation}
    \mathbb E[-\ln(1+aX)]
    =-\frac12\mathbb E\ln(1-a^2X^2)
    \leq-\frac12\mathbb E\ln(1-X^2).
  \end{equation}
  Thus $\mathbb E[-\ln q_r]$ is largest at $a=1$, in which case $q_r=(1+X)/2$ has the beta distribution with parameters $(r-1,r-1)$.
  Averaging over the next pair, conditionally on all previously chosen directions, yields
  \begin{equation}
    \mathbb E[-\ln q_r\mid\text{previous directions}]
    \leq\psi(2r-2)-\psi(r-1).
  \end{equation}
  This bound also ensures logarithmic integrability when $\rho$ is not full rank.

  The probability of the full projection is the product of these conditional probabilities.
  Since the frame is oriented and every selected outcome is empty, even total parity forces the last mode to be empty with probability one.
  Therefore $p_U=\prod_{r=2}^{n}q_r$, and successive conditional averaging gives
  \begin{equation}
    \mathbb E_U[-\ln p_U]
    \leq\sum_{r=2}^{n}\bigl[\psi(2r-2)-\psi(r-1)\bigr]
    =\ln d+C_n,
  \end{equation}
  that is, $\mathbb E_U[-\ln(dp_U)]\leq C_n$.
  Finally, the duplication formula for the digamma function gives
  \begin{equation}
    \psi(2k)-\psi(k)-\ln2
    =\frac12\bigl[\psi(k+\tfrac12)-\psi(k)\bigr]
    =\frac1{4k}+\mathcal{O}(k^{-2}),
  \end{equation}
  which proves the asymptotic expression in Eq.~\eqref{eq:sm-gaussian-log-constant}.
\end{proof}

\begin{proof}[Proof of Proposition~\ref{prop:sm-zb-averaging}]
  The distribution of the couplings is invariant under $SO(N)$ rotations of the Majorana operators, which are implemented by parity-preserving Gaussian unitaries and leave $Z_+$ unchanged.
  Averaging the boundary state over such rotations therefore gives
  \begin{equation}
    \mathbb E_J Z_B=\frac{\mathbb E_J Z_+}{d},
    \qquad
    \mathbb E_J\ln p_B
    =\mathbb E_J\mathbb E_U\ln\bra{B_U}\rho_+(H)\ket{B_U}.
  \end{equation}
  Together with $Z_B=Z_+p_B$, the difference of logarithmic averages decomposes as
  \begin{equation}
    \ln\mathbb E_J Z_B-\mathbb E_J\ln Z_B
    =\ln\mathbb E_J Z_+-\mathbb E_J\ln Z_+
    +\mathbb E_J[-\ln(dp_B)].
  \end{equation}
  Equation~\eqref{eq:sm-parity-thermal-bound} bounds the first term, and Lemma~\ref{lem:sm-gaussian-overlap} applied to $\rho=\rho_+(H)$ bounds the second, which proves Eq.~\eqref{eq:sm-zb-averaging-bound}.
  Since the ensemble is also invariant under improper rotations, which exchange the two parity sectors, we further have $\mathbb E_JZ_+=\tfrac12\mathbb E_J\operatorname{Tr}e^{-\beta H}$, and hence $\mathbb E_JZ_B=2^{-N/2}\mathbb E_J\operatorname{Tr}e^{-\beta H}$.
\end{proof}

\subsection{Numerical checks of saddle dominance}

We ask whether a saddle outside the ansatz $G=gI_4+hA$ could have an action with a smaller real part than the two candidate saddles, with $h=0$ and $h\neq0$.
We solve the SD equations for the full $4\times4$ matrix kernels without imposing the ansatz, starting from the free propagator, mixtures of the two candidates, and real or complex perturbations of either candidate.
The perturbations include components outside the algebra generated by $I_4$ and $A$, and we do not project onto this algebra during the iteration.
We perform this search at six parameter points on both sides of the phase boundary.
The point $(\beta J,Jt)=(0,0.5425)$ lies near the crossing at $\beta=0$, and $(3.22,1.2292)$ lies near the point where the phase boundary reaches its largest value of $Jt$.
The pairs $(5.85,0)$, $(5.91,0)$ and $(4,1.16)$, $(4,1.195)$ lie on either side of the crossings at $t=0$ and at $\beta J=4$, respectively.
At each point, we solve the equations from 96 initial guesses on each of the grids $L=96,128$.
We retain a converged solution only if the largest absolute entry of the residual of the full SD equations is below $10^{-9}$.
Most converged solutions reproduce the action of one of the two candidates, and the rest have actions with larger real parts.
Among the latter is a third solution within the ansatz $G=gI_4+hA$.
These results support the assumption that the two candidates dominate in the region explored, although a finite search cannot exclude other stationary points or determine whether they contribute to the integral.

We also check that the transition is a crossing of two distinct solutions, as expected for a first-order transition, rather than a point where one solution branches off from the other.
If another stationary solution branched off from a candidate, an infinitesimal deformation toward it would be reproduced when the self-energy equation and Dyson's equation are applied in turn.
In the original copy basis, we write $\widehat F(u,v)=O(u)F(u,v)O(v)^{\top}$ and $\widehat\Sigma(u,v)=O(u)\Sigma(u,v)O(v)^{\top}$.
We denote the symmetric part of $\widehat F$ by $S=(\widehat F+\widehat F^{\top})/2$.
Linearizing the two equations around a solution gives, for an antisymmetric kernel $X$,
\begin{equation}
  (\mathcal KX)_{ab}(u,v)=3J^2\dot z(u)\dot z(v)\widehat G_{ab}(u,v)^2X_{ab}(u,v),
  \qquad \mathcal DX=\widehat G X\widehat G+4SXS,
  \label{eq:sm-syk-linear-response}
\end{equation}
where products in $\mathcal D$ compose kernels over contour coordinates and copy indices.
The expression for $\mathcal K$ follows by differentiating the self-energy equation.
For $\mathcal D$, varying Dyson's equation gives $\delta\widehat F=2\widehat F\,\delta\widehat\Sigma\,\widehat F$, and antisymmetry of $\delta\widehat\Sigma$ implies
\begin{equation}
  \delta\widehat G=2\bigl(\widehat F\,\delta\widehat\Sigma\,\widehat F+\widehat F^{\top}\delta\widehat\Sigma\,\widehat F^{\top}\bigr)=\widehat G\,\delta\widehat\Sigma\,\widehat G+4S\delta\widehat\Sigma S,
\end{equation}
where we used $\widehat F=S+\widehat G/2$, so that the cross terms between $S$ and $\widehat G$ cancel.
An infinitesimal deformation $X=\delta\widehat G$ toward another stationary solution must therefore satisfy $X=\mathcal D\mathcal KX$, so a new branch can split off only where $\mathcal D\mathcal K$ has the eigenvalue one.

Numerically, $\mathcal D\mathcal K$ has no eigenvalue equal to one for either candidate at the crossings.
We compute the eigenvalues with the largest real parts at the extrapolated crossing points $Jt=0.54249$ at $\beta=0$, $\beta J=5.8784$ at $t=0$, and $Jt=1.1779$ at $\beta J=4$, allowing general antisymmetric perturbations in the four-copy space.
For all of $L=128,160,256$, the largest real part remains below one.
Together with the finite jump in $h$, this indicates that the two candidates are distinct and locally isolated at the crossing.

\subsection{Self-averaging and annealed averages at finite \texorpdfstring{$N$}{N}}

Using exact diagonalization at finite $N$, we test whether the MRE density is self-averaging and whether the annealed and quenched averages agree, without relying on the saddle-point approximation.
For each independent realization, we diagonalize the Hamiltonian in the even-parity sector, prepare the normalized KM state, and calculate the purity of its two-copy convolution.
At $t=0$, we sample $N=12,14,\ldots,26$ at $\beta J=4,6,8$, using 32 realizations for each $N\leq20$ and 16, 12, and 6 realizations for $N=22,24,26$, respectively.
At $\beta J=4$, we also sample $N=12,16,20,24,26$ at $Jt=0,0.9,1.1779,1.4,1.8$, where $Jt=1.1779$ is near the crossing, using the same numbers of realizations for each $N$ as at $t=0$.
We record both $\ln Z_B$ and $\ln Z_R=\ln\nu_2+4\ln Z_B$, so that averages of logarithms and logarithms of averages can be evaluated from the same samples.

To test self-averaging, we estimate the relative variance of $M_2/N$ by the unbiased sample variance divided by the squared sample mean.
Near the transition, the relative variance decreases overall as $N$ increases, which is consistent with self-averaging, although fewer realizations are available at the largest sizes.

The agreement of the annealed and quenched averages does not follow from self-averaging, because rare realizations can dominate an exponential average such as $\mathbb E_JZ_R$.
For the MRE density, the difference between the two averages is
\begin{equation}
  \frac{\mathbb E_JM_2}{N}-\frac{-\ln\mathbb E_JZ_R+4\ln\mathbb E_JZ_B}{N}
  =\frac{\ln\mathbb E_JZ_R-\mathbb E_J\ln Z_R-4(\ln\mathbb E_JZ_B-\mathbb E_J\ln Z_B)}{N}.
  \label{eq:sm-mre-averaging-difference}
\end{equation}
Proposition~\ref{prop:sm-zb-averaging} shows that the contribution of $Z_B$ to Eq.~\eqref{eq:sm-mre-averaging-difference} vanishes as $N\to\infty$ at fixed $\beta J$, so the difference is controlled by $(\ln\mathbb E_JZ_R-\mathbb E_J\ln Z_R)/N$.
We estimate the difference in Eq.~\eqref{eq:sm-mre-averaging-difference} from the sampled realizations, evaluating the exponential averages in logarithmic form to avoid numerical overflow.
Near the transition, the estimated difference shows an overall decrease as $N$ increases, which supports the condition $\ln\mathbb E_JZ_R-\mathbb E_J\ln Z_R=o(N)$ for the parameters examined.

\end{document}